\newif\iflong
\newif\ifshort
\newcommand\blfootnote[1]{%
  \begingroup
  \renewcommand\thefootnote{}\footnote{#1}%
  \addtocounter{footnote}{-1}%
  \endgroup
}
\tikzstyle{vertex}=[circle, draw, inner sep=0pt, minimum size=4pt,outer sep = 1pt]
\newcommand{\vertex}{\node[vertex,fill]}
\newcommand{\cut}{\operatorname{cut}}
\newcommand{\cc}[1]{{\mbox{\textnormal{\textsf{#1}}}}\xspace} 
\newcommand{\NP}{\cc{NP}}
\newcommand{\FPT}{\cc{FPT}}
\newcommand{\XP}{\cc{XP}}
\newcommand{\W}[1]{\ensuremath {\cc{W}[#1]}}
\newcommand{\Nat}{\mathbb{N}}
\newcommand{\bigoh}{\mathcal{O}}
\newcommand{\XXX}{\mathcal{X}}
\newtheorem{observation}{Observation}
\title{Edge-Cut Width: An Algorithmically Driven Analogue of Treewidth Based on Edge Cuts}
\author{Cornelius Brand \and Esra Ceylan \and Robert Ganian \and Christian Hatschka \and Viktoriia Korchemna}
\institute{Algorithms and Complexity Group, TU Wien, Vienna, Austria}
\newcommand{\width}{edge-cut width}
\newcommand{\ecw}{\widthshort}
\newcommand{\vcn}{\operatorname{vcn}}
\newcommand{\fen}{\operatorname{fen}}
\newcommand{\widthshort}{\operatorname{ecw}}
\newcommand {\extwidthshort}{\widthshort^\ast}
\newcommand{\EDP}{\textsc{EDP}}
\newcommand{\PP}{P}
\newcommand{\QQ}{Q}
\newcommand{\adh}{\operatorname{adh}}
\newcommand{\tcw}{\operatorname{tcw}}
\newcommand{\tor}{\operatorname{tor}}
\newcommand{\tw}{\operatorname{tw}}
\newcommand{\degtw}{\operatorname{degtw}}
\newcommand{\loc}{\operatorname{loc}}
\newcommand{\fut}{\texttt{future}}
\newcommand{\past}{\texttt{past}}
\newcommand{\cost}{\texttt{cost}}
\newcommand{\col}{\texttt{col}}
\newcommand{\outgoing}{\texttt{Outgoing}}
\newcommand{\donate}{\texttt{Donate}}
\newcommand{\suc}{\emph{succ}}
\newcommand{\pbDef}[3]{%
\noindent
\begin{center}
\begin{boxedminipage}{0.98 \columnwidth}
#1\\[5pt]
\begin{tabular}{l p{0.85 \columnwidth}}
Input: & #2\\
Question: & #3
\end{tabular}
\end{boxedminipage}
\end{center}
}
\begin{document}
\maketitle
\begin{abstract}
Decompositional parameters such as treewidth are commonly used to obtain fixed-parameter algorithms for \NP-hard graph problems. For problems that are \W{1}-hard parameterized by treewidth, a natural alternative would be to use a suitable analogue of treewidth that is based on edge cuts instead of vertex separators. While tree-cut width has been coined as such an analogue of treewidth for edge cuts, its algorithmic applications have often led to disappointing results: out of twelve problems where one would hope for fixed-parameter tractability parameterized by an edge-cut based analogue to treewidth, eight were shown to be \W{1}-hard parameterized by tree-cut width. 

As our main contribution, we develop an edge-cut based analogue to treewidth called \width. Edge-cut width is, intuitively, based on measuring the density of cycles passing through a spanning tree of the graph. Its benefits include not only a comparatively simple definition, but mainly that it has interesting algorithmic properties: it can be computed by a fixed-parameter algorithm, and it yields fixed-parameter algorithms for all the aforementioned problems where tree-cut width failed to do so.
\blfootnote{Cornelius Brand, Robert Ganian and Viktoriia Korchemna gratefully acknowledge support from the Austria Science Foundation (FWF, Project Y1329).}
\end{abstract}

\section{Introduction}
While the majority of computational problems on graphs are intractable, in most cases it is possible to exploit the structure of the input graphs to circumvent this intractability. This basic fact has led to the extensive study of a broad hierarchy of decompositional graph parameters (see, e.g., Figure~1 in~\cite{BodlaenderJK13}), where for individual problems of interest the aim is to pinpoint which parameters can be used to develop fixed-parameter algorithms for the problem. Treewidth~\cite{RobertsonSeymour86} is by far the most prominent parameter in the hierarchy, and it is known that many problems of interest are fixed-parameter tractable when parameterized by treewidth; some of these problem can even be solved efficiently on more general parameters such as rank-width~\cite{Oum08,GanianHlineny10} or other decompositional parameters above treewidth in the hierarchy~\cite{Bonnet0TW20}. However, in this article we will primarily be interested in problems that lie on the other side of this spectrum: those which remain intractable when parameterized by treewidth.

Aside from non-decompositional parameters\footnote{We view a parameter as decompositional if it is tied to a well-defined graph decomposition; all decompositional parameters are closed under the disjoint union operation of graphs.} such as the vertex cover number~\cite{FellowsLokshtanovMisraRS08,Ganian15} or feedback edge number~\cite{BentertHHKN20,GanianO21,GolovachKKL22}, 
the most commonly applied parameters for problems which are not fixed-parameter tractable with respect to treewidth are tied to the existence of small vertex separators. One example of such a parameter is treedepth~\cite{NesetrilOssonademendez12}, which has by now found numerous applications in diverse areas of computer science~\cite{GutinJW16,GanianO18,NederlofPSW20}. 
%These include pathwidth and treedepth~\cite{}, and the vertex cover number (i.e., the minimum size of a vertex cover in the graph), which have by now found numerous applications in diverse areas of computer science~\cite{}.
%ILP, Mixed Postman, DepQBF, graph layouts, book embedding papers, 
%Parameterized complexity of coloring problems: Treewidth versus vertex cover
%The graph motif problem parameterized by the structure of the input graph
%Parameterized Complexity of Envy-Free Resource Allocation in Social Networks
%
%These include treedepth~\cite{}, which has been used for problems as varied as \textsc{Integer Linear Programming}~\cite{}, \textsc{Mixed Chinese Postman}~\cite{} and \textsc{Dependency Quantified Boolean Formulae}~\cite{}, and the vertex cover number (i.e., the minimum size of a vertex cover in the graph), whose applications include several graph layout problems~\cite{,},
%%include book embedding stuff
% \textsc{Equitable Coloring}~\cite{}, 
% %Parameterized complexity of coloring problems: Treewidth versus vertex cover
% and \textsc{Graph Motif}~\cite{}.
%%  The graph motif problem parameterized by the structure of the input graph
 %
An alternative approach is to use a decompositional parameter that is inherently tied to edge-cuts---in particular, tree-cut width~\cite{MarxWollan14,Wollan15}. 

Tree-cut width was discovered by Wollan, who described it as a variation of tree decompositions based on edge cuts instead of vertex separators~\cite{Wollan15}.
%Wollan14
But while it is true that ``tree-cut decompositions share many of the natural properties of tree decompositions''~\cite{MarxWollan14}, from the perspective of algorithmic design tree-cut width seems to behave differently than an edge-cut based alternative to treewidth. To illustrate this, we note that tree-cut width is a parameter that lies between treewidth and treewidth plus maximum degree (which may be seen as a ``heavy-handed'' parameterization that enforces small edge cuts) in the parameter hierarchy~\cite{Ganian0S15,KimOPST18}. There are numerous problems which are \W{1}-hard (and sometimes even \NP-hard) w.r.t.\ treewidth but fixed-parameter tractable w.r.t.\ the latter parameterization, and the aim would be to have an edge-cut based parameter that can lift this fixed-parameter tractability towards graphs of unbounded degree. 

Unfortunately, out of twelve problems with these properties where a tree-cut width parameterization has been pursued so far, only four are fixed-parameter tractable~\cite{Ganian0S15,GanianKO21} while eight turn out to be \W{1}-hard~\cite{Ganian0S15,GozupekOPSS17,BredereckHKN19,GanianO21,GanianKorchemna21}. 
%bounded degree deletion ... fpt
%EDP ... w-hard
%first TCW-algo paper... 3 and 3
%BNSL... w-hard
%polytree learning...w-hard
%Parameterized complexity of the MINCCA problem on graphs of bounded decomposability... w-hard
%Parameterized Complexity of Stable Roommates with Ties and Incomplete Lists Through the Lens of Graph Parameters... w-hard
The most appalling example of the latter case is the well-established \textsc{Edge Disjoint Paths} (\EDP) problem: \textsc{Vertex Disjoint Paths} is a classical example of a problem that is \FPT\ parameterized by treewidth, 
%Robertson Seymour probably
and one should by all means expect a similar outcome for \EDP\ parameterized by the analogue of treewidth based on edge cuts~\cite{GanianOR21,GanianO21}. But if \EDP\ is \W{1}-hard parameterized by tree-cut width, what is the algorithmic analogue of treewidth for edge cuts?
Here, we attempt to answer to this question through the notion of \width. 

%should the algorithmic analogue of treewidth for edge cuts be? 

\smallskip
\noindent \textbf{Contribution.} \quad
Edge-cut width is an edge-cut based decompositional parameter which has a surprisingly streamlined definition: instead of specialized decompositions such as those employed by treewidth, clique-width or tree-cut width, the ``decompositions'' for edge-cut width are merely spanning trees (or, in case of disconnected graphs, maximum spanning forests). To define \width\ of a spanning tree $T$, we observe that for each edge in $G-T$ there is a unique path in $T$ connecting its endpoints, and the \width\ of $T$ is merely the maximum number of such paths that pass through any particular vertex in $T$; as usual, the \width\ of $G$ is then the minimum width of a spanning tree (i.e., decomposition).

After introducing \width, establishing some basic properties of the parameter and providing an in-depth comparison to tree-cut width, we show that the parameter has surprisingly useful algorithmic properties. As our first task, we focus on the problem of computing \width\ along with a suitable decomposition.
%After establishing 
%
%The edge-cut width of a spanning tree $T$ is merely the maximum number of paths in $T$ connecting the endpoints of edges in $G-T$ that pass through any particular vertex of $T$.
%
%Edge-cut width is an edge-cut based decompositional parameter which shares many properties with tree-cut width and admits three asymptotically-equivalent characterizations; in fact, one of these is obtained by simply taking the definition of tree-cut width and altering a single constant from $3$ to $2$. From this perspective, \width\ can be seen as a parameter that is at least as natural as tree-cut width. However, the base definition of \width\ we provide here is significantly more streamlined and less bulky than tree-cut width, wherein its decomposition takes the arguably most fundamental form one could hope for: simply a spanning forest of the graph.
%After establishing some basic properties of \width, we turn to the problem of computing \width\ along with a suitable decomposition. 
This is crucial, since we will generally need to compute an \width\ decomposition before we can use the parameter to solve problems of interest. As our first algorithmic result, we leverage the connection of \width\ to spanning trees of the graph to obtain an explicit fixed-parameter algorithm for computing \width\ decompositions.
% of width at most $k$ on $n$-vertex graphs. 
This compares favorably to tree-cut width, for which only an explicit $2$-approximation fixed-parameter algorithm~\cite{KimOPST18} and a non-constructive fixed-parameter algorithm~\cite{GiannopoulouKRT19} are known.

Finally, we turn to the algorithmic applications of \width. Recall that among the twelve problems where a parameterization by tree-cut width had been pursued,
% (and where one would hope to extend fixed-parameter tractability from treewidth plus maximum degree to a suitable parameter based on edge cuts), 
eight were shown to be \W{1}-hard parameterized by tree-cut width: \textsc{List Coloring}~\cite{Ganian0S15}, \textsc{Precoloring Extension}~\cite{Ganian0S15}, \textsc{Boolean Constraint Satisfaction}~\cite{Ganian0S15}, \textsc{Edge Disjoint Paths}~\cite{GanianO21}, \textsc{Bayesian Network Structure Learning}~\cite{GanianKorchemna21}, \textsc{Polytree Learning}~\cite{GanianKorchemna21}, \textsc{Minimum Changeover Cost Arborescence}~\cite{GozupekOPSS17}, and \textsc{Maximum Stable Roommates with Ties and Incomplete Lists}~\cite{BredereckHKN19}. Here, we follow up on previous work by showing that \emph{all} of these problems are fixed-parameter tractable when parameterized by \width. We obtain our algorithms using a new dynamic programming framework for \width, which can also be adapted for other problems of interest.
%Our algorithms that exploit \width\ are obtained using a new algorithmic framework for \width\ based on dynamic programming, which can be adapted for other problems of interest.

\smallskip
\noindent \textbf{Related Work.}\quad
The origins of \width\ lie in the very recent work of Ganian and Korchemna on learning polytrees and Bayesian networks~\cite{GanianKorchemna21}, who discovered an equivalent parameter when attempting to lift the fixed-parameter tractability of these problems to a less restrictive parameter than the feedback edge number\footnote{The authors originally used the name ``local feedback edge number''.}. That same work also showed that computing edge-cut width can be expressed in Monadic Second Order Logic which implies fixed-parameter tractability, but obtaining an explicit fixed-parameter algorithm for computing optimal decompositions was left as an open question.

As far as the authors are aware, there are only four problems for which it is known that fixed-parameter tractability can be lifted from the parameterization by ``maximum degree plus treewidth'' to tree-cut width. These are \textsc{Capacitated Vertex Cover}~\cite{Ganian0S15}, \textsc{Capacitated Dominating Set}~\cite{Ganian0S15}, \textsc{Imbalance}~\cite{Ganian0S15} and \textsc{Bounded Degree Vertex Deletion}~\cite{GanianKO21}. Additionally, Gozupek et al.~\cite{GozupekOPSS17} showed that the \textsc{Minimum Changeover Cost Arborescence} problem is fixed-parameter tractable when parameterized by a special, restricted version of tree-cut width where one essentially requires the so-called \emph{torsos} to be stars.

\ifshort
\smallskip
\noindent \emph{Statements where proofs or more details are provided in the appendix are marked with $\star$}.
\fi

\section{Preliminaries}
\ifshort
We use standard terminology for graph theory, see for
instance~\cite{Diestel12}, and assume basic familiarity with the parameterized complexity paradigm including, in particular, the notions of \emph{fixed-parameter tractability} and \W{1}-\emph{hardness}~\cite{DowneyFellows13,CyganFKLMPPS15}. Let $\Nat$ denote the set of natural numbers including zero.
%An edge set $X$ is a
%\emph{feedback edge set} if $G-X$ is a forest. The \emph{feedback edge number} of a graph $G$, denoted by $\fes(G)$, is the smallest
%integer $k$ such that $G$ has a feedback edge set of size $k$.
We use $[i]$ to
denote the set $\{0,1,\dots,i\}$. 

Given two graph parameters $\alpha,\beta: G\rightarrow \Nat$, we say that $\alpha$ \emph{dominates} $\beta$ if there exists a function $p$ such that for each graph $G$, $\alpha(G)\leq p(\beta(G))$. For a vertex set $Y$, we use $N(Y)$ to denote the set of all vertices that are outside of $Y$ and have a neighbor in $Y$.

%\todo{R: Vika noticed this should be alpha leq p(beta)...}
%
%Two graph parameters $\alpha,\beta: G\mapsto n$, $n\in \Nat$, are \emph{asymptotically equivalent} if there exist functions $p,q$ such that for each graph $G$, $\beta(G)\leq p(\alpha(G))$ and $\alpha(G)\leq q(\beta(G))$.
%we say that $\alpha$ is \emph{more restrictive than} $\beta$ if there exists a function $\theta$ such that for each graph $G$, $\beta(G)\leq \theta(\alpha(G))$. If $\alpha$ is more restrictive than $\beta$ and $\beta$ is more restrictive than $\alpha$, then 
%
% and $\beta$ are \emph{asymptotically equivalent} (sometimes called \emph{functionally equivalent}) 
\fi

\iflong
%\subsection{Basic Notation} 
We use standard terminology for graph theory, see for
instance~\cite{Diestel12}.
Given a graph $G$, we let $V(G)$ denote its vertex set and $E(G)$ its
edge set.
The \emph{(open) neighborhood} of a vertex $x \in V(G)$ is the set $\{y\in V(G):xy\in E(G)\}$ and is denoted by $N_G(x)$. For a vertex subset $X$, the neighborhood of $X$ is defined as $\bigcup_{x\in X} N_G(x) \setminus X$ and denoted by $N_G(X)$; we drop the subscript if the graph is clear from the context. \emph{Contracting} an edge $\{a,b\}$ is the operation of replacing vertices $a,b$ by a new vertex whose neighborhood is $(N(a)\cup N(b))\setminus \{a,b\}$.
For a vertex set $A$ (or edge set $B$), we use $G-A$ ($G-B$) to denote the graph obtained from
$G$ by deleting all vertices in $A$ (edges in $B$), and we use $G[A]$ to denote the
\emph{subgraph induced on} $A$, i.e., $G- (V(G)\setminus A)$. 

A \emph{forest} is a graph without cycles, and an edge set $X$ is a
\emph{feedback edge set} if $G-X$ is a forest. 
%The \emph{feedback edge number} of a graph $G$, denoted by $\fes(G)$, is the smallest integer $k$ such that $G$ has a feedback edge set of size $k$.
We use $[i]$ to
denote the set $\{0,1,\dots,i\}$. 

Given two graph parameters $\alpha,\beta: G\rightarrow \Nat$, we say that $\alpha$ \emph{dominates} $\beta$ if there exists a function $p$ such that for each graph $G$, $\beta(G)\leq p(\alpha(G))$. 

%We assume that readers are familiar with basic notions in the area of parameterized complexity~\cite{DowneyFellows13,CyganFKLMPPS15}, such as the classes \FPT and $\W{1}$ and the notion of \emph{kernelization}.

\subsection{Parameterized Complexity}
%We refer to the standard references for an in-depth introduction to \emph{parameterized complexity}~\cite{DowneyFellows13,CyganFKLMPPS15}. 

	A \emph{parameterized problem} $\PP$ is a subset of $\Sigma^* \times \Nat$ for some finite alphabet $\Sigma$. Let $L\subseteq \Sigma^*$ be a classical decision problem for a finite alphabet, and let $p$ be a non-negative integer-valued function defined on $\Sigma^*$. Then $L$ \emph{parameterized by} $p$ denotes the parameterized problem $\{(x,p(x)) | x\in L \}$ where $x\in \Sigma^*$.  For a problem instance $(x,k) \in \Sigma^* \times \Nat$ we call $x$ the main part and $k$ the parameter.  
	A parameterized problem~$\PP$ is \emph{fixed-parameter   tractable} (FPT in short) if a given instance $(x, k)$ can be solved in time  $f(k) \cdot |x|^{\bigoh(1)}$ where $f$ is an arbitrary computable function of $k$. We call algorithms running in this time \emph{fixed-parameter algorithms}. 
	%Problems that are shown to be \W{1}-\emph{hard} are not believed to be fixed-parameter tractable, as this would violate the Exponential Time Hypothesis.
	
	Parameterized complexity classes are defined with respect to {\em fpt-reducibility}. A parameterized problem $P$ is {\em fpt-reducible} to $Q$ if in time $f(k)\cdot |x|^{\bigoh(1)}$, one can transform an instance $(x,k)$ of $\PP$ into an instance $(x',k')$ of $\QQ$ such that $(x,k)\in \PP$ if and only if $(x',k')\in \QQ$, and $k'\leq g(k)$, where $f$ and $g$ are computable functions depending only on $k$. 
	Owing to the definition, if $\PP$ fpt-reduces to $\QQ$
	and $\QQ$ is fixed-parameter tractable then $P$ is fixed-parameter
	tractable as well. 
	Central to parameterized complexity is the following hierarchy of complexity classes, defined by the closure of canonical problems under fpt-reductions:
	\[\FPT \subseteq \W{1} \subseteq \W{2} \subseteq \cdots \subseteq \XP.\] All inclusions are believed to be strict. In particular, $\FPT\neq \W{1}$ under the Exponential Time Hypothesis. 
	
	The class $\W{1}$ is the analog of $\NP$ in parameterized complexity. 
	A major goal in parameterized complexity is to distinguish between parameterized problems which are in $\FPT$ and those which are $\W{1}$-hard, i.e., those to which every problem in $\W{1}$ is fpt-reducible. There are many problems shown to be complete for $\W{1}$, or equivalently $\W{1}$-complete, including the {\sc Multi-Colored Clique (MCC)} problem~\cite{DowneyFellows13}. We refer the reader to the respective monographs~\cite{DowneyFellows13,CyganFKLMPPS15} for an in-depth
	introduction to parameterized complexity.
	\fi

\iflong
\subsection{Treewidth}
\fi
\ifshort
\smallskip 
\noindent \textbf{Treewidth.}\quad
\fi
\emph{Treewidth}~\cite{RobertsonSeymour86} is a fundamental graph parameter that has found a multitude of algorithmic applications throughout computer science. 
\begin{definition}
A \emph{tree decomposition} of a graph $G$ is a pair $(T,\{\beta_t\}_{t \in V(T)})$, where $T$ is a tree, and each node $t \in V(T)$ is associated with a \emph{bag} $\beta_t \subseteq V(G)$, satisfying the following conditions:
\begin{enumerate}
	\item Every vertex of $G$ appears in some bag of $T$.
	\item Every edge of $G$ is contained as a subset in some bag of $T$.
	\item For every vertex $v \in V(G)$, the set of nodes $t \in V(T)$ such that $v \in \beta_t$ holds is connected in $T$.
\end{enumerate}
The \emph{width} of a tree decomposition is defined as $\max_t |\beta_t| - 1$,
and the \emph{treewidth} $\tw(G)$ of $G$ is defined as the minimum width of any of its tree decompositions.
\end{definition}
\iflong
For our algorithms, it will be useful to make some additional assumptions on the tree decomposition. 
\begin{definition}
%Let $G$ be a graph.
A tree decomposition $(T,\{\beta_t\}_{t\in V(T)})$ is called \emph{nice} if is satisfies the following:
\begin{enumerate}
\item $T$ has a distinguished \emph{root} $r \in V(T)$ with $\beta_r = \emptyset$.
\item Every node of $T$ has at most two children.
\item For every node $t$ of $T$ with two children $s,u$ it holds that $\beta_t = \beta_{s} = \beta_u$. These nodes are called \emph{join-nodes}.
\item For every node $t$ of $T$ with exactly one child $s$, there is a vertex $v \in V(G)$ such that either $\beta_t - \beta_u = \{v\}$, in which case we call $t$ an \emph{introduce-node}, or $\beta_u - \beta_t = \{v\}$, in which case we call $t$ a \emph{forget-node}. 
\iflong We call $v$ the vertex \emph{introduced} (resp. \emph{forgotten}) at $t$.\fi
\item Every node that has no children is called a \emph{leaf-node} of $T$ and $\beta_t = \emptyset$ must hold.
\end{enumerate}
\end{definition}
It is known that every tree decomposition can be converted into a nice one of the same width in linear time.
\fi

\iflong
\subsection{Tree-cut Width}
\label{sub:tcw}
\fi
\ifshort
\smallskip
\noindent \textbf{Tree-cut Width.}\quad
\fi
The notion of tree-cut decompositions was introduced by Wollan~\cite{Wollan15}, see also~\cite{MarxWollan14}.
%In this section we introduce a streamlined, equivalent definition.
A family of subsets $X_1, \ldots, X_{k}$ of $X$ is a {\em near-partition} of $X$ if they are pairwise disjoint and $\bigcup_{i=1}^{k} X_i=X$, allowing the possibility of $X_i=\emptyset$.  

\begin{definition}
	A {\em tree-cut decomposition} of $G$ is a pair $(T,\mathcal{X})$ which consists of a rooted tree $T$ and a near-partition $\mathcal{X}=\{X_t\subseteq V(G): t\in V(T)\}$ of $V(G)$. A set in the family~$\mathcal{X}$ is called a {\em bag} of the tree-cut decomposition. 
\end{definition}

%
%For a node $t$ of $T$ other than the root $r$, let $t'$ be the parent of $t$. Note that deleting the edge $\{t,t'\}$ from $T$ disconnects $T$ into two subtrees $T_1$ and $T_2$, and this gives rise to a partitioning of the vertices of $G$ into two parts: those whose vertices lie in bags in $T_1$ and those whose vertices lie in bags in $T_2$. We define the {\em adhesion} of $t$ ($\adh(t)$) to be the number of edges with one endpoint in each of these two parts.

For any node $t$ of $T$ other than the root $r$, let $e(t)=ut$ be the unique edge incident to $t$ on the path to $r$. Let $T_u$ and $T_t$ be the two connected components in $T-e(t)$ which contain $u$ and $t$, respectively. Note that $(\bigcup_{q\in T_u} X_q, \bigcup_{q\in T_t} X_q)$ is a near-partition of $V(G)$, and we use $E_t$ to denote the set of edges with one endpoint in each part. We define the {\em adhesion} of $t$ ($\adh(t)$) as $|E_t|$;
we explicitly set $\adh(r)=0$ and $E(r)=\emptyset$.
% if $t$ is the root, we set $\adh(t)=0$ and $E(t)=\emptyset$. 

The {\em torso} of a tree-cut decomposition $(T,\mathcal{X})$ at a node $t$, written as $H_t$, is the graph obtained from $G$ as follows. If $T$ consists of a single node $t$, then the torso of $(T,\mathcal{X})$ at $t$ is $G$. Otherwise, let $T_1, \ldots , T_{\ell}$ be the connected components of $T-t$. For each $i=1,\ldots , \ell$, the vertex set $Z_i\subseteq V(G)$ is defined as the set $\bigcup_{b\in V(T_i)}X_b$. The torso $H_t$ at $t$ is obtained from $G$ by {\em consolidating} each vertex set $Z_i$ into a single vertex $z_i$ (this is also called \emph{shrinking} in the literature). Here, the operation of consolidating a vertex set~$Z$ into $z$ is to substitute $Z$ by $z$ in $G$, and for each edge $e$ between $Z$ and $v\in V(G)\setminus Z$, adding an edge $zv$ in the new graph. We note that this may create parallel edges.

The operation of {\em suppressing} (also called \emph{dissolving} in the literature) a vertex $v$ of degree at most $2$ consists of deleting~$v$, and when the degree is two, adding an edge between the neighbors of $v$. Given a connected graph $G$ and  $X\subseteq V(G)$, let the {\em 3-center} of $(G,X)$ be the unique graph obtained from $G$ by exhaustively suppressing vertices in $V(G) \setminus X$ of degree at most two. Finally, for a node $t$ of $T$, we denote by $\tilde{H}_t$ the 3-center of $(H_t,X_t)$, where $H_t$ is the torso of $(T,\mathcal{X})$ at $t$. %\ejk{the notation $\tor(t)$ is misleading since what we measure is the 3-center of torso, not the torso itself. I propose to use the notation $|\tilde{H}_t|$ instead.} 
Let the \emph{torso-size} $\tor(t)$ denote $|\tilde{H}_t|$. 
%Given a tree-cut decomposition $(T,\mathcal{X})$ of $G$ and node $t$, let $R_t$ contain all edges $e$ incident to $t$ such that $|E_t|>2$. Then the torso-size of $t$ ($\tor_T(t)$ or $\tor(t)$ in brief) is defined as $|X_t|+|R_t|$, and $width(t)$ is $\max\{\adh(t), \tor(t)\}$. 

\begin{definition}
	The width of a tree-cut decomposition $(T,\mathcal{X})$ of $G$ is $\max_{t\in V(T)}\{ \adh(t),$ $\tor(t) \}$. The tree-cut width of $G$, or $\tcw(G)$ in short, is the minimum width of $(T,\mathcal{X})$ over all tree-cut decompositions $(T,\mathcal{X})$ of $G$.
\end{definition}

\iflong

Without loss of generality, we shall assume that $X_r=\emptyset$.
We conclude this subsection with some notation related to tree-cut decompositions. 
%For $t\in V(T)\setminus \{r\}$, we let $p_T(t)$ (or $p(t)$ in brief) denote the parent of $t$ in $T$. 
%For two distinct nodes $t,t'\in V(T)$, we say that $t$ and $t'$ are {\em siblings} if $p(t)=p(t')$.
Given a tree node $t$, let $T_t$ be the subtree of $T$ rooted at $t$. Let $Y_t=\bigcup_{b\in V(T_t)} X_b$, and let $G_t$  denote the induced subgraph $G[Y_t]$. 
%The \emph{depth} of a node $t$ in $T$ is the distance of $t$ from the root $r$.
%The set $\partial_t=\SB v\in Y_t\SM \exists w\in V(G)-Y_t: vw\in E(G)\SE$ are called the {\em border} at node $t$. 
A node $t\neq r$ in a rooted tree-cut decomposition is \emph{thin} if $\adh(t)\leq 2$ and \emph{bold} otherwise.

%While it is not known how to compute optimal tree-cut decompositions efficiently, there exists a fixed-parameter 2-approximation algorithm for the problem~\cite{KimOPST15}.

%\begin{theorem}[\hspace{-0.001cm}\cite{KimOPST15}]
%	\label{thm:computetcdec}
%	There exists an algorithm that takes as input an $n$-vertex graph $G$ and integer $k$, runs in time $2^{\bigoh(k^2 \log k)} n^2$, and either outputs a tree-cut decomposition of $G$ of width at most $2k$ or correctly reports that $\tcw(G)> k$.
%\end{theorem}
%

	A tree-cut decomposition $(T,\mathcal{X})$ is \emph{nice} if it satisfies the following condition for every thin node $t\in V(T)$: $N(Y_t)\cap (\bigcup_{b\text{ is a sibling of }t}Y_b)=\emptyset$.
	The intuition behind nice tree-cut decompositions is that we restrict the neighborhood of thin nodes in a way which facilitates dynamic programming. Every tree-cut decomposition can be transformed into a nice tree-cut decomposition of the same width in cubic time~\cite{Ganian0S15}.
	
	For a node $t$, we let $B_t=\{ b\text{ is a child of }t | |N(Y_b)|\leq 2\wedge N(Y_b)\subseteq X_t \}$ denote the set of thin children of $t$ whose neighborhood is a subset of $X_t$, and we let $A_t= \{a\text{ is a child of }t | a\not \in B_t \}$ be the set of all other children of $t$.
	Then $|A_t|\leq 2k+1$ for every node $t$ in a nice tree-cut decomposition~\cite{Ganian0S15}.
	\fi

%	A tree-cut decomposition $(T,\mathcal{X})$ is \emph{nice} if it satisfies the following condition for every thin node $t\in V(T)$: $N(Y_t)\cap (\bigcup_{b\text{ is a sibling of }t}Y_b)=\emptyset$.
%	The intuition behind nice tree-cut decompositions is that we restrict the neighborhood of thin nodes in a way which facilitates dynamic programming. 
%	
%	\begin{lemma}[\hspace{-0.001cm}\cite{Ganian0S15}]
%		\label{lem:nicetcdec}
%		There exists a cubic-time algorithm which transforms any rooted tree-cut decomposition $(T,\mathcal{X})$ of $G$ into a nice tree-cut decomposition of the same graph, without increasing its width or number of nodes.
%	\end{lemma}
	
%	For a node $t$, we let $B_t=\{ b\text{ is a child of }t | |N(Y_b)|\leq 2\wedge N(Y_b)\subseteq X_t \}$ denote the set of thin children of $t$ whose neighborhood is a subset of $X_t$, and we let $A_t= \{ a\text{ is a child of }t | a\not \in B_t \}$ be the set of all other children of $t$.
%	The following property of nice tree-cut decompositions will be crucial for our algorithm.
%	
%	\begin{lemma}[\hspace{-0.001cm}\cite{Ganian0S15}]
%		\label{lem:Asmall}
%		Let $t$ be a node in a nice tree-cut decomposition of width $k$. Then $|A_t|\leq 2k+1$.
%	\end{lemma}

We refer to previous work~\cite{MarxWollan14,Wollan15,KimOPST18,Ganian0S15} for a more detailed comparison of tree-cut width to other parameters. Here, we mention only that tree-cut width is dominated by treewidth and dominates treewidth plus maximum degree, which we denote $\degtw(G)$.
	\begin{lemma}[\hspace{-0.001cm}\cite{Ganian0S15,MarxWollan14,Wollan15}]
		\label{lem:comparison}
		For every graph $G$, $\tw(G)\leq 2\tcw(G)^2+3\tcw(G)$ and $\tcw(G)\leq 4\degtw(G)^2$.
	\end{lemma}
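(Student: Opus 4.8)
The plan is to prove the two inequalities separately, each by an explicit transformation between the two kinds of decompositions. Two observations about any tree-cut decomposition $(T,\mathcal{X})$ will be used in both parts. First, since the $3$-center $\tilde H_t$ retains every vertex of $X_t$, we always have $|X_t|\le\tor(t)$. Second, a child $c$ of a node $t$ is represented in the torso $H_t$ by the consolidated vertex $z_c$, whose degree equals $\adh(c)=|E_c|$; hence $z_c$ survives the suppression in the $3$-center exactly when $\adh(c)\ge 3$, so a node $t$ has at most $\tor(t)$ children of adhesion at least $3$. Call such children \emph{heavy} and the remaining children (of adhesion at most $2$) \emph{thin}.

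For the inequality $\tw(G)\le 2\tcw(G)^2+3\tcw(G)$ I would fix an optimal tree-cut decomposition of width $k=\tcw(G)$ and build a tree decomposition from it. Here $|X_t|\le\tor(t)\le k$ and every adhesion is at most $k$. The natural attempt keeps the tree and places into $\beta_t$ the bag $X_t$, the endpoints of $E_t$, and the relevant adhesion endpoints of the children of $t$. For heavy children this is affordable: there are at most $k$ of them, each adhesion contributes at most $k$ endpoints on the side of $t$, giving $\bigoh(k^2)$ vertices and accounting for the quadratic term. The delicate point is the thin children: a long chain of adhesion-$2$ children attached at a single node would force unboundedly many endpoints into one bag. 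This is precisely the configuration that the $3$-center collapses by suppressing degree-$2$ vertices, and the remedy is to mirror that collapse in the decomposition, i.e.\ instead of hanging all thin children at $t$ I would reroute maximal chains of adhesion-$\le 2$ children into \emph{paths} of the tree decomposition, each bag carrying only $X_c$ together with its at most two interface vertices. One then verifies the three axioms; the only nonroutine check is connectivity, which holds because every vertex is propagated along the unique tree path joining the bags that witness its incident edges, and the resulting width works out to $2k^2+3k$.

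For the inequality $\tcw(G)\le 4\degtw(G)^2$ I would go the other way, starting from an optimal tree decomposition $(T,\{\beta_t\})$ of width $w=\tw(G)$, writing $\Delta=\Delta(G)$ so that $\degtw(G)=w+\Delta$. Turn it into a tree-cut decomposition on the same tree by assigning each vertex $v$ to the node $t$ whose bag is the topmost bag containing $v$; since the bags of $v$ form a subtree, all of them then lie in $T_t$, so $Y_t$ is exactly the set of vertices all of whose bags lie in $T_t$. The key structural fact is that every adhesion edge leaving $Y_t$ has its outer endpoint in $\beta_t$: if $uv\in E(G)$ with $u\in Y_t$ and $v\notin Y_t$, then the common bag of $u,v$ lies in $T_t$ while $v$ also has a bag outside $T_t$, forcing $v\in\beta_t\cap\beta_{p(t)}\subseteq\beta_t$. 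As $|\beta_t|\le w+1$ and each of its vertices has degree at most $\Delta$, this immediately gives $\adh(t)=|E_t|\le(w+1)\Delta$; moreover, since the adhesion edges of distinct children are distinct and each is incident to $\beta_t$, we get $\sum_{c\text{ child of }t}\adh(c)\le(w+1)\Delta$, so the number of heavy children of $t$ is at most $(w+1)\Delta$ and $\tor(t)\le|X_t|+(w+1)\Delta+1\le(w+1)+(w+1)\Delta+1$. Both $\adh(t)$ and $\tor(t)$ are thus at most $4(w+\Delta)^2=4\degtw(G)^2$ (the edgeless case being trivial), which yields the bound.

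The step I expect to be the main obstacle is the treatment of thin children in the first inequality. Bounding the heavy part and verifying the edge- and vertex-coverage axioms is straightforward, but a star-like placement of thin children destroys the width bound, and handling arbitrary trees of adhesion-$\le 2$ children while maintaining the connectivity axiom is where the construction must be carried out carefully; matching the suppression performed by the $3$-center is the conceptual key that makes the quadratic bound go through.
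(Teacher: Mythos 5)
First, a point of reference: the paper does not prove Lemma~\ref{lem:comparison} itself --- it imports both inequalities from the cited works (\cite{Wollan15,MarxWollan14} for the degree bound, \cite{Ganian0S15} for the treewidth bound) --- so your proposal has to be judged on its own merits and against those arguments. Your proof of the second inequality, $\tcw(G)\leq 4\degtw(G)^2$, is essentially correct and complete: assigning each vertex to its topmost bag, the key fact that every edge leaving $Y_t$ (or leaving $Y_c$ for a child $c$ of $t$) has its outer endpoint in $\beta_t$, and the disjointness of the sets $E_c$ over distinct children $c$ (valid here because an edge between two different child subtrees would have no common bag, contradicting the edge axiom --- this deserves the one-line justification you omitted) give $\adh(t)\leq(w+1)\Delta$ and $\tor(t)\leq(w+1)(\Delta+1)+1$, and the arithmetic goes through. (Pedantically, the edgeless case is not trivial but false, since then $\tcw=1>0=4\degtw^2$; this defect is inherited from the statement as it circulates in the literature.)

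The first inequality, however, contains a genuine gap, and it is not the thin-children step you flagged but the heavy-children step you declared affordable. Your preliminary observation that $z_c$ survives in the $3$-center \emph{exactly when} $\adh(c)\geq 3$ holds only in one direction: degrees never increase during suppression, so a surviving $z_c$ has $\adh(c)\geq 3$, but the converse fails because suppression is exhaustive and deleting a vertex of degree at most one \emph{decreases} its neighbor's degree, so suppressions cascade. Concretely, let $t$ have bag $\{x\}$ and, for $i=1,\dots,m$, four children with bags $\{u_i\},\{v_i^1\},\{v_i^2\},\{v_i^3\}$, where $u_i$ is adjacent to $v_i^1,v_i^2,v_i^3$ and $v_i^1$ is adjacent to $x$. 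Each child $\{u_i\}$ has adhesion $3$, yet in the torso at $t$ the vertices for $\{v_i^2\},\{v_i^3\}$ have degree one and are suppressed first, after which the vertex for $\{u_i\}$, and then the one for $\{v_i^1\}$, are suppressed as well; hence $\tor(t)=1$ while $t$ has $m$ children of adhesion $3$, and the decomposition has width $3$ independently of $m$ (it can even be made optimal by adding a disjoint component of tree-cut width $3$ and hanging its optimal decomposition under the root). So the number of heavy children is \emph{not} bounded by $\tor(t)\leq k$, and your $\bigoh(k^2)$ bound on the bag size collapses. The proof in \cite{Ganian0S15} repairs exactly this by first passing to a \emph{nice} tree-cut decomposition (the notion recalled in the paper's preliminaries) and invoking the nontrivial fact that $|A_t|\leq 2k+1$ there; niceness also forces every remaining thin child $b\in B_t$ to satisfy $N(Y_b)\subseteq X_t$, so the bags $X_t\cup N(Y_t)\cup\bigcup_{b\in A_t}N(Y_b)$ already satisfy all three axioms and the chain-rerouting you left unverified becomes unnecessary. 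Without this normalization, neither your heavy-children bound nor your thin-children construction can be made to work.
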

	
\section{Edge-Cut Width}
\label{sub:lfen} 
%\todo{R: somewhere in this section, we should show that our parameter indeed differs from the standard tcw. Also, we should setup the different variants of tcw properly. And we should mention that tcw-4 has basically no algorithmic applications.}

Let us begin by considering a maximal spanning forest $T$ of a graph $G$, and recall that $E(G)-T$ forms a minimum feedback edge set in $G$; the size of this set is commonly called the \emph{feedback edge number}~\cite{BentertHHKN20,GanianO21,GolovachKKL22}, and it does not depend on the choice of $T$. We will define our parameter as the maximum number of edges from the feedback edge set that form cycles containing some particular vertex $v\in V(G)$.

Formally, for a graph $G$ and a maximal spanning forest $T$ of $G$, let the \emph{local feedback edge set} at $v\in V$ be 
%\ifshort
$E_{\loc}^{G,T}(v)=\{uw\in E(G)\setminus E(T)~|~$ the unique path between $u$ and $w$ in $T$ contains $v\}$; we remark that this unique path forms a so-called \emph{fundamental cycle} with the edge $uw$.
The \emph{\width} of $(G,T)$ (denoted $\widthshort(G,T)$) is then equal to $1+\max_{v\in V} |E_{\loc}^{G,T}(v)|$, and the \emph{\width} of $G$ is the smallest \width\, among all possible maximal spanning forests of $G$.

Notice that the definition increments the \width\ of $T$ by $1$. This ``cosmetic'' change may seem arbitrary, but it matches the situation for treewidth (where the width is the bag size minus one) and allows trees to have a width of $1$. Moreover, defining \width\ in this way provides a more concise description of the running times for our algorithms, where the records will usually depend on a set that is one larger than $|E_{\loc}^{G,T}(v)|$. We note that the predecessor to \width, called the \emph{local feedback edge number}~\cite{GanianKorchemna21}, was defined without this cosmetic change and hence is 
equal to \width\ minus one.

While it is obvious that $\widthshort(G)$ is upper-bounded by (and hence dominates) the feedback edge number of $G$ ($\fen(G)$), we observe that graphs of constant $\widthshort(G)$ can have unbounded feedback edge number---see Figure~\ref{fig:lfen}. 
We also note that Ganian and Korchemna established that \width\ is dominated by tree-cut width. 

 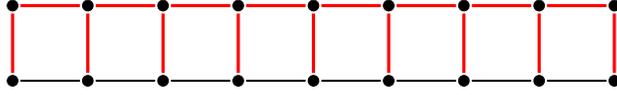
\begin{figure}
 \begin{center}
			\begin{tikzpicture}
				
				\vertex (n1) at (1,1) {};
				\vertex (n2) at (2,1) {};
				\vertex (n3) at (3,1) {};
				\vertex (n4) at (4,1) {};
				\vertex (n5) at (5,1) {};
                                           \vertex (n6) at (6,1) {};
                                           \vertex (n7) at (7,1) {};
                                           \vertex (n8) at (8,1) {};
                                           \vertex (n9) at (9,1) {}; 

                                           \vertex (m1) at (1,2) {};
				\vertex (m2) at (2,2) {};
				\vertex (m3) at (3,2) {};
				\vertex (m4) at (4,2) {};
				\vertex (m5) at (5,2) {};
                                           \vertex (m6) at (6,2) {};
                                           \vertex (m7) at (7,2) {};
                                           \vertex (m8) at (8,2) {};
                                           \vertex (m9) at (9,2) {}; 

                                           \draw [thick](n1)--(n2);
                                           \draw [thick](n2)--(n3);
                                           \draw [thick](n3)--(n4);
                                           \draw [thick](n4)--(n5);
                                           \draw [thick](n5)--(n6);
                                           \draw [thick](n6)--(n7);
                                           \draw [thick](n7)--(n8);
                                           \draw [thick](n8)--(n9);
 
                                           \draw[red, very thick] (m1)--(m2);
                                           \draw[red, very thick] (m2)--(m3);
                                           \draw[red, very thick] (m3)--(m4);
                                           \draw[red, very thick] (m4)--(m5);
                                           \draw[red, very thick] (m5)--(m6);
                                           \draw[red, very thick] (m6)--(m7);
                                           \draw[red, very thick] (m7)--(m8);
                                           \draw[red, very thick] (m8)--(m9);
                                    
				                           \draw[red, very thick] (n1)--(m1);
				\draw[red, very thick] (n2)--(m2);
                                           \draw[red, very thick] (n3)--(m3);
				\draw[red, very thick] (n4)--(m4);
                                           \draw[red, very thick] (n5)--(m5);
				\draw[red, very thick] (n6)--(m6);
                                           \draw[red, very thick] (n7)--(m7);
				\draw[red, very thick] (n8)--(m8);
                                           \draw[red, very thick] (n9)--(m9);
			\end{tikzpicture}
			\vspace{-0.5cm}
			\end{center}
\caption {Example of a graph $G$ with a spanning tree $T$ (marked in red) such that $\widthshort(G)=\widthshort(G,T)=3$. The feedback edge number of $G$, i.e., its edge deletion distance to acyclicity, is exactly the number of black edges and can be made arbitrarily large in this fashion while preserving $\widthshort(G)=3$.\label{fig:lfen}}
\end{figure}

\iflong
\begin{proposition}[\hspace{-0.001cm}\cite{GanianKorchemna21}]
\fi
\ifshort
\begin{proposition}[$\star$, \cite{GanianKorchemna21}]
\fi
\label{pro:lfescompare}
For every graph $G$, $\tcw(G)\leq \widthshort(G) \leq \fen(G)+1$.
\end{proposition}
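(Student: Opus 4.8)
The plan is to prove the two inequalities separately, the right-hand one being essentially immediate and the left-hand one requiring a genuine construction. For the upper bound $\widthshort(G)\le\fen(G)+1$, I would fix any maximal spanning forest $T$ of $G$ and recall that $E(G)\setminus E(T)$ is a minimum feedback edge set, so $|E(G)\setminus E(T)|=\fen(G)$. Since by definition $E_{\loc}^{G,T}(v)\subseteq E(G)\setminus E(T)$ for every vertex $v$, we get $|E_{\loc}^{G,T}(v)|\le\fen(G)$, and hence $\widthshort(G)\le\widthshort(G,T)=1+\max_v|E_{\loc}^{G,T}(v)|\le\fen(G)+1$. Since both parameters are closed under disjoint union, for the remaining inequality I may assume $G$ is connected so that an optimal $T$ is a genuine spanning tree.

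For the lower bound $\tcw(G)\le\widthshort(G)$, the idea is to use the spanning tree $T$ itself, rooted arbitrarily, as the decomposition tree of a tree-cut decomposition $(T,\mathcal{X})$ with singleton bags $X_v=\{v\}$; this is a near-partition of $V(G)$, and I will bound both the adhesion and the torso-size of each node by $\widthshort(G,T)$. The adhesion bound is the easy half. For a node $v\ne r$ with tree edge $e(v)$, any edge of $G$ crossing the cut induced by $e(v)$ is either $e(v)$ itself or a non-tree edge whose fundamental cycle uses $e(v)$ and therefore passes through $v$; the latter edges all lie in $E_{\loc}^{G,T}(v)$, so $\adh(v)\le 1+|E_{\loc}^{G,T}(v)|\le\widthshort(G,T)$.

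The main obstacle is the torso-size bound, and here I would argue as follows. For a fixed node $v$, the components $Z_1,\dots,Z_\ell$ of $T-v$ are exactly the components consolidated in the torso $H_v$, and each $Z_i$ is attached to the rest of $T$ through a single tree edge $e_i$. Hence the degree of the consolidated vertex $z_i$ equals $1+d_i$, where $d_i$ counts the non-tree edges of $G$ leaving $Z_i$. The key observation is that every such escaping non-tree edge must exit $Z_i$ through $v$, so its fundamental cycle passes through $v$ and it belongs to $F:=E_{\loc}^{G,T}(v)$; conversely each edge of $F$ is incident to at most two of the $Z_i$ (one, if its other endpoint is $v$). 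A double-counting argument then gives $\sum_i d_i\le 2|F|$, so the number $h$ of \emph{heavy} components (those with $d_i\ge 2$, i.e.\ $\deg(z_i)\ge 3$) satisfies $2h\le\sum_{i:d_i\ge2} d_i\le 2|F|$ and thus $h\le|F|$. The remaining care is with the $3$-center: suppression only deletes vertices and never raises any degree, so every initially light component ($\deg(z_i)\le 2$) is eventually suppressed, and the surviving vertices of $\tilde H_v$ form a subset of $\{v\}$ together with the heavy components. Therefore $\tor(v)=|\tilde H_v|\le 1+h\le 1+|E_{\loc}^{G,T}(v)|\le\widthshort(G,T)$. Combining the two bounds shows the decomposition has width at most $\widthshort(G,T)$, and choosing $T$ optimally yields $\tcw(G)\le\widthshort(G)$.
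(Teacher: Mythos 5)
Your proof is correct and takes essentially the same route as the paper's: the containment $E_{\loc}^{G,T}(v)\subseteq E(G)\setminus E(T)$ for the right-hand inequality, and for the left-hand one the same singleton-bag tree-cut decomposition built on the optimal spanning tree, with adhesion bounded via fundamental cycles forced through $e(v)$ and torso-size bounded by noting that every consolidated vertex surviving the $3$-center accounts for at least two edges of $E_{\loc}^{G,T}(v)$. Your write-up is in fact slightly more careful than the paper's at two points --- you make the double-counting over components explicit and you correctly record $\adh(v)\le 1+|E_{\loc}^{G,T}(v)|$ (the paper's intermediate claim $\adh(t)\le|E_{\loc}^{T}(t)|$ overlooks the tree edge itself, though its final bound $\le\widthshort(G)$ still holds) --- but these are refinements of the identical argument, not a different one.
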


\iflong
\begin{proof}
Let us begin with the second inequality. Consider an arbitrary spanning tree $T$ of~$G$. 
%Note that if $G$ is not connected, all the numbers $\tcw(G), \lfes(G), \lfes(G)$ can be computed as maximal over it's connected components. So it's sufficient to prove the statement for connected $G$. 
%We start from the second inequality, let $T$ be any spanning tree of $G$. 
Then for every $v \in V(G)$, $E_{\loc}^T(v)$ is a subset of a feedback edge set corresponding to the spanning tree $T$, so $|E_{\loc}^T(v)| \leq \fen(G)$ and the claim follows.

To establish the first inequality, we will use the notation and definition of tree-cut width from previous work~\cite[Subsection 2.4]{GanianKO21}.
% ****%On Structural Parameterizations of the Bounded-Degree Vertex Deletion Problem (Preliminaries, Definition 2). 
Let $T$ be the spanning tree of $G$ with $\widthshort(G,T)=\widthshort(G)$. We construct a tree-cut decomposition $(T, \XXX)$ where each bag contains precisely one vertex, notably by setting $X_t=\{t\}$ for each $t\in V(T)$.  Fix any node $t$ in $T$ other than root, let $u$ be the parent of $t$ in $T$. All the edges in $G\setminus ut$ with one endpoint in the rooted subtree $T_t$ and another outside of $T_t$ belong to $E^T_{loc}(t)$, so $\adh_T(t)=|\cut(t)|\leq |E^T_{loc}(t)| \leq \widthshort(G)-1$.\\\\ Let $H_t$ be the torso of $(T, \XXX)$ in $t$, then $V(H_t)=\{t,z_1...z_l\}$ where $z_i$ correspond to connected components of $T \setminus t$, $i\in [l]$. In $\tilde H(t)$, only $z_i$ with degree at least $3$ are preserved. 
But all such $z_i$ are the endpoints of at least two edges in $|E^T_{loc}(t)|$, so $\tor(t)=|V(\tilde H_t)|\le 1+ |E^T_{loc}(t)| \le \widthshort(G)$. Thus  $\tcw(G)\leq \widthshort(G)$. \end{proof}
\fi

As for the converse, we already have conditional evidence that \width\ cannot dominate tree-cut width: \textsc{Bayesian Network Structure Learning} is \W{1}-hard w.r.t.\ the latter, but fixed-parameter tractable w.r.t.\ the former~\cite{GanianKorchemna21}. We conclude our comparisons with a construction that not only establishes this relationship unconditionally, but---more surprisingly---implies that \width\ is incomparable to $\degtw$.
%does not dominate $\degtw$ either. 

\iflong
\begin{lemma}
\fi
\ifshort
\begin{lemma}[$\star$]
\fi
\label{lem:laddertree}
For each $m\in \Nat$, there exists a graph $G_m$ of degree at most $3$, tree-cut width at most $2$, and \width\ at least $m+1$.
\end{lemma}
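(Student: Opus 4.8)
The plan is to exhibit an explicit family of graphs $G_m$ and verify the three claimed properties. The intuition driving the construction comes from Figure~\ref{fig:lfen}: a ``ladder'' (the Cartesian product of a path $P_n$ with $K_2$) already has constant edge-cut width but unbounded feedback edge number, and ladders have bounded tree-cut width because they admit a decomposition whose adhesions and torsos are all small. So I would take $G_m$ to be a ladder on roughly $2m$ rungs, possibly lightly modified so that the maximum degree stays at $3$ and the tree-cut width stays at $2$. Concretely, I would let $G_m$ consist of two parallel paths $u_1\cdots u_N$ and $w_1\cdots w_N$ together with the rung edges $u_iw_i$, for $N$ chosen as a function of $m$; every internal vertex then has degree exactly $3$ and the two end columns have degree $2$, so the degree bound is immediate.

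Next I would establish the tree-cut width bound. The natural tree-cut decomposition places each rung $\{u_i,w_i\}$ (or a small group of consecutive rungs) into its own bag, arranged along a path-shaped tree $T$. For such a decomposition the adhesion $\adh(t)$ between consecutive bags is bounded by a small constant (the two horizontal path edges crossing the cut, giving $\adh \le 2$ for a suitable grouping), and the $3$-center torsos collapse to constantly many vertices; combining these via the definition of tree-cut width yields $\tcw(G_m)\le 2$. Rather than grind through the torso computation, I would invoke the inequality $\tcw(G)\le 4\degtw(G)^2$ from Lemma~\ref{lem:comparison} as a sanity check and present the explicit decomposition only to the level of detail needed to read off $\adh(t)\le 2$ and a constant torso-size.

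The substantive part is the lower bound $\widthshort(G_m)\ge m+1$, which must hold for \emph{every} maximal spanning forest $T$, not just the obvious one. Fix any spanning tree $T$; then $E(G_m)\setminus E(T)$ is a feedback edge set, and the fundamental cycles of these non-tree edges account, in total, for all the cycles of $G_m$. The key combinatorial observation is that the ladder has $\Theta(N)$ independent cycles (its feedback edge number is $N-1$), and because the ladder is ``thin'' --- every edge separating it horizontally is crossed by only two edges --- these fundamental cycles are forced to overlap heavily on the horizontal paths. I would argue that no matter how $T$ is chosen, some vertex $v$ lies on the fundamental path of at least $m$ distinct non-tree edges, giving $|E_{\loc}^{G_m,T}(v)|\ge m$ and hence $\widthshort(G_m,T)\ge m+1$.

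The hard part will be making this averaging/routing argument robust against adversarial choices of $T$: a cleverly chosen spanning tree could try to spread the fundamental cycles out, so I need a structural reason why they cannot all avoid a common vertex. The cleanest route is a counting argument: sum $|E_{\loc}^{G_m,T}(v)|$ over all $v$, which equals the total length of all fundamental cycles; since each of the $\Theta(N)$ non-tree edges induces a fundamental cycle that must traverse a long stretch of the ladder (any cycle in a ladder of length $N$ has length growing with how far apart its endpoints sit), the total is $\Omega(N^2)$ while there are only $O(N)$ vertices, forcing a vertex of local load $\Omega(N)$; choosing $N$ linear in $m$ then finishes. I would need to verify carefully that fundamental cycles are genuinely long regardless of $T$ --- this is where the thinness of the ladder (bounded edge cuts between columns) is exploited --- and that the bound survives the exact constants, but the qualitative picture makes $\widthshort(G_m)\ge m+1$ for an appropriate linear choice of $N$.
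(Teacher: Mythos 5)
Your construction does not work: the ladder has \emph{constant} edge-cut width, no matter how long you make it, so the lower bound $\widthshort(G_m)\ge m+1$ is false for your $G_m$. The paper itself uses exactly this ladder in Figure~\ref{fig:lfen} as an example of the opposite phenomenon --- a graph whose feedback edge number is unbounded while $\widthshort$ stays equal to $3$. The adversarial spanning tree you worry about at the end of your argument actually exists and is very simple: take the ``comb'' $T$ consisting of one full horizontal path, say $w_1\cdots w_N$, together with all rungs $u_iw_i$. Then every non-tree edge is a bottom edge $u_iu_{i+1}$, and its fundamental cycle is the $4$-cycle $u_i,w_i,w_{i+1},u_{i+1}$. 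Each vertex lies on at most two of these cycles, so $|E_{\loc}^{G_m,T}(v)|\le 2$ for every $v$ and $\widthshort(G_m)\le 3$. This also pinpoints where your counting argument breaks: the endpoints of each non-tree edge sit at distance $1$ in this tree, so fundamental cycles have length $4$, the total load is $\Theta(N)$ rather than $\Omega(N^2)$, and no vertex is forced to carry load growing with $N$. The underlying conceptual slip is equating unbounded feedback edge number with unbounded edge-cut width; the whole point of the parameter is that these can diverge.

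What is needed is a graph in which fundamental cycles \emph{cannot} be localized by any spanning tree, and the paper achieves this with a genuinely different construction: two regular binary trees $Y,Y'$ of depth $m$ glued together by identifying their leaves. There, for any spanning tree $T$, one considers the $T$-path $\pi$ between the two roots and the $m$ edge-disjoint ``bands'' $B_0,\dots,B_{m-1}$ (nested copies of $G_0,\dots,G_{m-1}$, each attached to $\pi$ by two edges). A parity/connectivity argument shows each band must contribute at least one non-tree edge whose fundamental cycle passes through a common vertex $q$ on $\pi$ (an identified leaf), giving $|E_{\loc}^{G_m,T}(q)|\ge m$ for \emph{every} $T$. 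The nesting is essential: it creates $m$ edge-disjoint cycles all forced through one vertex, which is precisely the structure the ladder lacks. Your degree and tree-cut width verifications would be fine for a suitable graph, but the construction itself must be replaced.
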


\iflong
\begin{proof}
\fi
\ifshort
\begin{proof}[Sketch]
\fi
We start from two regular binary trees $Y$ and $Y'$ of depth $m$, i.e., rooted binary trees where every node except leaves has precisely two children and the path from any leaf to the root contains $m$ edges.
We glue $Y$ and $Y'$ together by identifying each leaf of~$Y$ with a unique leaf of $Y'$ (see the left part of Figure \ref{fig: ecw_neq_tcw2} for an illustration). It remains to show that the resulting graph, which we denote $G_m$, has the desired properties. 
\begin{figure}[htb]
\begin{center}
\vspace{-0.3cm}
\includegraphics[width=0.75\textwidth]{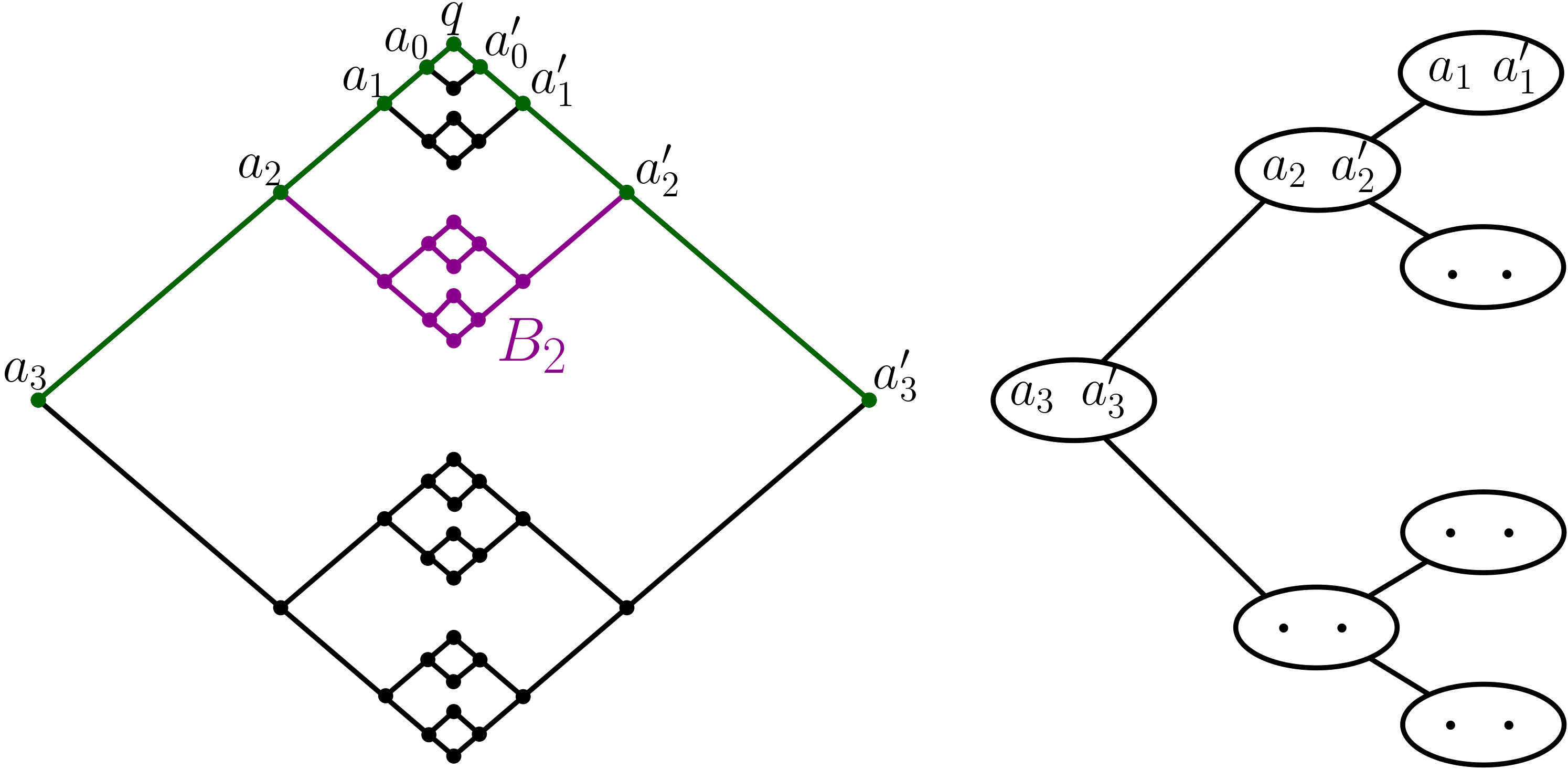}\vspace{-0.6cm}
\end{center}
\caption{\textbf{Left}: Graph $G_4$, where the roots of $Y$ and $Y'$ are $a_3$ and $a_3'$, the path $\pi$ is green and $B_2$ is violet. \textbf{Right}: Fragment of the tree-cut decomposition $(Y^*, \chi)$ of $G_4$.}
\label{fig: ecw_neq_tcw2} 
\end{figure}
\iflong

Consider arbitrary spanning tree $T$ of $G_m$. There exists a unique path $\pi \subseteq T$ between the roots $r$ and $r'$ of $Y$ and $Y'$. Observe that $G_m - \pi$ is a disjoint union of $m$ graphs $G_l$, $l \in [m-1]$. We add to every such $G_l$ two edges which connect it with $\pi$ and denote the resulting graph by $B_l$. Then every $B_l$ contains at least one edge that contributes to the local feedback edge set of $q\in V(\pi)$, where $q$ is a leaf in $Y$ and $Y'$. Indeed, fix any $l\in [m-1]$ and denote by $a_l$ and $a_l'$ the vertices of $B_l$ intersecting $\pi$ in $Y$ and $Y'$ correspondingly. As $T$ is a tree, $T - {q}$ is a union of two trees: one containing $a_l$ and another containing $a_l'$. Hence every vertex ob $B_l$ is connected to precisely one of $a_l$ and~$a_l'$ in $T - {q}$. In particular, there exists an edge $e_l$ of $B_l$ such that one endpoint of $e_l$ is connected to $a_l$ and another is connected to $a_l'$ in $T - {q}$. Then $e_l$ belongs to the local feedback edge set of every vertex of $\pi$ that lies between $a_l$ and $a_l'$, in particular, to the local feedback edge set of $q$. As $B_l$ and $B_{l'}$ don't share edges for any $l\ne l'$, this results in $|E_{\loc}^{G_m,T}(q)|\ge m$. Since the inequality holds for any choice of $T$, we may conclude that $\widthshort(G_m)\ge m+1$. 

To compute the tree-cut width of $G_m$ has, consider its tree-cut decomposition $(Y^*, \chi)$ where $Y^*$ is a regular binary tree of depth $m$ and $\chi$ is defined as follows. Let $h:V(Y^*) \to V(Y)$ and $h':V(Y^*) \to V(Y')$ be bijections such that (1) if $y$ is a leaf of~$Y^*$ then $h(y)$ and $h'(y)$ are identified leaves of $Y$ and $Y'$, and (2) if $y_1$ is a parent of $y_2$ in $Y^*$ then $h(y_1)$ is a parent of $h(y_2)$ in $Y$ and $h'(y_1)$ is a parent of $h'(y_2)$ in $Y'$. Further, for every node $y$ of $Y^*$ we define its bag to be $X_{y}=\{h(y), h'(y)\}$ (see the right part of Figure \ref{fig: ecw_neq_tcw2} for the illustration). Observe that the adhesion of every node as well as size of each bag is at most $2$, and all the children are thin, therefore, $\tcw(G_m) = 2$. \fi
\qed \end{proof}

%
%One reason Lemma~\ref{lem:laddertree} is interesting is that it crucially relies on the fact that the spanning trees associated with \width\ must be subgraphs of $G$. 
% In fact, if we were to consider a variant of \width\ where one would be allowed to choose \emph{any} spanning tree of the vertices regardless of $G$ as the decomposition, one can show that the resulting parameter would be ``sandwiched'' between tree-cut width and treewidth plus maximum degree. However, it is not at all clear how one could compute this parameter. \todo{V: do we still want to mention this?}
%

Since it is known that treewidth dominates tree-cut width (see Lemma~\ref{lem:comparison}), Lemma~\ref{lem:laddertree} implies that \width\ does not dominate $\degtw$. Conversely, it is easy to build graphs with unbounded $\degtw$ and bounded \width\ (e.g., consider the class of stars). Hence, we obtain that \width\ is incomparable to $\degtw$. An illustration of the parameter hierarchy including \width\ is provided in Figure~\ref{fig: hierarchy}.
%\todo{R: Vika does figure, references it also from into}
\begin{figure}[htb]
\begin{minipage}[c]{0.4\textwidth}
\begin{center}
\includegraphics[width=0.4 \textwidth]{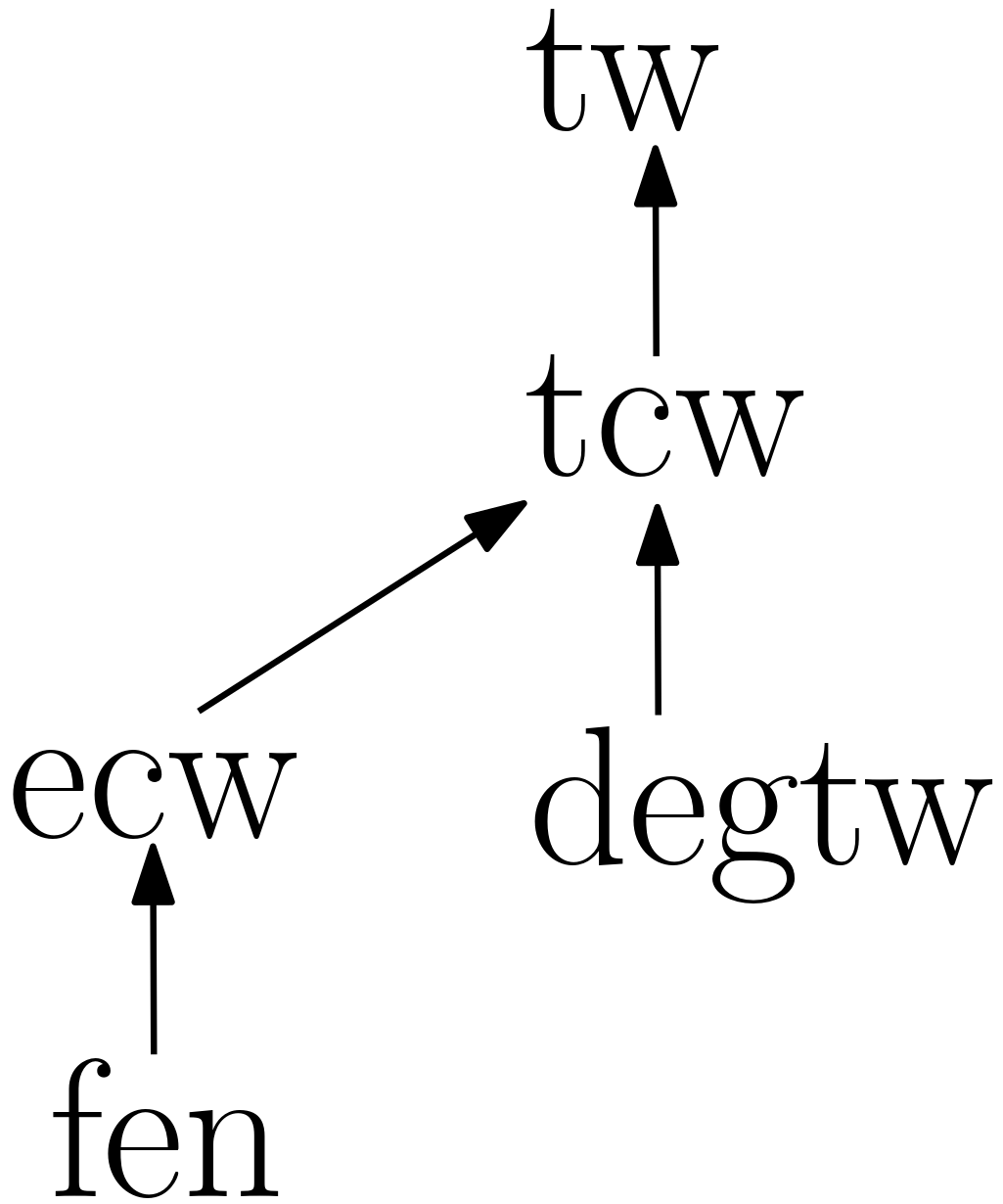}\vspace{-0.3cm}
\end{center}
\end{minipage}
\hfill
\begin{minipage}[c]{0.58\textwidth}
\vspace{-0.3cm}
\caption{Position of edge-cut width in the hierarchy of graph parameters. Here an arrow from parameter $\beta$ to parameter $\alpha$ represents the fact that $\alpha$ dominates $\beta$, i.e., there exists a function $p$ such that for each graph $G$, $\alpha(G)\leq p(\beta(G))$. We use $\fen$ to denote the feedback edge number.} 
\vspace{-0.3cm}
\end{minipage}
\label{fig: hierarchy} 
\end{figure}

Next, we note that even though Lemma~\ref{lem:comparison} and Proposition~\ref{pro:lfescompare} together imply that $\tw(G) \leq 2\ecw(G)^2 + 3\ecw(G)$, one can in fact show that the gap is linear. This will also allow us to provide a better running time bound in Section~\ref{sec:compute}.

\iflong
\begin{lemma}
\fi
\ifshort
\begin{lemma}[$\star$]
\fi
\label{lem:ecwtw}
		For every graph $G$, $\tw(G)\leq \widthshort(G)$.
\end{lemma}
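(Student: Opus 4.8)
The plan is to convert an optimal edge-cut width ``decomposition'' (a spanning tree) directly into a tree decomposition of the required width. Since both $\tw$ and $\widthshort$ are the maximum of their values over the connected components of $G$ --- a maximal spanning forest restricts to a spanning tree of each component, and $E_{\loc}^{G,T}(v)$ only ever involves non-tree edges inside the component of $v$ --- I would first reduce to the case where $G$ is connected and $T$ is a spanning tree realizing $\widthshort(G)=\widthshort(G,T)$. I then root $T$ at an arbitrary vertex $r$, write $p(v)$ for the parent of $v$, and for each non-tree edge $f=uw$ let $P_f$ denote the tree path between $u$ and $w$ (the tree part of its fundamental cycle); recall that $v\in P_f$ if and only if $f\in E_{\loc}^{G,T}(v)$.

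The construction itself: use $T$ as the decomposition tree, and for each non-tree edge $f$ arbitrarily designate one of its two endpoints $u_f\in\{u,w\}$. I then define the bag
$$\beta_v = \{v\}\cup\{p(v)\}\cup\{\,u_f : f\in E_{\loc}^{G,T}(v)\,\},$$
where the term $p(v)$ is dropped when $v=r$. Equivalently, I insert the single vertex $u_f$ into every bag along the entire path $P_f$. The crucial point for the width bound is that each local feedback edge $f\in E_{\loc}^{G,T}(v)$ contributes at most one guest $u_f$ to $\beta_v$, so $|\beta_v|\le 2+|E_{\loc}^{G,T}(v)|$, and hence the width is at most $1+\max_{v}|E_{\loc}^{G,T}(v)|=\widthshort(G,T)=\widthshort(G)$, giving exactly $\tw(G)\le\widthshort(G)$.

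It then remains to verify the three tree-decomposition axioms. Vertex coverage is immediate from $v\in\beta_v$. For edge coverage, every tree edge $vp(v)$ lies in $\beta_v$; and for a non-tree edge $f=uw$ with, say, $u_f=u$, the vertex $u$ was inserted along all of $P_f$, in particular into $\beta_w$, which also contains $w$, so $f$ is covered by $\beta_w$. The only nontrivial axiom is connectivity, and this is exactly where designating a \emph{single} endpoint per fundamental cycle pays off: the set of nodes whose bag contains a fixed vertex $x$ is $\{x\}\cup\{\text{children of }x\}\cup\bigcup_{f:\,u_f=x}P_f$. Since $u_f=x$ forces $x$ to be an endpoint of $f$, each path $P_f$ in this union contains $x$, as does every child-node index, so all these nodes induce a connected subtree of $T$. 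The main obstacle is therefore not any single computation but arranging the bags so that one representative per cycle simultaneously covers the non-tree edge and keeps each vertex's occurrence set connected; routing one fixed endpoint of $f$ along the whole of $P_f$ achieves both, and in doing so yields the linear bound rather than the quadratic one obtained by merely composing Lemma~\ref{lem:comparison} with Proposition~\ref{pro:lfescompare}.
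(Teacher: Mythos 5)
Your proposal is correct and is essentially identical to the paper's own proof: the paper also uses $T$ itself as the decomposition tree, puts $v$ and its parent into $\beta_v$, and then, for each feedback edge, adds one arbitrarily chosen endpoint to the bags of all vertices on its fundamental tree path, yielding $|\beta_v|\le 2+|E_{\loc}^{G,T}(v)|$ and hence width at most $\widthshort(G)$. Your connectivity and edge-coverage checks (the designated endpoint's path contains that endpoint; both endpoints meet in the bag of the non-designated endpoint) are exactly the paper's, just spelled out slightly more explicitly.
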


\iflong
\begin{proof}
Let $T$ be the spanning tree of $G$ such that $\widthshort(G)=\widthshort(G, T)$. We arbitrarily pick a root $r$ in $T$ and construct the tree decomposition $(T, \{\beta_v\}_{v \in V(T)})$ of $G$ as follows. At first, for every $v\in V(G)$, we add to $\beta_v$ the vertex $v$ and the parent of $v$ in $T$ (if it exists). Obviously, after this step each vertex $v$ of $G$ appears in some bag and every edge of $T$ is contained as a subset in some bag. Moreover, $v$ appears only in $\beta_v$ and in the bags of children of $v$ in $T$, which results in a connected subtree of $T$. 

To complete the construction, we process feedback edges one by one. For every $e\in E(G) \setminus E(T)$, we arbitrarily choose an endpoint $u$ of $e=uw$ and add $u$ to each bag $\beta_v$ such that $u\in E_{\loc}^{G,T}(v)$. Note that any such step does not violate the connectivity condition. Indeed, we add $u$ to the bags of all vertices which lie on the path between the endpoints of $e$ in $T$. In particular, the path hits $u$ whose bag $\beta_u$ initially contained $u$. Finally, both endpoints of $e$ appear in $\beta_w$.
In the resulting decomposition, for each $v\in V(G)$ it holds that $|\beta_v| \le 2+ E_{\loc}^{G,T}(v) \le 1+\ecw(G)$. Hence the width of $(T, \{\beta_v\}_{v \in V(T)})$ is at most $\ecw(G)$. \qed
\end{proof}
\fi

Last but not least, we show that---also somewhat surprisingly--- \width\ is not closed under edge or vertex deletion. \ifshort (see Figure~\ref{fig: ecw_neq_extecw}, $\star$)\fi

\iflong
For the edge-deletion case, we refer readers to Figure~\ref{fig: ecw_neq_extecw} which illustrates a graph $G$ along with a spanning tree witnessing that $\widthshort(G)\le 4$. On the other hand, any spanning tree $T$ of $G-{ac}$ must contain both edges $ab_i$ and $b_ic$ for some $i\in\{1,2,3\}$. We will assume that those edges are $ab_1$ and $b_1c$, since the other cases are symmetrical. Then $T$ contains precisely one edge of each pair $(ab_2,b_2c)$ and $(ab_3,b_3c)$. The other, ``missing'' edge from each pair contributes to the local feedback edge set of $b_1$. Together with two missing edges of 3-cycles that intersect $b_1$, this results in $|E_{\loc}^{G-{ac},T}(b_1)|\ge 4$ and, since similar situation happens for any choice of a spanning tree, we conclude that $\widthshort(G-{ac})\ge 5$. The vertex deletion case can be argued analogously using the graph obtained from $G$ by subdividing the edge $ac$.
\fi

\begin{corollary}
\label{cor:deletion}
There exist graphs $G$ and $H$ such that $\widthshort(G-e)>\widthshort(G)$ and $\widthshort(H-v)>\widthshort(H)$ for some $e\in E(G)$ and $v\in V(H)$.
%There exists a graph $G$ such that $\widthshort(G-e)>\widthshort(G)$ for some $e\in E(G)$. Similarly, there exists a graph $G'$ such that $\widthshort(G'-v)>\widthshort(G')$ for some $v\in V(G')$.
\end{corollary}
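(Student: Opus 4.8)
The plan is to exhibit a single explicit gadget $G$ whose \width\ strictly increases when one edge is removed, and then to derive the vertex-deletion case from it by subdividing that edge. First I would construct $G$ as follows. Take two vertices $a$ and $c$ joined by an edge $ac$, and add three further vertices $b_1,b_2,b_3$, each adjacent to both $a$ and $c$; these form three triangles sharing the base edge $ac$. Then, at every $b_i$, I would attach two \emph{pendant triangles} (each a fresh $3$-cycle meeting the rest of $G$ only in $b_i$). The point of a pendant triangle is that, in any spanning tree, it contributes exactly one feedback edge whose fundamental cycle is the triangle itself and hence passes through $b_i$ regardless of the tree chosen; it thus acts as a controllable ``$+1$'' on the local feedback edge set of $b_i$.

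The easy direction is the upper bound $\widthshort(G)\le 4$. I would simply exhibit the spanning tree $T$ consisting of $ac$, the edges $ab_1,ab_2,ab_3$, and a spanning tree of each pendant triangle. A direct inspection shows that the only feedback edges are the three base edges $b_ic$ together with one edge per pendant triangle, and that the maximum number of fundamental cycles passing through any single vertex is $3$ (attained at $a$, at $c$, and at each $b_i$). Hence $\widthshort(G,T)=4$ and so $\widthshort(G)\le 4$.

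The hard direction, which I expect to be the main obstacle, is the lower bound $\widthshort(G-ac)\ge 5$, since this must hold for \emph{every} spanning tree. The key observation is that in $G-ac$ the only neighbours of $a$ are $b_1,b_2,b_3$, so any spanning tree connects $a$ to $c$ through a path of the form $a-b_i-c$; thus some $b_i$ has both $ab_i$ and $b_ic$ in the tree, and by the symmetry of the construction under permuting $b_1,b_2,b_3$ I may assume $i=1$. For each of $b_2,b_3$ exactly one of its two base edges lies in the tree, and I would argue that the ``missing'' edge forms a fundamental cycle through $b_1$, because its tree path is forced to route as $b_j-a-b_1-c$ or $a-b_1-c-b_j$. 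Together with the two pendant triangles at $b_1$, this gives $|E_{\loc}^{G-ac,T}(b_1)|\ge 4$, hence $\widthshort(G-ac)\ge 5$. The delicate point requiring care is precisely that pendant triangles must be placed at \emph{every} $b_i$, not just one: otherwise a spanning tree could route the $a$--$c$ connection through a pendant-free $b_i$ and avoid concentrating the load, so it is the symmetry of the argument that forces the pendant gadgets to be replicated at all three vertices (and this in turn is why three base triangles rather than two are needed, to keep $\widthshort(G)$ at $4$ while pushing $\widthshort(G-ac)$ to $5$).

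Finally, for the vertex-deletion case I would let $H$ be obtained from $G$ by subdividing $ac$ with a new vertex $d$. Deleting $d$ recovers exactly $G-ac$, so $\widthshort(H-d)=\widthshort(G-ac)\ge 5$, while the spanning tree from the easy direction (now using $ad$ and $dc$ in place of $ac$) shows $\widthshort(H)\le 4$, the subdivision vertex $d$ merely joining $a$ and $c$ on the relevant fundamental cycles and picking up local load $3$. This establishes both inequalities simultaneously, with $e=ac$ in $G$ and $v=d$ in $H$. \qed
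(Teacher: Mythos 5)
Your proposal is correct and takes essentially the same route as the paper: the paper's graph (Figure~\ref{fig: ecw_neq_extecw}) is exactly your gadget of three triangles sharing the base $ac$ with two pendant $3$-cycles at each $b_i$, and its proof uses the same forcing argument that any spanning tree of $G-ac$ contains both $ab_1$ and $b_1c$ (up to symmetry), yielding $|E_{\loc}^{G-ac,T}(b_1)|\ge 4$ and hence $\widthshort(G-ac)\ge 5$. The vertex-deletion case is likewise handled in the paper by subdividing $ac$, just as you do.
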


\begin{figure}[htb]
\begin{center}
\vspace{-0.5cm}
\includegraphics[width=0.9\textwidth]{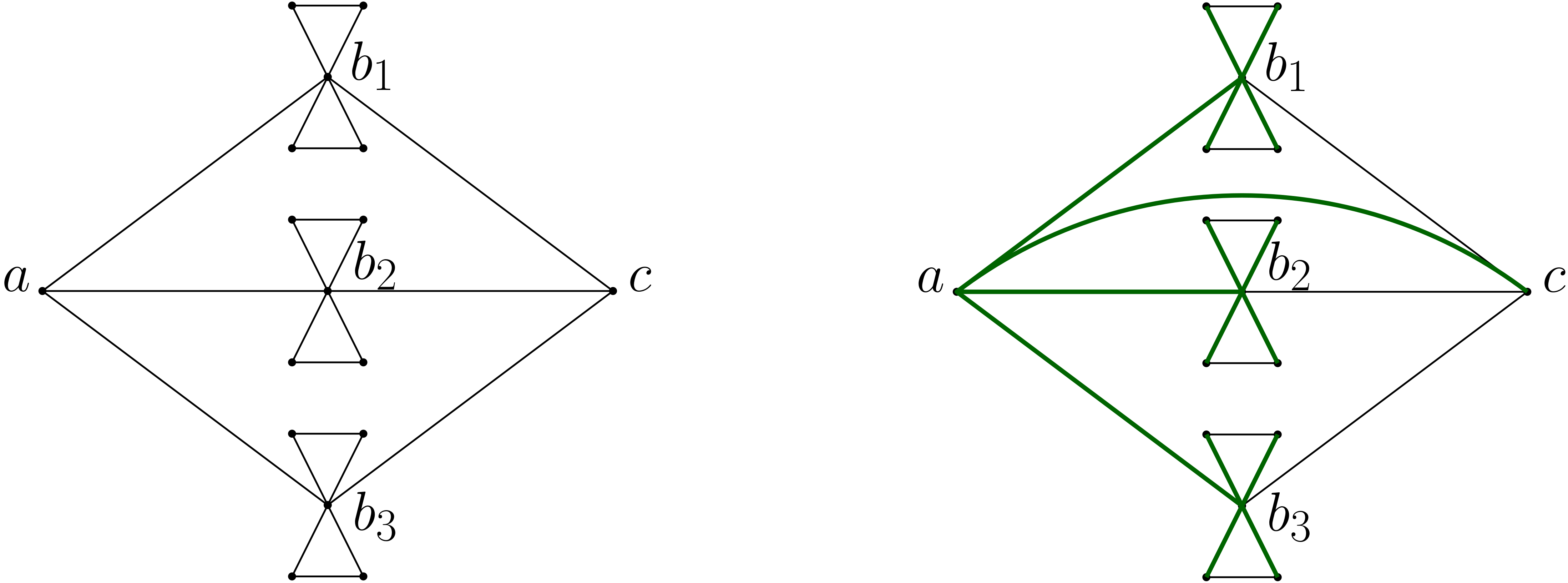}\vspace{-0.6cm}
\end{center}
\caption{\textbf{Left}: Graph $G-{ac}$ of $\widthshort(G-{ac})\ge 5$. \textbf{Right}: Green tree witnessing that $\widthshort(G)\le 4$.}
\label{fig: ecw_neq_extecw}
\end{figure}

%\todo{V: this example is for edge deletion, should we also add it's modification for a vertex deletion?}

%\begin{lemma}
%There exists a graph $G$ such that $\widthshort(G-e)>\widthshort(G)$ for some $e\in E(G)$.
%\end{lemma}

%The right part of Figure~\ref{fig: ecw_neq_extecw} illustrates the graph $G$ along with the spanning tree witnessing that $\widthshort(G)\le 4$.
%On the other hand, any spanning tree $T$ of $G-{ac}$ must contain both edges $ab_i$ and $b_ic$ for some $i\in\{1,2,3\}$. We will assume that those edges are $ab_1$ and $b_1c$, other cases are symmetrical. Then $T$ contains precisely one edge of each pair $(ab_2,b_2c)$ and $(ab_3,b_3c)$. The other, ``missing'' edge from each pair contributes to the local feedback edge set of $b_1$. Together with two missing edges of 3-cycles that intersect $b_1$, this results in $|E_{\loc}^{G-{ac},T}(b_1)|\ge 4$ and, since similar situation happens for any choice of a spanning tree, we conclude that $\widthshort(G-{ac})\ge 5$. 

\section{Computing Edge-Cut Width}
\label{sec:compute}
Before we proceed to the algorithmic applications of \width, we first consider the question of computing the parameter along with an optimal ``decomposition'' (i.e., spanning tree). Here, we provide an explicit fixed-parameter algorithm for this task.
%We will now show how to compute the edge-cut width of a graph in fpt-time.

By Lemma~\ref{lem:ecwtw},  the treewidth of $G$ can be linearly bounded by $\widthshort(G)$. The algorithm uses this to perform dynamic programming on a tree decomposition $(T,\{\beta_t\}_{t\in V(T)})$ of~$G$.
For a node $t \in V(T)$, we let $Y_t$ be the union of all bags $\beta_s$ such that $s$ is either $t$ itself or a descendant of $t$ in $T$, and let $G_t$ be the subgraph $G[Y_t]$ of $G$ induced by $Y_t$.

\iflong
\begin{lemma}
\fi
\ifshort
\begin{lemma}[$\star$]
\fi
\label{lem:ecw}
Given an $n$-vertex graph $G$ of treewidth $k$ and a bound $w$, it is possible to decide whether $G$ has edge-cut width at most $w$ in time $k^{\bigoh(w k^2)}\cdot n$. If the answer is positive, we can also output a spanning tree of $G$ of edge-cut width at most $w$.
\end{lemma}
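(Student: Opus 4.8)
The plan is to build on Lemma~\ref{lem:ecwtw}: since $\tw(G)\le \widthshort(G)$, we may assume we are given (or can compute in fixed-parameter time) a nice tree decomposition $(T,\{\beta_t\}_{t\in V(T)})$ of width $\bigoh(k)$, and if at any point $k>w$ we immediately reject, because then $\widthshort(G)\ge\tw(G)>w$. We then perform bottom-up dynamic programming that searches for a spanning tree $T^\star$ of $G$ of edge-cut width at most $w$. Conceptually, constructing $T^\star$ amounts to deciding, for every edge of $G$, whether it belongs to $T^\star$ or to the feedback edge set $F=E(G)\setminus E(T^\star)$, subject to $G-F$ being a spanning tree, while ensuring that for every vertex $v$ the number of fundamental cycles of $F$ through $v$ (its \emph{load}) stays below $w$. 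Processing $G$ along the decomposition, at a node $t$ we will have committed to the tree/feedback status of all edges of $G_t=G[Y_t]$.

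The records at a node $t$ must capture exactly the information about the processed part that is relevant for the remaining choices. First, as in every spanning-tree/forest DP, we store the partition of $\beta_t$ into the connected components of the partial forest $T^\star\cap G_t$ (there are $k^{\bigoh(k)}$ such partitions, and for a valid partial solution every component must meet $\beta_t$). Second, for each $v\in\beta_t$ we store its current load in $\{0,\dots,w-1\}$, discarding any record whose load would reach $w$. The genuinely new ingredient concerns feedback edges whose two endpoints are not yet connected in the partial forest, or whose tree-path has not yet been fully routed: such a \emph{pending} edge does not yet have a determined fundamental cycle, but it is committed to route out of the processed part through a specific boundary vertex (port) of $\beta_t$. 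We therefore record, for each port, how its pending edges will eventually leave toward the future; encoding this as, for each ordered pair of bag vertices, a bounded list of at most $w$ pending routes yields a record size of $k^{\bigoh(wk^2)}$. Since a nice tree decomposition of width $\bigoh(k)$ has $\bigoh(n)$ nodes and each transition (including join) processes at most the square of the number of records, the total running time is $k^{\bigoh(wk^2)}\cdot n$.

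The transitions would follow the five node types. Leaf and introduce-vertex nodes are trivial (a fresh singleton component of load $0$). At an introduce-edge node for $xy$ we branch: either we place $xy$ in $T^\star$, which is admissible only if $x,y$ lie in distinct components and then merges them; or we declare $xy\in F$. In the latter case, if $x$ and $y$ are already in the same component we can read off the (portion of the) fundamental cycle inside $G_t$ and increment the loads along it, raising in particular the recorded loads of the relevant bag vertices and of the segments joining them, whereas if they lie in different components we register a new pending edge anchored at the corresponding ports. When two components later merge (via an introduce-edge or a join), every pending edge between them closes into a cycle and we add its load contribution. The subtle step is the forget node for a vertex $v$: we must verify that $v$'s load is final and below $w$, re-anchor every pending edge still exiting through $v$ to another port of its component, and correctly fold the future load that such pending edges will impose onto the forgotten vertices lying on the relevant internal tree-paths of the component. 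Join nodes combine two records on a common bag by unioning the partial forests (checking acyclicity), summing loads, and merging pending-edge routes.

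The main obstacle, and the reason the records are more involved than in standard connectivity DPs, is precisely that a fundamental cycle is a global object: the load it contributes to a vertex need not be determined until both endpoints of the feedback edge and the entire tree-path between them have been processed. A vertex can thus accrue further load from feedback edges discovered long after it has been forgotten, namely when two components it sits between are finally connected. Making the DP correct therefore hinges on (i) tracking pending feedback edges only through the ports by which they currently leave the processed region, (ii) re-anchoring them as ports are forgotten, and (iii) charging their eventual load to the correct forgotten interior vertices, all while keeping the number of distinct routes, and hence the record size, bounded by a function of $w$ and $k$. Once this bookkeeping is in place, correctness follows by the standard induction over the tree decomposition, and an optimal spanning tree is recovered by tracing back the choices that realize an accepting record at the root.
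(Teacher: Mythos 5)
There is a genuine gap, and it sits exactly at the point your proposal flags as ``the main obstacle.'' Your records account for the global nature of fundamental cycles only through \emph{pending} feedback edges, i.e.\ edges that have already been committed to $F$ but whose cycles are not yet closed. However, the load of a vertex $v$ can also come from feedback edges that are \emph{introduced only after $v$ has been forgotten}: once $v$ is forgotten, every edge incident to $v$ has been processed, but a later feedback edge $e=xy$ (with $x,y$ far from $v$ in the decomposition) can still have its tree-path route through $v$'s component, entering and leaving it at bag vertices and passing through $v$ in the interior. At the moment $v$ is forgotten, $e$ is not pending, so none of your mechanisms (i)--(iii) applies to it; $v$'s load counter is frozen and $v$ disappears from the record, so the $+1$ that $e$'s cycle later imposes on $v$ is never checked against $w$. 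Concretely, take a theta graph: two vertices $s,t$ joined by $m$ internally disjoint paths of length at least $2$ (treewidth $2$). Every spanning tree keeps one $s$--$t$ path intact, and the interior vertices of that path lie on $m-1$ fundamental cycles, so $\widthshort \approx m$; yet a decomposition can forget all those interior vertices before any of the $m-1$ feedback edges is introduced, and your DP would then accept with $w$ much smaller than $m$. Note also the internal inconsistency: you first require, at a forget node, to ``verify that $v$'s load is final,'' and later correctly observe that loads are \emph{not} final at forget time --- the proposal never resolves this tension.

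The paper closes this hole by \emph{anticipation}: besides the analogue of your pending edges (the multiset $\past$), each record carries a multiset $\fut$ of guessed routes --- sequences of entry/exit vertex pairs in the bag --- for fundamental cycles of feedback edges that have \emph{not yet appeared} in $G_t$. These guesses are created when vertices are introduced (branching on how many future cycles will pass through them), updated as vertices are forgotten (so a forgotten vertex's total load, past plus anticipated future, is checked at forget time), and verified when the corresponding feedback edges finally materialize; records whose guesses are never realized die out. This guessing is what keeps the record size bounded ($\le kw$ cycles touch a bag, giving $2^{\bigoh(wk^2\log k)}$ records) while still charging load correctly to vertices long after they are forgotten. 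Your proposal needs this ingredient (or a substitute, e.g.\ tracking per-component Steiner topologies of bag vertices with capped segment loads); as written, the induction ``correctness follows by the standard induction over the tree decomposition'' does not go through because the invariant is simply false for forgotten vertices.
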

Using the relation between treewidth and edge-cut width above, we immediately obtain:
\begin{theorem}
\label{thm:computeecw}
Given a graph $G$, the edge-cut width $\widthshort(G)$ can be computed time $\widthshort(G)^{\widthshort(G)^{3}}\cdot n$.
\end{theorem}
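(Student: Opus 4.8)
The plan is to reduce the computation of $\widthshort(G)$ to repeated applications of the decision procedure from Lemma~\ref{lem:ecw}, using Lemma~\ref{lem:ecwtw} to keep the relevant treewidth under control. Concretely, I would iterate over candidate values $w=1,2,3,\dots$ and maintain the invariant that, upon reaching a given $w$, we have already certified $\widthshort(G)>w-1$. For each $w$, the first step is to run a standard fixed-parameter algorithm for treewidth (such as Bodlaender's linear-time algorithm) to decide whether $\tw(G)\le w$ and, if so, to output a tree decomposition of width $k\le w$.

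The key observation is that Lemma~\ref{lem:ecwtw} gives $\tw(G)\le\widthshort(G)$, which makes the two sub-steps interact cleanly. If the treewidth test fails, i.e.\ $\tw(G)>w$, then $\widthshort(G)\ge\tw(G)>w$, so we may safely advance to $w+1$ without ever invoking the (more expensive) edge-cut width procedure. If the test succeeds, we feed the width-$k$ tree decomposition into the algorithm of Lemma~\ref{lem:ecw} with bound $w$; this decides whether $\widthshort(G)\le w$ and, in the positive case, returns a spanning tree attaining this width. Since $\tw(G)\le\widthshort(G)$, the smallest $w$ with $\widthshort(G)\le w$ also satisfies $\tw(G)\le w$, so at $w=\widthshort(G)$ both sub-steps succeed and we output $w$ together with the witnessing spanning tree; for every smaller $w$ the procedure of Lemma~\ref{lem:ecw} correctly reports a negative answer. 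This establishes correctness.

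For the running time, observe that the loop terminates at $w=\widthshort(G)$, so there are at most $\widthshort(G)$ iterations. In each iteration the treewidth computation costs $2^{\bigoh(w^3)}\cdot n$, while the call to Lemma~\ref{lem:ecw}, invoked only when $k\le w\le\widthshort(G)$, costs $k^{\bigoh(w k^2)}\cdot n\le\widthshort(G)^{\bigoh(\widthshort(G)^3)}\cdot n$. Both bounds are dominated by $\widthshort(G)^{\bigoh(\widthshort(G)^3)}\cdot n$, and summing over the at most $\widthshort(G)$ iterations absorbs the extra factor, yielding the claimed $\widthshort(G)^{\widthshort(G)^3}\cdot n$ running time.

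The argument is essentially a search wrapper around Lemma~\ref{lem:ecw}, so I do not expect a genuine obstacle: the only point requiring care is that Lemma~\ref{lem:ecw} needs a tree decomposition of small width as input, and it is precisely Lemma~\ref{lem:ecwtw} that guarantees such a decomposition exists, with width at most the target value, exactly in the regime where the answer can be positive. Coupling the iteration variable $w$ and the treewidth bound $k$ correctly---so that the exponent stays at $\widthshort(G)^3$ rather than blowing up---is the one bookkeeping detail to get right.
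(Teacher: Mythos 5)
Your proposal is correct and is essentially the paper's own argument: the paper derives Theorem~\ref{thm:computeecw} immediately from Lemma~\ref{lem:ecw} together with the bound $\tw(G)\leq\widthshort(G)$ of Lemma~\ref{lem:ecwtw}, exactly the search wrapper you describe (your explicit per-iteration exact treewidth test is only a cosmetic variant, since Lemma~\ref{lem:ecw} already computes an approximate tree decomposition internally and $\tw(G)\leq\widthshort(G)$ keeps its running time within the claimed bound).
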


\iflong
\begin{proof}[of Lemma \ref{lem:ecw}]
\fi
\ifshort
\begin{proof}[Proof Sketch of Lemma \ref{lem:ecw}]
\fi
Without loss of generality, we assume that $G$ is connected. Using state-of-the-art approximation algorithms~\cite{BodlaenderDrangeDregiFominLokshtanov16,2approxtw}, we first compute a ``nice'' tree decomposition $(T,\{\beta_t\}_{t\in V(T)})$ with root $r \in V(T)$ of width $k = \bigoh(\tw(G))$ in time $2^{\bigoh(k)}\cdot n$.

%We will assume that $G$ is connected. Otherwise, the width of $G$ is the maximum width of any of its components.
%Using the algorithm by Bodlaender et al. \cite{BodlaenderDrangeDregiFominLokshtanov16},
%upon input a graph $G$ and a bound $w$,
%we first compute a nice tree decomposition $(T,\{\beta_t\}_{t\in V(T)})$ with root $r \in V(T)$ of $G$ of width at most $k \leq 5\tw(G)+4$, 
%% for some fixed $g'$ such that $\tw(G) \leq g(w) \leq g'(w)$ 
%in time $2^{\bigoh(k)}\cdot n$.
%Hence, $k = \bigoh(\tw(G))$.
%%If such a decomposition does not exist, we may conclude that $\widthshort(G) > w$.
%%Otherwise, $\widthshort(G) \leq $

On a high level, the algorithm relies on the fact that if $G$ has \width\ at most~$w$, then at each bag $\beta_t$ the number of unique paths contributing to the \width\ of vertices in $\beta_t$ is upper-bounded by $|\beta_t| w \leq kw$.
%On a high level, the algorithm relies on the fact that if $G$ has \width\ at most $w$, then at every bag $\beta_t$, there can only be at most $|\beta_t| w \leq kw$ edges in $G$ (both within and outside of $G_t$) that contain a vertex from the bag $\beta_t$ of the tree decomposition. \todo{R: to discuss--this sentence is not clear to me. Co: Better?}
Otherwise, at least one of the vertices in $\beta_t$ would lie on more than $w$ cycles.
We can use this to branch on how these at most $kw$ edges are routed through the bag.

%The central property in the dynamic program is that, if $G$ has edge-cut width at most $w$, then for every bag $\beta_t$, there can only be at most $|\beta_t| w \leq kw$ edges in $G$ (both within and outside of $G_t$) that are connected through the bag $\beta_t$ of the tree decomposition. Otherwise, at least one of the vertices in $\beta_t$ would lie on more than $w$ cycles.
%We can then branch on how these at most $kw$ edges are routed through the bag. The exact setup is as follows:

At each vertex $t \in T$ of the tree decomposition, we store \emph{records} that consist of:
\begin{itemize}[topsep=0pt]
\item  an acyclic subset $F$ of edges of $G[\beta_t]$, 
\item a partition $\mathcal{C}$ of $\beta_t$, and 
\item two multisets $\fut,\past$ of sequences of vertex-pairs $(u,v)$ from $\beta_t$, with the following property:
\begin{itemize}[topsep=0pt]
\item Every vertex of $\beta_t$ appears on at most $w$ distinct $u$-$v$ paths, where $(u,v)$ is a pair of vertices in a sequence in $\fut$ or $\past$.
\item $v_i$ and $u_{i+1}$ are not connected by an edge in $\beta_t$.
\end{itemize}
\end{itemize}

The semantics of these records are as follows:
For every spanning tree of width at most $w$, the record describes the intersection of the solution with $G[\beta_t]$, and the intersection of every fundamental cycle of this solution with $G[\beta_t]$.
We encode the path that a cycle takes through $G[\beta_t]$ via a sequence of vertex pairs that indicate where the path leaves and enters $G[\beta_t]$ from the outside (it may be that these are the same vertex).
More precisely, $\past$ contains those cycles that correspond to an edge that has already appeared in $G_t$, whereas $\fut$ corresponds to those cycles that correspond to an edge not in $G_t$.
In particular, this allows to reconstruct on how many cycles a vertex of~$\beta_t$ lies. 
The partition $\mathcal{C}$ says which vertices of $\beta_t$ are connected via the solution in $G_t$.

To be more precise, let $t \in T$ and let $S$ be an acyclic subset of edges of $G$ that has width at most $w$ on $G_t$ (that is, each vertex of $S$ lies on at most $w$ fundamental cycles of $S$ in $G_t$). We call such $S$ \emph{partial solutions} at $t$. Then, we let the \emph{$t$-projection} of $S$ be defined as $(F,\mathcal{C},\fut,\past)$, where 
\begin{itemize}[topsep=0pt]
\item $F = S \cap G[\beta_t]$.
\item $\mathcal{C}$ is a partition of $F$ according to the connected components of $S$ in $G_t$.
\item Let $C_e$ be a fundamental cycle of $S$ in $G$ corresponding to the edge $e \in G - S$. Then, there is a sequence $P_e  = ((u_1,v_1),\ldots,(u_t,v_t))$ in either $\fut$ or $\past$ of vertex pairs such that the intersection of $C_e$ with $S$ traverses $F$ along the unique $u_i$-$v_i$ paths in the order they appear in $P_e$ (note that $u_i = v_i$ is possible, in which case the path contains just the vertex $u_i$).
\item For each fundamental cycle $C_e$ of $S$ in $G$, if $e \in G_t$, then $P_e \in \past$, otherwise, $P_e \in \fut$.
\end{itemize}
Note that $P_e$ can (and often will) be the empty sequence $P_e = \emptyset$.
Moreover, we assume that the correspondence between $\fut \cup \past$ and the edges in $G-S$ is bijective, in the sense that if two edges $e,e'$ produce the same sequence $P_e=P_{e'}$, then $P_e$ and~$P_{e'}$ occur as two separate copies in $\fut \cup \past.$

%Since there are at most $O(k^2)$ edges in $\beta_t$, there are $2^{O(k^2)}$ choices for $F$,
%and since $F$ has less than $k$ edges (since $F$ is acyclic), there are at most $O(k^k)$ partitions of $F$.
%Each sequence $P_e$ is determined by at most $k$ pairs, and for each pair, there are at most $k^2$ choices,
%yielding $k^{O(k)}$ possible sequences $P_e$ overall. 
%As there can be at most $kw$ edges that lie on fundamental cycles going through $F$,
%we choose $kw$ out of these $k^{O(k)}$ sequences, which are $k^{O(k^2w)}$.

The encoding length of a single record is $\bigoh(wk^2\log k)$, dominated by the at most $kw$ sequences $P_e$ of $k$ pairs of vertices each, with indices having $\bigoh(\log k)$ bits.
Overall, the number of records is hence bounded by $2^{\bigoh(w k^2 \log k)}$. 

For each $t\in T$, we store a set of records $\mathcal{R}(t)$ that has the property that $\mathcal{R}(t)$ contains the set of all $t$-projections of spanning trees of width at most $w$ (that is, projections of solutions of the original instance). In addition, we require for every record in $\mathcal{R}(t)$ that there is a partial solution $S$ of $G_t$ of width at most $w$ that agrees with $F,\mathcal{C}$ and $\past$ of the record. In this case, we call $\mathcal{R}(t)$ \emph{valid}.
Supposing correctness of this procedure, $G$ is a YES-instance if and only if $(F_r,\mathcal{C}_r,\past_r,\fut_r) \in \mathcal{R}(r)$, with $F_r = \mathcal{C}_r = \fut_r = \emptyset$, $\past_r = \{\emptyset^{m-n}\}$, and a NO-instance otherwise.

\ifshort
To conclude the proof, it now suffices to compute $\mathcal{R}(t)$  in a leaf-to-root fashion.
\fi
\iflong
We compute $\mathcal{R}(t)$ bottom-up along the nice tree-decomposition depending on the type of the node $t$ as follows:

\textbf{At a leaf-node,} per convention, $\beta_t = \emptyset$, and since $G_t$ is the empty graph, any spanning tree $S$ has width at most $w$ on $G_t$. This implies that any $t$-projection $(F,\mathcal{C},\past,$ $\fut)$ of such $S$ satisfies $F = \mathcal{C} = \past = \emptyset, \fut = \{\emptyset^{n-m}\}$.
It therefore suffices to set $\mathcal{R}(t) = \{(\emptyset,\emptyset,\emptyset,\{\emptyset^{n-m}\})\}$, and this is valid.

\textbf{At an introduce-node,} let the vertex introduced at $t$ be $v \in G$, and let $s$ be the unique child of $t$ in $T$. By definition, $\beta_t = \beta_s \cup \{v\}$.
We assume by inductive hypothesis that $\mathcal{R}(s)$ is valid.
Consider now any solution $S$ of width at most $w$ on $G_t$.
This solution will be of width at most $w$ also on $G_s$.
Hence, since $\mathcal{R}(s)$ is assumed valid, there is a record $(F_s,\mathcal{C}_s,\past_s,\fut_s)$ corresponding to the $s$-projection of $S$.

We first branch over the way that the edges incident with $v$ in $G[\beta_t]$ extend $F_s$.
Call this new set of edges $E_v$.
During this process, we discard any choice of $E_v$ that connects vertices within the same connected component as indicated by $\mathcal{C}_s$.

Furthermore, we discard any choice that implies cycles in the solution via $\fut$:
If there is an entry in $\fut_s$ that contains two consecutive pairs $(u,u'),(w,w')$ such that $u'$ and $w$ are now in the same component of $\mathcal{C}$ (that is, were connected by adding $v$ to $G_s$), and one of $u'$ or $w$ is not a neighbor of $v$, then this would imply two $u'$-$w$ paths: $u'$ and $w$, but not any of the vertices on the paths $u'$-$v$ and $v$-$w$ lie on the fundamental cycle corresponding to the entry in $\fut_s$ containing $(u,u'),(w,w')$, yielding two paths: One through the cycle, the other through $v$ via $E_v$. Therefore, this choice of $E_v$ can be discarded.

Then, for every edge $(v,u)$ incident with $v$ that was not chosen into $E_v$, there must be a sequence of pairs $P$ in $\fut_s$ such that the last vertex in the last pair of the sequence $P$ is $u$, otherwise we may discard $E_v$ (since the corresponding fundamental cycle wasn't reflected in $\fut_s$.)
We branch over all ways of choosing $P_1,\ldots,P_d$ for each edge $e_1,\ldots,e_d$ incident to $v$ that is not in $E_v$.
For each $i = 1,\ldots,d$, if $P = P_i$ just consists of the single pair $(u,u)$, we add the single pair $(v,v)$ to $P$, and move $P$ to $\past$ (since the feedback edge $(v,u)$ is now part of $G_t$).
Otherwise, if the first pair $(w,w')$ in $P$ is distinct from $(u,u)$, we add the pair $(v,w')$ to $P$, remove $(w,w')$ from $P$, and add $P$ to $\past$.

We now update $\past$ and $\fut$ as follows: 
If there is a consecutive pair $(u,u'),(w,w')$ in an element of $\past_s$ or $\fut_s$ such that $u'$ and $w$ are neighbors of $v$, replace the subsequence $(u,u'),(w,w')$ by $(u,w')$: any other choice of connecting $u'$ and $w$ through a path than directly via $v$ would imply a cycle. 
In any case, add the resulting sequence to $\past$ or $\fut$, respectively.

We then branch over the choices of extending fundamental cycles along $v$: 
For each pair in a sequence in $\past$ or $\fut$ that contains a neighbor $u$ of $v$ connected via $E_v$, branch over whether or not to route this fundamental cycle via $v$ by replacing $(u,w)$ by $(v,w)$ or $(w,u)$ by $(w,v)$, respectively. 

If during any of the choices for $E_v,P_1,\ldots,P_d$ and the extensions of the fundamental cycles via $v$, the solution would have to route more than $w$ cycles over any vertex of $\beta_t$ (as can be checked by tracing out all the pairs in the sequences now contained in $\fut$ and $\past$), discard the choice.
If there is no way to choose the above without exceeding the width bound, discard the entire choice of record and consider the next record in $\mathcal{R}(s)$.

If this is not the case, then, for a choice of $E_v$ (i.e., how to extend $F_s$), $P_1,\ldots,P_d$ (i.e., how to route the new edges in $G_t$ in $\past$) and a choice of extending the existing cycles in $\past_s$ and $\fut_s$ to in-- or exclude $v$,
we branch over how many additional fundamental cycles $v$ outside of $G_t$ will be part of,
and add as many copies of the sequence consisting just of $(v,v)$ to $\fut$, simultaneously decreasing the multiplicity of $\emptyset$ in $\fut_s$ by as many, and adding the result to $\fut$.

Finally, add $(F_s \cup E_v, \mathcal{C},\past,\fut)$ to $\mathcal{R}(t)$, and consider the next entry of $\mathcal{R}(s)$.
Since any partial solution of width at most $w$ on $G_t$ will have to extend its $s$-projection in one of the above ways, this generates all possible $t$-projections (and possibly some additional records with the same $F,\mathcal{C},\past$). In particular, the generated set $\mathcal{R}(t)$ is valid.
This completes the description of the introduce step. 

The running time of this step is dominated by branching over the sequences $P_1,\ldots,P_d$.
Since $d\leq k$ and there are at most $kw$ sequences in total, we have $(kw)^k = 2^{\bigoh(wk\log k)}$ choices at most, for each of the $2^{\bigoh(wk^2\log k)}$ records in $\mathcal{R}(s)$, and processing each choice only adds a lower-order term in the running time. Therefore, this step takes time $2^{\bigoh(wk^2\log k)}$.

\textbf{At a forget-node}, let the vertex forgotten at $t$ be $v \in G$, and let $s$ be the unique child of $t$ in $T$. By definition, $\beta_t = \beta_s - \{v\}$.
We assume by inductive hypothesis that $\mathcal{R}(s)$ is valid,
and let $(F_s,\mathcal{C}_s,\past_s,\fut_s) \in \mathcal{R}(s)$.

If $\{v\} \in \mathcal{C}$ (that is, $v$ is a single component in the intersection of any solution that projects to the current record with $\beta_t$), then discard the choice for the record and consider the next element of $\mathcal{R}(s)$. In this case, the component that contains $v$ in any partial solution conforming with the record could never be completed to form a connected subgraph.

If $(v,v)$ appears as part of a sequence in $\fut_s$ or $\past_s$, remove $(v,v)$ from the sequence. 
If, on the other hand, $(v,u)$ is part of any sequence in $\past_s$ or $\fut_s$ for some $u\neq v$,
replace $(v,u)$ by $(v',u)$, where $v'$ is the next vertex on the unique $v$-$u$ path in $F_s$ (and $u=v'$ is possible).
In both cases, add the resulting sequence (which is possibly equal to the empty sequence) to $\fut$ or $\past$, respectively. If the empty sequence would be added to $\fut$, discard the current record (since there is no way of closing this fundamental cycle in the future that can involve $v$).

We remove all edges involving $v$ from $F_s$ to obtain $F$ and update $\mathcal{C}_s$ by removing $v$ from all sets it appears in, thereby obtaining $\mathcal{C}$. We add $(F,\mathcal{C},\past,\fut)$ to $\mathcal{R}(t)$. Since $G_t = G_s$, the set of solutions that contribute to the set of $t$-projections and $s$-projections doesn't change; we hence only have to update the $s$-projections to become $t$-projections, as we did, in order to obtain a valid set $\mathcal{R}(t)$.

The running time of this step is dominated by the running time at the introduce-nodes.

\textbf{At a join-node}, let $s$ and $s'$ be the two children of $t$ in $T$.
We consider all pairs of records in $\mathcal{R}(s)$ and $\mathcal{R}(s')$.
If $F_s \neq F_{s'}$ or $\fut_s \neq \fut_{s'}$, we discard the current choice.
Consider the transitive closures of the reachability relations on $\beta_t$ as induced by $\mathcal{C}_s$ and $\mathcal{C}_{s'}$, respectively. If their union (as multigraphs) produces a cycle (which could be two parallel edges $(u,v)$ and $(u,v)$ for some $u,v \in \beta_t = \beta_s = \beta_{s'}$), any solution that $s$-projected and $s'$-projected to $\mathcal{C}_{s}$ and $\mathcal{C}_{s'}$, respectively, would be cyclic on $G_t$. Hence, we may discard this choice of records.

If none of the above happens, we set $\past = \past_s \cup \past_{s'}$ as multisets, and check if this results in any of the vertices of $\beta_t$ coming to lie on more than $w$ fundamental cycles. If this is the case, we discard the current choice of records. If not, let $\mathcal{C}$ be finest common coarsening of the partitions $\mathcal{C}_s$ and $\mathcal{C}_{s'}$ (that is, the result of merging any two components that share a vertex, and exhausting this process). We let $F=F_s (=F_{s'}), \fut = \fut_s (=\fut_{S'})$, and set $\mathcal{R}(t) = (F,\mathcal{C},\past,\fut)$.

By a similar token as in the previous cases, this produces all possible $t$-projections of solutions of $G_s$ and $G_{s'}$ that are also solutions for $G_t$ of width at most $w$, and hence a valid set $\mathcal{R}(t)$.

Since we have to consider pairs of records that differ in $\past$, and $\past$ dominates the size of the records, the running time at the join-nodes dominates the running time at the introduce-nodes, and is bounded by $2^{\bigoh(wk^2\log k)}$.

Overall, the running time of the algorithm is bounded by $2^{\bigoh(wk^2 \log k)}\cdot n$.
By keeping one representative of a partial solution of $G_t$ per record at each node $t$ that $t$-projects to the current record, we can successively build a solution of width at most $w$.
\fi
\qed \end{proof}

\section{Algorithmic Applications of Edge-Cut Width}
\label{sec:problems}

Here we obtain algorithms for the following five \NP-hard problems (where a sixth problem mentioned in the introduction, \textsc{Precoloring Extension}, is a special case of \textsc{List Coloring}, and the fixed-parameter tractability of \textsc{Bayesian Network Structure Learning} and \textsc{Polytree Learning} follows from previous work~\cite{GanianKorchemna21}). In all of these, we will parameterize either by the \width\ of the input graph or of a suitable graph representation of the input. Recall that all problems are known to be \W{1}-hard when parameterized by tree-cut width~\cite{Ganian0S15,GozupekOPSS17,BredereckHKN19,GanianO21}, and here we will show they are all fixed-parameter tractable w.r.t.\ \width.

As a unified starting point for all algorithms, we will apply Theorem~\ref{thm:computeecw} to compute a minimum-width spanning tree $T$ of the input graph (or the graph representation of the input) $G$; the running time of Theorem~\ref{thm:computeecw} is also an upper-bound for the running time of all algorithms except for \textsc{MaxSRTI}, which has a quadratic dependence on the input size. Let $r$ be an arbitrarily chosen root in $T$. For each node $v\in V(T)$, we will use $T_v$ to denote the subtree of $T$ rooted at $v$. 
Without loss of generality, in all our problems we will assume that $G$ is connected.

The central notion used in our dynamic programming framework is that of a \emph{boundary}, which fills a similar role as the bags in tree decompositions. Intuitively, the boundary contains all the edges which leave $T_v$ (including the vertices incident to these edges).

\begin{definition}
For each $v\in V(T)$, the \emph{boundary} $\partial(v)$ of $T_v$ is the edge-induced subgraph of $G$ induced by those edges which have precisely one endpoint in $T_v$. 
\end{definition}
		
\newcommand{\forgottenv}{\mathcal{Y}}
\newcommand{\forgottenG}{\mathcal{G}}

Observe that for each $v\in V(T)$, $|E(\partial(v))|\leq \widthshort(G)$ and $|V(\partial(v))|\leq 2\widthshort(G)$. It will also sometimes be useful to speak of the graph induced by the vertices that are ``below'' $v$ in $T$, and so we set  $\forgottenv_v=\{w~|~w$ is a descendant of $v$ in $T\}$ and $\forgottenG_v=G[\forgottenv_v]$; we note that $v\in \forgottenv_v$. Observe that $\partial(v)$ acts as a separator between vertices outside of~$\forgottenv_v\cup V(\partial(v))$ and vertices in $\forgottenv_v\setminus V(\partial(v))$ 
%\todo{V: isn't it a separator between vertices outside of $\forgottenv_v$ and vertices in $\forgottenv_v$?}.

%
%\pbDef{List Coloring}
%{A graph $G = (V, E)$, for each vertex $v \in V$ a list $L(v)$ of permitted colors.}{Does $G$ admit a proper coloring $c$ where for each vertex $v \in V$ it holds $c(v) \in L(v)$?}
%
%For our next problem, the graph representation we will use is the classical \emph{incidence graph}.
%
%\pbDef{Boolean CSP}
%{A set of variables $X$ and a set of constraints $\mathcal{C}$, where each constraint $C \in \mathcal{C}$ is a tuple $C=(S,R)$ with a constraint scope $S \subseteq X$ and a constraint relation $R \subseteq \{0,1\}^{|S|}$}{Is there an assignment $\sigma: X \rightarrow \{0,1\}^{|X|}$ such that all constraints in $\mathcal{C}$ are satisfied?}
%
%
%\pbDef{Stable roommates with ties and incomplete preferences (SRTI)}
%{}
%{}
%
%\pbDef{MINCCA}
%{}
%{}

\iflong
\subsection{Edge Disjoint Paths}
\fi
\ifshort
\smallskip
\noindent \textbf{Edge Disjoint Paths.} \quad
\fi
We start with the classical \textsc{Edge Disjoint Paths} problem, which has been extensively studied in the literature. While its natural counterpart, the \textsc{Vertex Disjoint Paths} problem, is fixed-parameter tractable when parameterized by treewidth, \textsc{Edge Disjoint Paths} is \W{1}-hard not only when parameterized by tree-cut width~\cite{GanianO21} but also by the vertex cover number~\cite{FleszarMS18}.

%\todo{R: Cornelius adds problem boxes.}
\pbDef{\textsc{Edge Disjoint Paths} (EDP)}
{A graph $G$ and a set $P$ of \emph{terminal pairs}, i.e., a set of subsets of $V(G)$ of size two.}
{Is there a set of pairwise edge disjoint paths connecting every set of terminal pairs in $P$?}

A vertex which occurs in a terminal pair is called a \emph{terminal} and a set of pairwise edge disjoint paths connecting every set of terminal pairs in $P$ is called a \emph{solution}. 

\iflong
\begin{theorem}\label{thm:EDP}
\fi
\ifshort
\begin{theorem}[$\star$]
\fi
\EDP\ is fixed-parameter tractable when parameterized by the \width\ of the input graph. 
\end{theorem}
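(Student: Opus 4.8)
The plan is to compute, via Theorem~\ref{thm:computeecw}, a spanning tree $T$ of $G$ with $\widthshort(G,T)=w:=\widthshort(G)$, to root it at an arbitrary vertex $r$, and then to run a leaf-to-root dynamic program along $T$ in which the boundary $\partial(v)$ plays the role of a separator. The starting observation is that $\partial(v)$ consists of at most $w$ edges, and since a solution is a family of \emph{edge-disjoint} paths, any fixed solution uses each boundary edge at most once; consequently at most $w$ path-segments of the solution can cross $\partial(v)$. Thus the way in which any solution interacts with $T_v$ is captured by a bounded-size interface, and the whole difficulty reduces to propagating a bounded description of this interface up the tree.

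First I would fix what a record at a node $v$ must remember. Restricting a hypothetical solution to $\forgottenG_v$ decomposes each solution path into \emph{segments} whose endpoints are either terminals lying in $\forgottenv_v$ or the inner endpoints of boundary edges through which the path leaves $T_v$. A record for $v$ therefore stores (i) the subset $S\subseteq E(\partial(v))$ of boundary edges used, and (ii) how the segments inside $\forgottenG_v$ link the elements of $S$ and the relevant terminals to one another, i.e.\ a pairing on $S$ in which two boundary edges may be joined by an inside segment (an ``excursion''), or a boundary edge may be joined to a terminal of $\forgottenv_v$. In addition the record carries a flag recording whether all terminal pairs with both endpoints in $\forgottenv_v$ have already been connected.

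The point that makes \width\ (rather than, say, the vertex cover number, under which \EDP\ is \W{1}-hard) amenable to an FPT algorithm, and in particular keeps the number of records bounded by a function of $w$ alone, is that we never branch over the identities of terminal pairs among the possibly many pairs of the instance. Indeed, call a terminal pair \emph{split at $v$} if exactly one of its two terminals lies in $\forgottenv_v$; this is a property determined by $T$ and $P$, not a guess, and any split pair must route its path across $\partial(v)$, so if more than $w$ pairs are split at $v$ the instance is infeasible and we may abort. Hence at each node there are at most $w$ split pairs, all \emph{known}, and the only identities a record needs to track are the assignments of boundary edges to these $\le w$ known split pairs. Every other interaction with the boundary is an excursion by a pair whose two terminals lie on the same side; such an excursion only needs the two boundary edges it uses to be joined, with no reference to which pair performs it, since the remaining routing will be forced by the pairing alone. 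Altogether a record is described by $\bigoh(w)$ boundary edges together with a pairing and an assignment to the $\le w$ labelled split pairs, so the number of records per node is $w^{\bigoh(w)}$.

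The transitions follow the usual leaf-to-root scheme: at each node $v$ I would combine the records of the children of $v$ together with the tree edges joining $v$ to its children and the non-tree edges that become internal to $\forgottenv_v$ at $v$, discarding any combination that reuses an edge, closes a forbidden cycle, or connects two boundary edges inconsistently with the recorded pairings; whenever a split pair ceases to be split (its second terminal enters $\forgottenv_v$) or a both-inside pair is completed, I would verify that its two ends are correctly linked and update the pairing and the completion flag. An accepting record at the root $r$, where $\partial(r)=\emptyset$ and all pairs have been connected, certifies a YES-instance, and by keeping one witness segment family per record the actual paths can be reconstructed. The main obstacle I expect is precisely the bookkeeping of the previous paragraph: proving that the records capture exactly the realizable interfaces---so that merging child records neither loses a genuine solution nor admits a spurious one---while rigorously maintaining the invariant that only the $\le w$ determined split pairs ever carry an identity, and that excursions and both-inside detours can be matched up purely through pairings. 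Once this invariant and the corresponding soundness and completeness of each transition are established, the running time is $w^{\bigoh(w)}\cdot n^{\bigoh(1)}$, which together with Theorem~\ref{thm:computeecw} yields fixed-parameter tractability parameterized by $\widthshort(G)$.
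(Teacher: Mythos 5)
Your overall route is the same as the paper's: compute an optimal spanning tree via Theorem~\ref{thm:computeecw}, root it, and run a leaf-to-root dynamic program whose records describe how a hypothetical solution crosses the cut $E(\partial(v))$. Your record contents correspond almost item-for-item to the paper's triples $(S,D,R)$: your assignments of boundary edges to the (at most $w$, explicitly known) split pairs are the paper's ``single'' set $S$; your anonymous pairings of boundary edges joined by a segment inside the subtree are the paper's ``donated'' pairs $D$; and the pairings demanded by a pair with both terminals inside whose path must detour outside are the paper's ``received'' pairs $R$. The counting argument you emphasize---only split pairs carry identities, there are at most $w$ of them or we reject, everything else is anonymous, giving $2^{\bigoh(w\log w)}$ records per node---is exactly the paper's. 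One caution: your single ``pairing'' must keep offer-pairs (two boundary edges already connected inside) separate from demand-pairs (two boundary edges the outside must connect); the paper keeps these in distinct sets $D$ and $R$ precisely because conflating them would make the merge step unsound. Your text does implicitly distinguish ``excursions'' from ``both-inside detours,'' so I read this as a presentational blur rather than an error.

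There is, however, one concrete gap in your running-time claim. The decomposition here is a spanning tree of $G$, so a node $v$ may have unboundedly many children (a star has \width\ $1$ and a root of degree $n-1$), and your merge step---``combine the records of the children of $v$ \ldots discarding any combination that \ldots''---branches over all combinations of child records, which is exponential in the number of children rather than in $w$. To obtain an \FPT\ bound you need the further observation that at most $2(w-1)$ children $u$ have a boundary edge other than the tree edge $uv$ (every such edge lies in $E_{\loc}^{G,T}(v)$, and each contributes to at most two children), and that every remaining child, attached to the rest of $G$ by the single edge $uv$, can be disposed of \emph{before} the joint branching: at most one terminal pair may have exactly one endpoint in such a subtree (two or more would have to share the edge $uv$, so we reject), and such a subtree can then be contracted or deleted after consulting its own record set. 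The paper devotes an explicit preprocessing step at every internal node to exactly this reduction, after which only $\bigoh(w)$ children participate in the branching; without it, the algorithm as you describe it is not fixed-parameter tractable. The rest of your outline (leaf records, discarding inconsistent merges, acceptance at the root with $\partial(r)=\emptyset$, witness reconstruction) matches the paper's proof.
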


\newcommand{\Rec}{\text{Record}}
\newcommand{\Recs}{\mathcal{R}}

\iflong
\begin{proof}
\fi
\ifshort
\begin{proof}[Sketch]
\fi
%We proceed by leaf-to-root dynamic programming along a minimum-width spanning tree $T$ of $G$, which can be computed using Theorem~\ref{thm:computeecw} in time ****. Let $r$ be an arbitrarily chosen root in $T$. For each node $v\in V(T)$, let $T_v$ be the subtree of $T$ rooted at $v$.
We start by defining the syntax of the records we will use in our dynamic program. For $v\in V(G)$, let a record be a tuple of the form $(S,D,R)$, where:
%$\{S,D,R\}$ where:

	\begin{itemize}[topsep=0pt]
		\item $S=\{(t_0,e_0),\dots,(t_i,e_i)\}$ where for each $j\in [i]$, $t_j \in \forgottenv_v$ is a terminal whose counterpart is not in $\forgottenG_v$, $e_j\in E(\partial(v))$, and where each terminal without a partner in $\forgottenv_v$ appears in exactly one pair,
		
		\item $D,R$ are sets of unordered pairs of elements from $E(\partial(v))$, and
		%where each edge appears at most once and at most in one pair. Each edge that appears in a pair iis contained in the boundary $\partial(v)$. Intuitively, this set describes edges, which are connected in the subtree and could be "donated" to another subtree, thereby connecting two terminal nodes. 
		
%		\item $R=\{(e_1,e_2)~|~ e_1, e_2 \in E\}$, where each edge appears at most once and at most in one pair. Each edge that appears in a pair is contained in the boundary $\partial(v)$. Intuitively, this set describes the edges through which the subtree needs to "receive" help from the outside. If the edges in a pair are connected through some path outside the subtree, then they connect two terminal nodes inside the subtree.
		
		\item each edge of $E(\partial(v))$ may only appear in at most one tuple over all of these sets.
	\end{itemize}
	We refer to the edges in $S,D,R$ as \emph{single, donated} and \emph{received} edges, respectively, in accordance with how they will be used in the algorithm.
	Let $\Recs(v)$ be a set of records for~$v$. From the syntax, it follows that $|\Recs(v)|\leq 2^{\bigoh(k\log k)}$ for each $v\in V(G)$. 
	
	Let $P_v\subseteq (\forgottenv_v \cup V(\partial(v)))^2$ be a set that can be obtained from $P$ by the following three operations:
	\begin{itemize}[topsep=0pt]
	\item for some $\{a,b\}\in P$ where $a\in \forgottenv_v$, $b\not \in \forgottenv_v$, replacing $b$ by some $c\in V(\partial(v))$, and
	%\item for some $\{a,b\}\in P\cap \forgottenv_v^2$, replacing the element $\{a,b\}$ by two elements $\{a,a'\}$, $\{b', b\}$, where $a',b'\in V(\partial(v))$, and
	\item for some $a',b'\in V(\partial(v))\setminus\forgottenv_v$, adding $\{a',b'\}$ to $P_v$, and 
	\item for each $\{a,b\}\in P$ where $a,b \not \in \forgottenv_v$, remove $\{a,b\}$.
	\end{itemize}
	%Obviously, since at the root $r$ we have that $\partial(r)$ is empty, $P_r=P$. 
	To define a partial solution we need the following graph $H_v$: \begin{itemize}[topsep=0pt]
		\item First, we add $\forgottenG_v \cup \partial(v)$ to $H_v$, where $\forgottenG_v\cup \partial(v)$ is the (non-disjoint) union of these two graphs.
		%\item First, we add the vertices in $\forgottenv_v \cup V(\partial(v))$ to $V(H_v)$. 
		\item Next, we create for each edge $e \in E(\partial(v))$ a pendant vertex $v_e$ adjacent to the endpoint of $e$ that is outside of $\forgottenv_v$. Let $V_{\partial}$ denote the set of these new vertices.
%		. Let $V_{\partial}$ denote the set of these new vertices.
%		\item For each edge $e=(v_i,v_j) \in E(\partial(v))$ with $v_j \notin \forgottenv_v$ we add the edge $(v_j,v_e)$ to $E(H_v)$.
		\item Finally, we add edges to $E(H_v)$ such that $V_{\partial}$ is a clique.
	\end{itemize}
	Let a partial solution at $v$ be a solution to the instance $(H_v,P_v)$ for some $P_v$ defined as above. %, where $\forgottenG_v\cup \partial(v)$ is the (non-disjoint) union of these two graphs.
	Obviously, since at the root $r$ we have that $\partial(r)$ is empty, $P_r=P$ and $H_r = G$.
	Notice that a partial solution at the root is a solution.
%	
%	
%	Let a \emph{partial solution at $v$} be a set
%	
%	
%	To introduce the semantics of the records, 
%	 we define a \emph{partial solution at $v$} to simply be a set of pairwise edge-disjoint paths in $\forgottenG_v$ with endpoints from $P\cup (\forgottenv_v\cap V(\partial(v))$ which connecting each terminal pair in $P
	 
%	 as a set of paths that are disjoint on $\forgottenG_v$, but can be arbitrarily connected outside of $\forgottenG_v$ (using even non-edges of $G$), 
%	 and connect the terminals in $\forgottenv_v$.

	Consider then the set $\mathcal{W}$ containing all partial solutions at $v$. The $v$-\emph{projection} of a partial solution $W\in \mathcal{W}$ at $v$ is a record $(S_W,D_W,R_W)$ where:
	\begin{itemize}[topsep=0pt]
	\item $(t,e)\in S_W$ if and only if $t$ is a terminal in $\forgottenv_v$ whose counterpart $t'$ is not in $\forgottenv_v$ and~$e$ is the first edge in $E(\partial(v))$ encountered by the $t$-$t'$ path in $W$,
	\item $\{e_i,e_j\}\in D_W$ if and only if there is a path $Q\in W$ with $Q=e_i,e_{i+1},\dots,e_{j-1},e_j$ %containing a subpath $Q'=e_i,v_i,e_{i+1},\dots,e_{j-1},v_{j-1},e_j$ such that $Q\setminus \{e_i, e_j\}$ is a subgraph of $\forgottenG_v$\footnote{Note that by the syntax, it follows that $e_i$ and $e_j$ are both contained in $\partial(v)$}, and
	such that the edges in $Q\setminus \{e_i, e_j\}$ are contained in $E(\forgottenG_v)$\footnote{Note that by the syntax, it follows that $e_i$ and $e_j$ are both contained in $\partial(v)$}, and 
%	\todo{E: vertices in subpath necessary?}
	%	$P'$ in $\forgottenG_v$ whose endpoints lie in $\partial(v)$, and
%		$e_1$-$e_2$ subpath where 
%	
%	$P$ is not a 
	%	$W'=\{P'~|~P\cap \forgottenG_v\}$ contains a $e_1$-$e_2$ path, and
	\item $\{e_i,e_j\}\in R_W$ if and only if there 
	%are two pairs of terminals $\{a,a'\}, \{b,b'\}$ with $a',b' \in V(\partial(v))\setminus\forgottenv_v$ s.t.\ there is an $a$-$a'$ path in $W$ where $e_i$ is incident to $a'$ on that path and there is a $b$-$b'$ path in $W$ where $e_j$ is incident to $b'$ on that path
	is some $s$-$t$ path $Q\in W$ such that $s,t$ in $\forgottenv_v$, $e_i$ is the first edge in $E(\partial(v))$ that occurs in $Q$, and $e_j$ is the last edge in $E(\partial(v))$ that occurs in $Q$.
		
%	$P$ contains a 
%	$e_1$-$e_2$ subpath that is not a subgraph of $\forgottenG_v$ (note that since $\partial(v)$ is a separator, this is equivalent to all internal edges of that subpath lying outside of $\forgottenG_v$).
	\end{itemize}
	
	We say that $\Recs(v)$ is \emph{valid} if and only if it contains all $v$-projections of partial solutions in $\mathcal{W}$, and in addition, for every record in $\Recs(v)$, there is a partial solution such that its $v$-projection yields this record.
	
	Observe that if $\Recs(r)=\emptyset$, then $(G,P)$ is a NO-instance, while if $\Recs(r)=\{(\emptyset,\emptyset,\emptyset)\}$, then $R(r)$ is a YES-instance. To complete the proof, it now suffices to dynamically compute a set of valid records in a leaf-to-root fashion along $T$.	We note that if at any stage we obtain that a vertex $v$ has no records (i.e., $\Recs(v)=\emptyset$), we immediately reject.
	\iflong
%	In the following, if the input is a NO-instance, then $\mathcal{W}=\emptyset$. Hence, for $v \in V$ there is no $v$-projection of a solution, which means in this case $\Recs(v)$ is always valid. Therefore, when showing correctness, we will assume that the input is a YES-instance.
%	
%	The set $\mathcal{R}(T_v)$ describes the set of all valid records for a subtree $T_v$.
%	The idea of the DP approach is to compute all valid records for each subtree. If at any stage there is no valid record (the empty record being a valid record under certain circumstances), then the algorithm terminates with a negative result. If there is a valid record (that record necessarily being empty) when the root node is reached, then the algorithm returns a positive result.
%	We start with the leaves and move upwards toward the root. In each step we compute all valid records for the subtree rooted at the node $v$ as follows:
	
	\textbf{If $v$ is a leaf}, we branch over all possible valid records by setting $R = \emptyset$ and letting~$D$ vary over all subsets of $\{\{e_1,e_2\}~|~e_1,e_2 \in E(\partial(v)) \}$. 
	In the case that $v$ is a terminal, we additionally let $S$ vary over all subsets of $\{(v,e)~|~ e \in E(\partial(v)) \}$. We discard choices of $S$ and $D$ where the same edge appears more than once over both sets.
	
%	\begin{itemize}[topsep=0pt]
%		\item Case $1$: $v$ is a terminal node
%		\begin{itemize}[topsep=0pt]
%			\item $S \subseteq \{(v,e)~|~ e \in \partial(v) \}$
%			\item $D \subseteq \{\{e_1,e_2\}~|~e_1,e_2 \in  \partial(v) \}$
%			\item $R=\emptyset$
%			%\item No edge appears multiple times in a record.
%		\end{itemize}
%		\todo{Co: Shouldn't these all be subsets instead of equalities?)}
%
%		\item Case $2$: $v$ is not a terminal node
%		\begin{itemize}[topsep=0pt]
%			\item $D \subseteq \{\{e_1,e_2\}~|~e_1,e_2 \in \partial(v) \}$
%			\item $S=\emptyset$, $R=\emptyset$
%			%\item No edge appears multiple times
%		\end{itemize}
%	\end{itemize}
%	
	The set $\Recs(v)$ is trivially valid.
%	First, we observe that there are at most $k$ edges incident to a leaf node. Since each record contains each edge at most once, the size of a record is bounded by $O(k)$.
%	\todo{Co: But $R = \emptyset$ in both cases (why?), so $3^k$ should suffice?}
%	For each edge we have many options: it is either contained in $S,D$ or $R$, or it is contained in none of them. 
%	Then for each edge in $S$ we have at most $k$ possible terminal nodes to match with in $S$ giving at most $k^k$ options. For the edges in $D$ resp. $R$, we need to consider the number of possible ways to partition them into sets of $2$. This can be bounded by $k! \leq k^k$ as well. Hence, we get a worst case upper bound of $O(4^kk^{3k})$ 	
	
	\textbf{If $v$ is an internal node,} we proceed in the following way:
	First, we bound the number of children of $v$ by our parameter $k$. Then we branch over all possible combinations of records for the remaining children of $v$ to obtain $\Recs(v)$.

	We reduce the size of the subtree in the following way: Let $u$ be a child of $v$ with $E(\partial(u))=\{\{u,v\}\}$, i.e., $T_u$ has no edge that increases the size of the edge-cut width of $v$.
	\begin{itemize}[topsep=0pt]
		\item If there is no terminal pair with precisely one vertex in $T_u$, then delete $T_u$ along with all terminal pairs with both endpoints in $T_u$. 		
		\item If there is a single terminal pair $\{s,t\}$ with precisely one vertex, say $t$, in $T_u$, then replace $t$ with $v$ and delete $T_u$ along with all terminal pairs with both endpoints in $T_u$. (We remark that $v$ can be contained in multiple terminal pairs at the same time.)
		
		\item Otherwise, we correctly identify that this is a NO-instance.
		%There cannot be more than one unconnected terminal in~$T_u$, as only one edge leaves this subtree. Hence, if there is more than one unconnected terminal, then this is a NO-instance.
	\end{itemize}
	
	Since $\forgottenG_u$ is connected to the remaining graph by a single edge it can connect only one terminal in $\forgottenv_u$ with a terminal in $V\setminus \forgottenv_u$. 	
	After this step there are at most $2(k-1)$ children left because at most $2(k-1)$ subtrees rooted at a child of $v$ can contribute to the \width\ of $v$.
	
	Let $u_1, \ldots, u_\ell$ with $\ell \leq 2(k-1)$ denote the remaining children of $v$. 
	First, we compute a set $\overline{\Recs(v)}$, in the same way we would compute $\Recs(v)$ if $v$ was a leaf. Our goal is to compute $\Recs(v)$ using the local set $\overline{\Recs(v)}$ and the partial results $\Recs(u_1), \ldots, \Recs(u_\ell)$.
	
	In the following we take one record each out of $\overline{\Recs(v)}, \Recs(u_1), \ldots, \Recs(u_\ell)$ and repeat the following process for each combination of records. 
	First, we observe that each edge can appear in at most two records, because it can connect at most two subtrees.
	
	In the next step, we compute a set $D'$, which contains the longest paths which can be donated by $T_v$, for each combination of records. For this we look at the $D$-sets in our records from $\overline{\Recs(v)}, \Recs(u_1), \ldots, \Recs(u_\ell)$. We trace out the longest paths along edges in the $D$-sets of these records, which can be done in time $k^{\bigoh(1)}$ (we start at some edge $e_i$ and find its partner $e_j$ in the same $D$-set, then we look for $e_j$ in the other $D$-sets, and so on; in particular, this is not an ordinary longest-path problem).
	
	Now, we resolve each of the pairs $\{e_i,e_j\}$ in $R$ for any of the currently considered records using the paths in $D'$. Either there is a path in $D'$ connecting $e_i$ and~$e_j$, which means the pair $\{e_i,e_j\}$ can be ignored. 
	Or there are two paths connecting~$e_i$ resp.\ $e_j$ to $e_i'$ resp.\ $e_j' \in E(\partial(v))$.
	Then the pair $\{e_i',e_j'\}$ needs to be added to~$R'$. 
	In case $e_i \in E(\partial(v))$, let $e_i'=e_i$ and similarly for $e_j$.
	%, with the edges replaced with the edges in the boundary, to which they are connected. 
	In either case the used paths are deleted from $D'$.
	
	Next, we consider each pair $(s,e_i) \in S$ for any of the currently considered records. Let $(s,t) \in P$. 
	If $t \notin \forgottenv_v$ and $e_i \in E(\partial(v))$, then add $(s,e_i)$ to $S'$. In case $e_i \notin E(\partial(v))$,
	%In the case that the corresponding terminal node is not in the subtree induced by $v$, 
	we use the donated paths in $D'$ to connect $e_i$ to $e_i' \in E(\partial(v))$ and add $(s,e_i')$ to $S'$. 
	If~$(t,e_j) \in \bar{S}$ for any of the currently considered records, we proceed as if~$(e_i,e_j)\in R$. 
	
	Note that all steps are deterministic, as each edge can only appear in two sets and therefore there can only be one path starting at any edge $e$ that one could use to traverse the graph.
	
	Afterwards, we need to delete all pairs in $D'$ with $e_i$ or $e_j \notin E(\partial(v))$. 
	Finally, the tuple $(S',D',R')$ is inserted as a record in $\Recs(v)$. 
	
	Correctness follows via induction: The records of the leaves are valid. Assuming $\Recs(u_1), \ldots, \Recs(u_\ell)$ are valid, so will be the record set at $v$: It contains all possible ways in which the partial solutions of the subtrees at $u_i$ could be extended. In particular, this includes the projections of all full solutions, and by construction, every such extension will extend the combination of partial solutions of the subtrees to a partial solution of the subtree at $v$, showing validity.
	
	As for the running time: We go through each of the $n$ vertices, where $|\mathcal{R}_v|\leq 2^{\bigoh(k\log k)}$ for $v \in V$. Moreover, each vertex has at most $2(k-1)$ children, which makes for $2^{\bigoh(k^2\log k)}$ combinations when branching, and the number of combinations dominates the time each combination takes to be processed. Hence, the total running time amounts to $2^{\bigoh(k^2\log k)} \cdot n$.
\fi
\qed 
\end{proof}

\iflong
\subsection{List Coloring}
\fi
\ifshort
\smallskip
\noindent \textbf{List Coloring.}\quad
\fi
	The second problem we consider is \textsc{List Coloring}~\cite{FellowsEtAl11,Ganian0S15}. It is known that this problem is \W{1}-hard parameterized by tree-cut width. 
	A \emph{coloring} $\col$ is a mapping from the vertex set of a graph to a set of colors; a coloring is \emph{proper} if for every pair of adjacent vertices $a,b$, it holds that $\col(a) \neq \col(b)$.
	
	\pbDef{\textsc{List Coloring}}
	{A graph $G = (V, E)$ and for each vertex $v \in V$ a list $L(v)$ of permitted colors.}
	{Does $G$ admit a proper coloring \col\ where for each vertex $v$ it holds $\col(v) \in L(v)$?}
	
	\iflong
	\begin{theorem}
	\fi
	\ifshort
	\begin{theorem}[$\star$]	
	\fi
		\textsc{List Coloring} is fixed-parameter tractable when parameterized by the \width\ of the input graph.
	\end{theorem}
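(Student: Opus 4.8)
As in the other algorithms of this section, I would first invoke Theorem~\ref{thm:computeecw} to compute a spanning tree $T$ of minimum \width, root it at an arbitrary $r$, and set $k=\widthshort(G)$; the goal is then a leaf-to-root dynamic program over $T$. The decisive structural fact is that for every $v$ the boundary $\partial(v)$ contains at most $k$ edges, so at most $k$ vertices of $\forgottenv_v$ are incident to an edge leaving $T_v$; call these the \emph{ports} of $v$ (note that $v$ itself is always a port, via the tree edge to its parent). Since $\partial(v)$ separates $\forgottenv_v\setminus V(\partial(v))$ from the rest of $G$, the only way a proper list-coloring of $\forgottenG_v$ interacts with the outside is through the colors assigned to these $\le k$ ports. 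Accordingly, the natural record at $v$ would be the set of colorings $\chi$ of the ports that are \emph{realizable}, i.e., that extend to a proper list-coloring of $\forgottenG_v$.

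The obstacle is that the color palette is unbounded, so we cannot store all realizable port-colorings; this is exactly the obstruction that makes \textsc{List Coloring} \W{1}-hard for treewidth, where a bounded-size \emph{vertex} separator may still need to guess one of unboundedly many colors. Edge-cut width circumvents this because the separator is crossed by only $\le k$ \emph{edges}: any future (i.e., coloring of the rest of $G$) can forbid at most one color per boundary edge, hence at most $k$ colors in total, distributed over the ports. I would therefore view each realizable port-coloring $\chi$ as the set $\{(p,\chi(p)) : p\text{ a port}\}$ of size $\le k$ inside the universe (ports)~$\times$~(colors), and a future as a set $B$ of $\le k$ pairs from the same universe, with $\chi$ surviving $B$ precisely when $\chi\cap B=\emptyset$. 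The record I would actually store is a $k$-\emph{representative family} of the realizable port-colorings: a subfamily $\mathcal{S}$ such that whenever some realizable coloring avoids a given $B$ with $|B|\le k$, some member of $\mathcal{S}$ does too. By the representative-sets machinery for set families of bounded size, such an $\mathcal{S}$ of size $2^{\bigoh(k)}$ exists and can be computed efficiently, which keeps every record of size bounded in $k$.

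For the transitions I would distinguish, as in the \EDP\ algorithm, between the many \emph{pendant} children $u$ with $\partial(u)=\{uv\}$ and the at most $2(k-1)$ \emph{heavy} children through which a fundamental cycle of $v$ passes. A pendant child $u$ is attached to the rest of $G$ only through $v$, so its effect is summarized by whether its root can take at least two colors (then $u$ imposes no constraint on $\col(v)$), exactly one color $c_u$ (then $\col(v)\ne c_u$), or none (reject); aggregating the pendant children thus yields a single forbidden-color set for $v$, read off in polynomial time from their records (each of which need only record whether the child's root admits two colors or a unique one). The heavy children are handled by taking the product of their representative families, of total size $2^{\bigoh(k^2)}$, discarding combinations that violate a tree edge, a fundamental-cycle edge internal to $T_v$, the list $L(v)$, or the pendant constraints, projecting the surviving colorings onto the new ports of $v$, and finally re-thinning to a $k$-representative family of size $2^{\bigoh(k)}$. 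At the root $\partial(r)=\emptyset$, so a nonempty record certifies a YES-instance.

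The main obstacle is the second paragraph: justifying that a representative family of size bounded in $k$ faithfully preserves extendability across the whole run, and that it can be maintained under the product-and-project operation of the join step. Once the ``$\le k$ boundary edges $\Rightarrow$ $\le k$ future forbidden colors'' viewpoint is in place, correctness follows by the usual exchange argument for representative families, and the running time is dominated by the $2^{\bigoh(k^2)}$ products at internal nodes, giving a $2^{\bigoh(k^2)}\cdot n^{\bigoh(1)}$ bound overall. Everything else --- the pendant-child bookkeeping, the internal feedback-edge consistency checks, and the base cases at leaves --- is routine.
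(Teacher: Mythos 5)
Your proposal is correct, and it shares the paper's overall skeleton: compute an optimal spanning tree via Theorem~\ref{thm:computeecw}, root it, and run a leaf-to-root dynamic program over boundaries, folding each pendant child $u$ (those with $E(\partial(u))=\{uv\}$) into a forbidden-color constraint on $v$ and branching over the at most $2(k-1)$ remaining children. Where you genuinely diverge is in how the unbounded palette is tamed. The paper uses an elementary device: a record assigns to each boundary vertex either a concrete color or a wildcard $\delta$ meaning ``this vertex can always be recolored'' (applicable when its list exceeds its degree after pendant reduction, so a greedy recoloring always exists), and validity is defined up to a domination order $\preceq$ under which $\delta$ absorbs concrete colors; since a vertex's color is only ever recorded concretely when its (reduced) list has size $\bigoh(k)$, the record count is $2^{\bigoh(k\log k)}$ directly. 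You instead observe that any future can forbid at most one color per boundary edge, hence at most $k$ port--color pairs, and you maintain a $k$-representative family (Bollob\'as / Fomin--Lokshtanov--Saurabh style) of the realizable port-colorings, of size $\binom{2k}{k}=2^{\bigoh(k)}$. Your exchange argument does go through: given a solution and its induced future, one swaps in the children's representative colorings one at a time, each time recomputing the (still at most $k$-sized) future of the next child with respect to the already-exchanged colorings, and transitivity of representation justifies re-thinning at every node. The representative-family route buys a systematic, reusable argument and slightly smaller records; the paper's $\delta$-trick buys a self-contained proof with no external machinery.

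One detail you should make explicit: in your join step, the product of the children's families does not determine $\col(v)$ itself, and $L(v)$ may be unbounded, so you cannot enumerate all choices for $v$'s color. This is not a real obstacle --- by your own key observation the future forbids at most $k$ colors at $v$, so per combination it suffices to keep any $k+1$ colors of $L(v)$ that survive the pendant constraints and the internal adjacency checks (or all of them, if fewer) --- but as written, the step ``projecting the surviving colorings onto the new ports of $v$'' silently presupposes a color for $v$ that no child record supplies. This is precisely the role played in the paper by the local records $\overline{\mathcal{R}}(v)$ computed at $v$ ``as if it were a leaf,'' with $\delta$ covering the case of a large list.
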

\iflong
\begin{proof}
\fi
\ifshort
\begin{proof}[Sketch]
\fi
		We start by defining the syntax of the records we will use in our dynamic program. For $v\in V(G)$, let a record for a vertex $v$ consist of tuples of the form $(u,c)$, where (1) $u \in V(\partial(v))\cap \forgottenv_v$, (2) $c \in L(u) \cup \{\delta\}$, and (3) each vertex of $V(\partial(v))\cap \forgottenv_v$ appears exactly once in a record.
		
%		We start by defining the syntax of the records we will use in our dynamic program. For $v\in V(G)$, let a record for a vertex $v$ consist of tuples of the form $(u,c)$, where:
%		\begin{itemize}[topsep=0pt]
%			\item $u \in V(\partial(v))\cap \forgottenv_v$,
%			
%			\item $c \in L(u) \cup \{\delta\}$,
%			
%			\item each vertex in $V(\partial(v))\cap \forgottenv_v$ appears exactly once in a record.
%		\end{itemize}		
%		Let $\Recs(v)$ be a set of records for $v$.  
%		Note that all the vertices in $V(\partial(v))\cap \forgottenv_v$ have to appear in each record of $\Recs(v)$.	
		
		To introduce the semantics of the records, consider the set $\mathcal{W}$ containing all partial solutions (i.e., all proper colorings) at $v$ to the instance $(\forgottenG_v,(L(u))_{u \in \forgottenv_v})$. 
		The $v$-\emph{projection} of a partial solution $\col\in \mathcal{W}$ is a set $R_\col=\{(u,c)~|~u \in V(\partial(v))\cap \forgottenv_v, c \in L(u)\})$ where $(u,c) \in R_\col$ if and only if $\col(u)=c$.
		
		Let $\Recs(v)$ be a set of records for $v$. For two records $R_1,R_2 \in \Recs(v)$ we say $R_1 \preceq R_2$ if and only if for each $u \in V(\partial(v)) \cap \forgottenv_v$ the following holds:
		\begin{itemize}[topsep=0pt]
			\item Either $(u,c) \in R_1 \cap R_2$ with $c \in L(u)$,
			\item Or $(u,c) \in R_1$ with $c \in L(u)$ and $(u,\delta) \in R_2$.
		\end{itemize}
		
		We say that $\Recs(v)$ is \emph{valid} if for each $v$-projection $R_\col$ of a partial solution $\col \in \mathcal{W}$ there is a record $R \in \Recs(v)$ which satisfies $R_\col \preceq R$, and in addition, for every record $R \in \Recs(v)$, there is a partial solution $\col \in \mathcal{W}$ such that its $v$-projection fulfills $R_\col \preceq R$.
		Observe that if $\Recs(r)=\emptyset$, then $(G,(L(v))_{v \in V(G)})$ is a NO-instance, while if $\Recs(r)=\{\emptyset\}$, then $R(r)$ is a YES-instance. 
		
		If a record in $\Recs(v)$ contains a tuple $(u,\delta)$, then this means that there is always a possible coloring for the vertex $u$, e.g., if $|L(u)|>d_G(u)$; the symbol $\delta$ is introduced specifically to bound $|L(v)|$. Therefore, it follows that $|\Recs(v)|\leq 2^{\bigoh(k \log k)}$ for each $v\in V(G)$.		
		To complete the proof, it now suffices to dynamically compute a set of valid records in a leaf-to-root fashion along $T$.
 \iflong
 
	\textbf{If $v$ is a leaf}, we set $\Recs(v) = \{ \{(v,\delta)\}\}$ for the case $|L(v)|>d_G(v)$. Otherwise, we branch over all possible colorings of the vertex $v$, i.e., $\Recs(v) = \{ \{(v,c)\}~|~c \in L(v)\}$.
%	\begin{itemize}[topsep=0pt]
%		\item If $|L(v)|>d(v)$, then $\Recs(v) = \{ \{(v,\delta)\}\}$.
%		\item Otherwise, $\Recs(v) = \{ \{(v,c)\}~|~c \in L(v)\}$.
%	\end{itemize}
	Note that the amount of records is always bounded by $k$, as $d_G(v)\leq k$. 
	
	\textbf{If $v$ is an internal node}, we	start with reducing the size of the subtree $T_v$ in the following way: Let $u$ be a child of $v$ with $E(\partial(u))=\{(u,v)\}$.
	%Consider all the children, whose induced subtrees do not increase the ecw. For those children $w$ and their induced subtree $T_w$ we have $\partial(T_w)=\{w,(vw)\}$. Therefore we can distinguish two cases:
	\begin{itemize}[topsep=0pt]
		\item If $\Recs(u)= \{\{(u,c)\}\}$ with $c\neq\delta$, then remove $c$ from $L(v)$.
		
		\item Delete $T_u$.
	\end{itemize}
	After this step there are at most $2(k-1)$ children of $v$ left. %Therefore, $v$ has at most $2(k-1)$ neighbors. %We create a record for $v$ in the same way as in the leaf case, except for the number of records being bounded by $2k$. 	
	Let $u_1, \ldots, u_\ell$ with $\ell \leq 2(k-1)$ denote the remaining children of $v$. 
	First, we compute a set $\overline{\Recs(v)}$, in the same way we would compute $\Recs(v)$ if $v$ was a leaf. Our goal is to compute $\Recs(v)$ using the local set $\overline{\Recs(v)}$ and the partial results $\Recs(u_1), \ldots, \Recs(u_\ell)$.
	Note that since $d_G(v)\leq 2k-1$ the number of records in $\overline{\Recs(v)}$ is also bounded by $2k-1$.
	
	In the next step we take one record each out of $\overline{\Recs(v)}, \Recs(u_1), \ldots, \Recs(u_\ell)$ and branch over all possible combination of records. 
	Then we check for each combination if the coloring of the vertices in $\forgottenv_{u_1}, \ldots, \forgottenv_{u_\ell}$ can be combined to a proper coloring of the vertices in $\forgottenv_v$. For this we only need to consider the vertices in $\partial(u_1), \ldots, \partial(u_\ell)$ and check if two neighbors share the same color.
	If this is not possible, then move on to the next combination of records. 
	
	Afterwards, we need to remove all the vertices, which are not in $V(\partial(v))\cap \forgottenv_v$. The remaining vertices and their colors form a record of $T_v$.
	%In the following we take one record out of the record set of $v$ and each child of $v$. We repeat the process for each combination of records. For each $(u,c)$ in any of those records do the following:
	
	%For each neighbor $u'$ that appears in any of these records, check whether that neighbor has the same color. If they have the same color, then these records are incompatible and we can move on to the next combination of records.
	
	%If all records were successfully checked we start by removing the vertices, which are not in the boundary of the subtree $T_v$. The remaining vertices and their colors form a record of $T_v$.
	
	Since $|\Recs(v)|\leq 2^{\bigoh(k \log k)}$ and the size of each record is bounded by $\bigoh(k)$, the running time is bounded by $2^{\bigoh(k^2 \log k)} \cdot n$.
	%time used to create the records for an internal node is also bounded by a function $f(k)$.
	\fi
	\qed
	\end{proof}

\iflong
\subsection{Boolean CSP}
\fi
\ifshort
\smallskip
\noindent \textbf{Boolean CSP.}\quad
\fi
	Next, we consider the classical constraint satisfaction problem~\cite{SamerSzeider10a}. \ifshort ($\star$)\fi
\iflong
	An instance $I$ of \textsc{Boolean CSP} is a tuple $(X,C)$, where $X$ is a finite set of \emph{variables} and $\mathcal{C}$ is a finite set of \emph{constraints}. Each constraint in $\mathcal{C}$ is a pair $(S,R)$, where the \emph{constraint scope} $S$ is a non-empty sequence of distinct variables of $X$, and the \emph{constraint relation} $R$ is a relation over $\{0,1\}$ (given as a set of tuples) whose arity matches the length of $S$. An \emph{assignment} is a mapping from the set $X$ of variables to $\{0,1\}$.
	An assignment $\sigma$ satisfies a constraint $C = ((x_1, \ldots, x_n),R)$ if $(\sigma(x_1), \ldots, \sigma(x_n)) \in R$, and $\sigma$ satisfies the \textsc{Boolean CSP} instance if it satisfies all its constraints. An instance~$I$ is satisfiable if it is satisfied by some assignment.
	\fi
	
	\pbDef{\textsc{Boolean CSP}}
	{A set of variables $X$ and a set of constraints $\mathcal{C}$.}
	{Is there an assignment $\sigma: X \rightarrow \{0,1\}$ such that all constraints in $\mathcal{C}$ are satisfied?}
	
	We represent this problem via the \emph{incidence graph}, whose vertex set is $X\cup \mathcal{C}$ and which contains an edge between a variable and a constraint if and only if the variable appears in the scope of the constraint.
	%We represent this problem via the \emph{incidence graph}. This means we add a vertex for each variable and each constraint. There is an edge between a variable and a constraint iff the variable appears in the scope of this constraint. This means, the incidence graph~$G$ is a bipartite graph, i.e., $V=X \cup \mathcal{C}$.
	
	\iflong
	\begin{theorem}
	\fi
	\ifshort
	\begin{theorem}[$\star$]	
	\fi
		\textsc{Boolean CSP} is fixed-parameter tractable when parameterized by the \width\ of the incidence graph.
	\end{theorem}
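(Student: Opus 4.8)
The plan is to follow the same dynamic programming framework established for the preceding problems (\textsc{Edge Disjoint Paths} and \textsc{List Coloring}), adapting the record syntax to the structure of the incidence graph $G$ with spanning tree $T$. First I would compute a minimum-width spanning tree $T$ of the incidence graph via Theorem~\ref{thm:computeecw}, root it at some $r$, and set up records indexed by the nodes $v\in V(T)$. The key observation is that the boundary $\partial(v)$ separates $\forgottenv_v\setminus V(\partial(v))$ from the rest of the graph, and since $|V(\partial(v))|\le 2\widthshort(G)$, only a bounded number of variables and constraints straddle the boundary. A partial solution at $v$ is an assignment to the variables in $\forgottenv_v$ that satisfies every constraint whose entire scope lies inside $\forgottenv_v$ (i.e., every constraint-vertex in $\forgottenv_v$ with no incident boundary edge), together with a consistency certificate for the constraints that are only partially evaluated.

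The record at $v$ should therefore encode, for each variable vertex $x\in V(\partial(v))\cap\forgottenv_v$, its assigned Boolean value, and for each constraint vertex $c\in V(\partial(v))\cap\forgottenv_v$ whose scope is split by the boundary, the partial assignment already fixed on the ``past'' part of its scope (equivalently, the induced subrelation of $R_c$ that remains achievable). Since each boundary constraint has at most $2\widthshort(G)$ incident edges inside $\forgottenv_v$, the partial assignment to its known scope variables is a bit-string of length $\bigoh(k)$, and the number of variable-value choices on the boundary is also $2^{\bigoh(k)}$; hence the number of records per node is $2^{\bigoh(k\log k)}$, matching the earlier bounds. I would define the $v$-projection of a partial solution $\sigma$ to be the record recording these boundary values, declare $\Recs(v)$ \emph{valid} exactly as before (it contains all projections of partial solutions, and every stored record is witnessed by some partial solution), and observe that $\Recs(r)=\{\emptyset\}$ signals a YES-instance while $\Recs(r)=\emptyset$ signals a NO-instance.

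The computation of $\Recs(v)$ proceeds leaf-to-root. At a leaf that is a variable vertex, I branch over its two possible values; at a leaf that is a constraint vertex, the record merely records that its (as yet empty) partial scope is consistent. At an internal node $v$ I first prune children $u$ with $E(\partial(u))=\{\{u,v\}\}$: such a subtree is attached by a single edge, so its entire Boolean assignment can be resolved locally and summarized by at most the value it forces (or the single piece of scope information it contributes) before deleting $T_u$, leaving at most $2(k-1)$ relevant children. I then branch over all combinations of one record from each surviving child together with a locally-guessed assignment for the boundary variables introduced at $v$, check mutual consistency (no variable receives two values, and every constraint whose scope becomes fully contained in $\forgottenv_v$ at this step is actually satisfied by consulting its relation $R_c$), and project the result onto $V(\partial(v))\cap\forgottenv_v$ to form the new record. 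Correctness follows by the now-standard induction on $T$, and the running time is dominated by branching over $2^{\bigoh(k\log k)}$ records for each of $\bigoh(k)$ children, giving $2^{\bigoh(k^2\log k)}\cdot n$.

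The main obstacle, and the place where this problem differs from \textsc{List Coloring}, is the bookkeeping for constraints whose scope is cut by the boundary: unlike a coloring constraint that is satisfied or violated edge-by-edge, a Boolean relation $R_c$ is an arbitrary subset of $\{0,1\}^{|S|}$ and cannot be checked until its entire scope has been assigned. I would handle this by keeping, in the record, precisely the partial assignment already committed on the ``seen'' portion of each boundary constraint's scope, so that the first moment the last scope-variable is absorbed into $\forgottenv_v$ one can test membership in $R_c$ in time polynomial in the (explicitly listed) relation; the bounded boundary size is exactly what guarantees that this partial-assignment information fits into a record of size $\bigoh(k\log k)$ and thus preserves the target running time.
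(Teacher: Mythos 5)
Your overall framework (rooted minimum-width spanning tree computed via Theorem~\ref{thm:computeecw}, boundary-indexed records, pruning of children attached by a single boundary edge, leaf-to-root combination) matches the paper's, but your record-size analysis has a genuine gap. You claim that ``each boundary constraint has at most $2\widthshort(G)$ incident edges inside $\forgottenv_v$'' and conclude that the already-committed part of a cut constraint's scope is a bit-string of length $\bigoh(k)$. This is false: edge-cut width only bounds the number of \emph{crossing} edges, $|E(\partial(v))|\leq \widthshort(G)$, and says nothing about how many scope variables a boundary constraint has \emph{inside} $\forgottenv_v$. Concretely, take one constraint $c$ with scope $x_1,\dots,x_n,y$; its incidence graph is a star, hence has edge-cut width $1$, yet rooting the tree at $y$ makes $c$ a boundary constraint of $\forgottenv_c$ with $n$ incident edges inside. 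A record storing ``the partial assignment already fixed on the past part of its scope'' therefore has unbounded length, the number of records is $2^{\Omega(n)}$ rather than $2^{\bigoh(k\log k)}$, and your claimed running time does not follow. Note also that your lazy satisfaction check (``consult $R_c$ once the last scope variable is absorbed'') relies on exactly this unbounded information being available in the records.

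The flaw is repairable, but not via the equivalence you assert in passing: what a record must retain about a cut constraint $c$ is not its past assignment but its \emph{projection onto the future}, i.e., the set of assignments to the scope variables of $c$ lying outside $\forgottenv_v$ that remain compatible with the committed inside assignment. Those outside variables are endpoints of crossing edges, so there are at most $\widthshort(G)$ of them in total over all cut constraints, which bounds the record space by a function of $k$ alone (though a doubly exponential one) and restores fixed-parameter tractability with a weaker bound than you state. The paper avoids this bookkeeping altogether and keeps only $2^{\bigoh(k)}$ records per node: its records are merely truth assignments to boundary \emph{variables} (functions $\varphi\colon B(v)\rightarrow\{0,1\}$ with $B(v)=V(\partial(v))\cap\forgottenv_v\cap X$), and constraints whose scope is cut are handled by rewriting the instance itself --- when a subtree hanging below a constraint $v$ by a single edge is pruned, the tuples of $R_v$ that cannot be realized inside that subtree are deleted from the relation, and conflicts are detected when the surviving $\bigoh(k)$ children's records are combined. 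The paper's route has the advantage that record length never depends on the relations at all, which is what yields its $2^{\bigoh(k^2)}\cdot n$ running time.
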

	
	\iflong
	\begin{proof}
		\fi
	\ifshort
	\begin{proof}[Sketch]	
	\fi		 		
		For this problem, we do not need to consider all the vertices in the boundary. Instead, for a vertex $v \in V$, let $B(v)=V(\partial(v))\cap \forgottenv_v \cap X$. 
	Hence, we will consider only the vertices in the boundary inside of the current subtree, which correspond to variables in the input instance. Note that $|B(v)|\leq |V(\partial(v))|\leq 2k$.	

		%These functions represent the assignment of the variables, which are compatible with $T_v$.
				
		We continue with defining the syntax of the records we will use in our dynamic program. For $v\in V(G)$, let a record for a vertex $v$ be a set of functions of the form $\varphi:B(v)\rightarrow\{0,1\}$.		
		Let $\Recs(v)$ be a set of records for $v$. 
		From the syntax, it follows that $|\Recs(v)|\leq 2^{\bigoh(k)}$ for each $v\in V(G)$. 
		To introduce the semantics of the records, consider the set $\mathcal{W}$ containing all partial solutions (i.e., all assignments of the variables such that every constraint is fulfilled) at $v$ for the instance $(\forgottenv_v \cap X,\forgottenv_v \cap \mathcal{C})$.
		%The $v$-\emph{projection} of a solution $W\in \mathcal{W}$ is a set $R_W$ where:
						
		The function $\varphi$ is a $v$-\emph{projection} of a solution $\sigma\in \mathcal{W}$ if and only if $\sigma \raise-.5ex\hbox{\ensuremath|}_{B(v)} = \varphi$.
		This means, that the functions in a record represent the assignments of variables, which are compatible with $\forgottenv_v$.
		
		We say that $\Recs(v)$ is \emph{valid} if it contains all $v$-projections of partial solutions in~$\mathcal{W}$, and in addition, for every record in $\Recs(v)$, there is a partial solution such that its $v$-projection yields this record.
		Observe that if $\Recs(r)=\emptyset$, then $(X,\mathcal{C})$ is a NO-instance, while if $\Recs(r)=\{\emptyset\}$, then $R(r)$ is a YES-instance. To complete the proof, it now suffices to dynamically compute a set of valid records in a leaf-to-root fashion along $T$.
		\iflong

		\textbf{If $v$ is a leaf and $v \in X$}, we can remove $v$ in case $d_G(v)=1$. Otherwise, we set $B(v)=\{v\}$ and all assignments are valid, i.e., $\Recs(v)=\{\varphi: \{v\} \rightarrow \{0,1\}\}$.
%		\begin{itemize}[topsep=0pt]
%			\item If $d(v)=1$, then remove $v$. %Since the variable is contained in only one constraint, it can take whatever value it has to in order to fulfill the said constraint.
%			
%			\item Otherwise, $B(v)=\{v\}$ and all assignments are valid.
%		\end{itemize}
		
		\textbf{If $v$ is a leaf and $v \in C$}, then $B(v)=\emptyset$, which means $\Recs(v)=\{\emptyset\}$.
		%we again consider the degree of $v$ in $G$. 
%		\begin{itemize}[topsep=0pt]
%			\item If $d_G(v)=1$, then delete $v$. 
%			
%			Moreover, if the constraint $v$ specifies the value of its only adjacent variable $w$, we fix the value of $w$ in all other constraints in which $w$ appears. Hence, in these constraint tables we delete the rows in which $w$ does not have this value and the whole column corresponding to $w$. Finally, delete the variable $w$.
%			
%			\item Otherwise, there are at most $k$ adjacent variables and each record in $\Recs(v)$ is equivalent to the assignment specified by each row in the constraint $v$.
%		\end{itemize}
		
		\textbf{If $v$ is an internal node}, we start with bounding the number of children of $v$. We have to distinguish, if $v$ corresponds to a variable or a constraint. Let $u$ be a child of $v$. 
		\begin{itemize}[topsep=0pt]
			\item For $v \in X$ and $B(u)=\emptyset$, check whether $\Recs(u)$ allows both values for the variable~$v$. If not we fix the value as seen in the previous case. Afterwards delete $u$.
			
			\item For $v \in \mathcal{C}$ and  $B(u)=\{u\}$, use $\Recs(u)$ to check all viable assignments to the root and then remove the unsatisfiable ones from the constraint $v$. Afterwards delete $u$.
			
			\item If after this we obtain an empty constraint or a conflict with the variable assignment occurs, we know that this is a NO-instance.
		\end{itemize}
		After this step there are at most $2(k-1)$ children left. Let $u_1, \ldots, u_\ell$ with $\ell \leq 2(k-1)$ denote the remaining children of $v$. To obtain $\Recs(v)$, we can brute force all viable combinations of $\Recs(u_1), \ldots, \Recs(u_{\ell})$.
		
		Since the number of records and the size of each record is bounded by $k$, this algorithm runs in time $2^{\bigoh(k^2)}\cdot n$. 
%		\textbf{$v$ is an internal node and $v \in X$:}
%		
%		Let $u$ be a child of $v$. 
%		\begin{itemize}[topsep=0pt]
%			\item If $B(u)=\emptyset$ , then check whether $\sigma(u_i)$ allows both values for the variable $v$. If not we fix the value as seen above. 
%		\end{itemize}
%		
%		After this step there are at most $2k$ children $u_1, \ldots, u_\ell$ left. For these, we can brute force all viable combinations of $\sigma(u_1), \ldots, \sigma(u_{\ell})$ to get $\sigma(v)$.
%		
%		\textbf{$v$ is an internal node and $v \in C$:}
%		
%		Let $u$ be a child of $v$. 
%		\begin{itemize}[topsep=0pt]
%			\item If $B(u)=\{u\}$ , then use $\sigma(u)$ to check all viable assignments to the root and then remove the unsatisfiable ones from the constraint $v$.
%		\end{itemize}
%		After this step there are at most $2k$ children $u_1, \ldots, u_\ell$ left. For these, we can again brute force all viable combinations.
\fi
		\qed
	\end{proof}

\ifshort
\smallskip
\noindent \textbf{Further Problems.}\quad
As our final two results, we use the algorithmic framework developed above to also establish the fixed-parameter tractability for the remaining two problems which were shown to be \W{1}-hard w.r.t.\ tree-cut width. These are \textsc{Maximum Stable Roommates with Ties and Incomplete Lists (MaxSRTI)}~\cite{BredereckHKN19} and \textsc{Minimum Changeover Cost Arborescence (MinCCA)}~\cite{GozupekOPSS17}. ($\star$)

%	\pbDef{\textsc{MaxSRTI}}
%	{A set of agents $V$, a preference profile $\mathcal{P} = (\mathcal{P}_v)_{v\in V}$, and an integer $\pi$.}
%	{Is there a stable matching of $(V,P)$ of cardinality at least $\pi$?}

	\begin{theorem}[$\star$]
		\textsc{MaxSRTI} is fixed-parameter tractable when parameterized by the \width\ of the acceptability graph.
	\end{theorem}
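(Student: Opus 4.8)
As for the previous problems, the plan is to run a leaf-to-root dynamic program along the rooted minimum-width spanning tree $T$ obtained from Theorem~\ref{thm:computeecw}, exploiting that $|E(\partial(v))|\le k$ and $|V(\partial(v))|\le 2k$ for $k=\ecw(G)$. For an agent $u$ and an acceptable partner $w$ let $\rho_u(w)$ be the rank of $w$ in the (tied, possibly incomplete) preference list of $u$, smaller meaning more preferred, and let an unmatched agent have partner-rank $+\infty$. Since \textsc{MaxSRTI} asks for a maximum-cardinality \emph{weakly} stable matching $M$, an acceptable edge $uw\notin M$ blocks $M$ iff $\rho_u(M(u))>\rho_u(w)$ and $\rho_w(M(w))>\rho_w(u)$, i.e.\ iff both endpoints strictly prefer each other to their current partners.

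The records are built around the inside boundary vertices $V(\partial(v))\cap\forgottenv_v$. For each such vertex $u$ the record stores one of three states together with a single threshold taken among the at most $k$ boundary edges incident to $u$: (i) \emph{matched inside} $\forgottenG_v$, where the threshold records, via the monotone bit $[\rho_u(M(u))\le\rho_u(w)]$ for every incident $uw\in E(\partial(v))$, whether $u$ already weakly prefers its partner to $w$; (ii) \emph{available and free}, meaning $u$ is still unmatched and may remain so; or (iii) \emph{available but obligated}, carrying the weakest rank that $u$'s eventual partner is still required to beat. As all three pieces of data are monotone in $\rho_u(\cdot)$, each collapses to one threshold, so a vertex contributes $\bigoh(k)$ states and $|\Recs(v)|\le k^{\bigoh(k)}=2^{\bigoh(k\log k)}$. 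To every record we attach the maximum number of edges of $M$ lying inside $\forgottenG_v$ that some conforming partial matching realizes. A set $\Recs(v)$ is \emph{valid}, exactly as before, if it collects the $v$-projections of all partial matchings on $\forgottenG_v$ that contain no blocking pair with both endpoints in $\forgottenG_v$ and that respect every obligation recorded so far, and if conversely each stored record is witnessed by such a partial matching.

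The computation is then carried out node by node. At a \textbf{leaf}, the only inside boundary vertex is $v$ itself, which is available and free with attached size $0$. At an \textbf{internal node} I would first reuse the thinning step from the \EDP\ and \textsc{List Coloring} algorithms: any child $u$ with $E(\partial(u))=\{uv\}$ communicates with the rest of $G$ only through the tree edge $uv$, so from $\Recs(u)$ we read off the best attainable size with $u$ matched inside $T_u$ and with $u$ available, fold these two values together with the option of placing $uv$ into $M$ into $v$, and delete $T_u$; this leaves at most $2(k-1)$ nontrivial children $u_1,\dots,u_\ell$. We then branch over one record from the locally generated set $\overline{\Recs(v)}$ and one from each $\Recs(u_i)$, noting that each of the $\bigoh(k)$ edges becoming internal at $v$ occurs in at most two of these records. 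For every such internal edge we branch on whether it enters $M$: an edge $uw$ may be \emph{added} only when both endpoints are available and the new partner satisfies any obligation outstanding on them, which matches them, sets their thresholds to $\rho_u(w)$ and $\rho_w(u)$ and increases the size by one; an internal edge $uw$ may be \emph{omitted} only if it is non-blocking, that is either some endpoint is already matched weakly better than the other (its stored bit witnesses this), or, when both are still available, we branch on which endpoint is made responsible and strengthen its obligation threshold to force its future partner to be at least as good as this edge. Finally we update the states, drop every vertex that has left $\partial(v)$ --- rejecting the record if such a vertex still carries an unmet obligation --- and keep, for each resulting signature, only the maximum attached size before inserting it into $\Recs(v)$. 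Validity is preserved by the same inductive argument as for the earlier problems.

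The step that genuinely distinguishes \textsc{MaxSRTI} from \textsc{List Coloring} or \textsc{Boolean CSP}, and the main obstacle, is the \emph{deferred} stability constraint: an edge $uw$ can become internal before the endpoint meant to save it from blocking has chosen a partner, because that partner may sit across a still-open boundary edge. State (iii) is engineered exactly to transport this constraint upward, and the heart of the correctness argument is to show that (a) monotonicity of $\rho_u(\cdot)$ lets each obligation be stored as one threshold over the $\le k$ boundary edges incident to $u$ --- keeping the per-vertex state of size $\bigoh(k)$ and hence the record count at $2^{\bigoh(k\log k)}$ --- and (b) a partial matching is globally weakly stable precisely when it has no internal blocking pair and every obligation is discharged by the time its vertex leaves the boundary. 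Ties need only that all comparisons use $\le$ against strict $>$ consistently, and the maximum-cardinality requirement is handled by the value attached to each record. After thinning, each of the $n$ nodes is processed in $2^{\bigoh(k^2\log k)}$ time from the $(2^{\bigoh(k\log k)})^{2k}$ record combinations; the extra bookkeeping of matching cardinalities across the tree accounts for the quadratic dependence on the input size flagged in the introduction, giving an overall bound of $2^{\bigoh(k^2\log k)}\cdot n^2$ and thus fixed-parameter tractability.
\qed
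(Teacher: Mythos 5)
Your overall strategy coincides with the paper's: compute an optimal spanning tree via Theorem~\ref{thm:computeecw}, run a leaf-to-root dynamic program over the boundaries $\partial(v)$, thin out children whose only boundary edge is the tree edge, branch over one record per remaining child (at most $2(k-1)$ of them), and attach to each record the maximum cardinality of a conforming partial matching. Where you genuinely diverge is the record design. The paper's records are \emph{edge}-centric: every boundary edge carries one of three labels (matched, safe, unsafe), and--crucially--its partial solutions are matchings in the union of $\mathcal{G}_v$ (the graph induced by the vertices in $T_v$) \emph{and} $\partial(v)$, so they may already contain matched boundary edges. The partner that ``saves'' an omitted edge is therefore always guessed eagerly, the safety of every boundary edge can be evaluated on the spot, and when an edge becomes internal the paper only checks a syntactic consistency condition (matched--matched, safe--safe, or safe--unsafe across the two subtrees). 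You instead keep all matching decisions inside $\mathcal{G}_v$, keep vertices that will be matched across the boundary in an ``available'' state, and transport the stability requirement of an omitted edge upward as an \emph{obligation} threshold on the rank of the future partner. Your monotonicity observation (safety bits and obligations of a vertex collapse to a single rank threshold over its at most $k$ incident boundary edges) is correct and gives the $2^{\bigoh(k\log k)}$ record bound; this lazy mechanism is a valid alternative to the paper's eager guessing, and it makes the cardinality bookkeeping cleaner, since each edge of $M$ is counted exactly once, at the node where it becomes internal.

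There is, however, one concrete flaw in your transition rule for omitted edges. You allow an internal edge $uw \notin M$ only if (a) some endpoint is already matched weakly better than the other, or (b) \emph{both} endpoints are still available, in which case you branch on who becomes responsible. This misses the mixed case: $u$ already matched but strictly preferring $w$ to its partner, while $w$ is still available. As written, your algorithm can neither omit nor add such an edge and would discard the record combination, thereby rejecting valid solutions in which $w$ is later matched, across a still-open boundary edge, to someone it weakly prefers to $u$---precisely the ``deferred'' situation your state (iii) was designed for. The repair stays entirely inside your framework: permit the omission of $uw$ whenever some \emph{available} endpoint takes on the obligation with threshold given by the other endpoint's rank, and reject only when neither endpoint is (or can be made) safe, e.g.\ when both are matched and both unsafe. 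The paper never needs this case distinction because in its scheme the future partner is already present as a matched boundary edge in the child's partial solution. With this one correction your sketch is sound and gives a running time of $2^{\bigoh(k^2\log k)} \cdot n^2$, matching in spirit the paper's bound of $3^{k^2}\cdot n^2$.
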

	
%		\pbDef{\textsc{MinCCA}}
%	{A directed graph $G=(V,E)$, a root $r\in V(G)$, an edge coloring $\col: E(G) \rightarrow X$, and a changeover cost function $\cost: X^2 \rightarrow \Nat$.}
%	{What is an arborescence of $G$ minimizing the total changeover costs?}

		\begin{theorem}[$\star$]
		\textsc{MinCCA} is fixed-parameter tractable when parameterized by the \width\ of the input graph.
	\end{theorem}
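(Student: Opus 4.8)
The plan is to reuse the boundary-based dynamic programming framework developed above, running leaf-to-root along a minimum-\width\ spanning tree $T$ of the input graph rooted at $r$. Recall that in \textsc{MinCCA} we are given an edge-colored graph $G$, a changeover-cost function $w$ on pairs of colors, and a prescribed root $\rho$ (if the root is not part of the input we simply iterate over all $n$ candidates, paying an extra factor of $n$), and we seek a spanning arborescence $A\subseteq E(G)$ directed away from $\rho$ minimizing the total cost $\sum_{x}\sum_{y\text{ a child of }x} w(c(p_x),c(xy))$, where $p_x$ is the parent edge of $x$ in $A$ and $c$ is the input coloring. The crucial structural observation is that, since $|E(\partial(v))|\leq\widthshort(G)=k$, any arborescence crosses $\partial(v)$ on at most $k$ edges; moreover, because all edge colors are fixed by the input, the \emph{entire} changeover contribution at a vertex $x$ is determined as soon as we know which edges incident to $x$ lie in $A$ and how they are oriented. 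This lets us charge the full changeover cost at $x$ at the single DP node $x$, so that each vertex is charged exactly once and no color needs to be stored in any record.

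Concretely, a record at $v$ consists of (i) for each $e\in E(\partial(v))$ a flag stating whether $e\in A$ and, if so, its orientation into or out of $\forgottenv_v$; (ii) the partition of the $\forgottenv_v$-endpoints of the used boundary edges according to the components of $A[\forgottenv_v]$; and (iii) a flag marking which block (if any) is the \emph{root block}, i.e.\ the one containing $\rho$. Because an arborescence gives every non-root component exactly one incoming boundary edge (the parent edge of its local root) while the root block has none, the admissible types number only $2^{\bigoh(k\log k)}$. A partial solution at $v$ is the pair $\bigl(A[\forgottenv_v],\,A\cap E(\partial(v))\bigr)$ arising from some spanning arborescence $A$ of $G$; with each type we store the minimum $\cost(\cdot)$ over all partial solutions realizing it, where this cost sums the changeovers chargeable at the vertices of $\forgottenv_v$. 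We call $\Recs(v)$ \emph{valid} if it stores exactly the realizable types, each with its correct minimum cost. Then $G$ admits a solution of cost $c$ iff $\Recs(r)$ stores the single-component root type with empty boundary at cost $c$.

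The transitions follow the established pattern. At a \textbf{leaf} $v$ we branch over all subsets of the incident (boundary) edges with their orientations, discard any choice assigning $v$ two incoming edges, designate the incoming edge (or mark $v=\rho$), and charge $v$'s changeover from the now-known parent and child colors. At an \textbf{internal} node we first absorb every child $u$ with $E(\partial(u))=\{uv\}$: here $uv$ is forced into $A$ and $\forgottenG_u$ must be spanned internally, so we add the best compatible cost read off $\Recs(u)$ and treat $uv$ as a fixed edge at $v$; this leaves at most $2(k-1)$ genuine children. We then branch over one record per remaining child together with the orientations of $v$'s still-pending boundary edges, glue the partial forests, and keep a choice only if the merged structure stays acyclic, is orientation-consistent on every shared boundary edge, and leaves each component with at most one incoming edge. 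For each survivor we fix $v$'s parent and child edges, charge $v$'s changeover, add the children's stored costs, update the partition and root-block flag, and keep the minimum cost per resulting type. Correctness follows by induction as before, and the once-per-vertex charging guarantees that no changeover is double-counted.

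The main obstacle is the internal merge, where we must simultaneously maintain (a) acyclicity of the partial forest, (b) orientation consistency ensuring the glued object is a genuine partial arborescence in which every component has a unique parent edge oriented toward $\rho$, and (c) color-correct, once-only accounting of changeover costs. The per-vertex charging scheme is precisely what makes (c) clean and lets us omit colors from the records, but checking that this scheme remains consistent with (a) and (b) across all branches—in particular that a changeover hypothesized at a leaf is discarded exactly when the incoming edge's other endpoint disagrees once that edge becomes internal—is the delicate part. Bounding the children by $2(k-1)$ via the absorb step keeps the combination over child records to $2^{\bigoh(k^2\log k)}$ per node; with $3^{\bigoh(k)}$ choices for $v$'s pending edges and $n$ nodes in total, the algorithm runs in time $2^{\bigoh(k^2\log k)}\cdot n$, establishing fixed-parameter tractability. \qed
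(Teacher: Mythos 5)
Your proposal is correct, and it uses the same high-level framework as the paper's proof: a leaf-to-root dynamic program over the boundaries $\partial(v)$ of an optimal spanning tree, absorption of every child $u$ with $E(\partial(u))=\{uv\}$ so that at most $2(k-1)$ genuine children remain, branching over one record per remaining child, and the same overall running time $2^{\bigoh(k^2\log k)}\cdot n$. Where you genuinely diverge is in the record design and the cost accounting. The paper's records are pairs $(\outgoing,\donate)$, where $\outgoing$ maps each boundary vertex of $\mathcal{Y}_v$ to the first boundary edge on its escape path, and $\donate$ stores ``through-paths'' together with the \emph{color} of their first edge inside $\mathcal{Y}_v$; storing that color is necessary there because the paper charges the changeover incurred when an external path enters $\mathcal{Y}_v$ only at the ancestor where the connection is made (its $\mathit{cost}_{\mathit{conn}}$ term), at which point the adjacent inside edge is no longer visible. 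Your per-vertex charging scheme---charging all changeovers at $x$ at DP node $x$, where every arborescence edge incident to $x$ is either fixed by a child record or hypothesized in $x$'s own record and carried along until the edge is resolved---makes storing colors unnecessary, and your connectivity partition plus root-block flag replaces the paper's path tracing by a standard forest-gluing acyclicity check on blocks. This is arguably cleaner bookkeeping; what the paper's design buys is that its MinCCA computation reuses the donated/outgoing-path machinery of its \EDP\ proof almost verbatim. Three minor slips, none fatal: the root is part of the input, so no iteration over candidate roots is needed; the input graph is \emph{directed}, so edge orientations are fixed by the input rather than branched over (this only prunes choices); and since you define partial solutions as restrictions of \emph{global} arborescences, your validity condition (``exactly the realizable types'') cannot be maintained by a local DP---it should be phrased, as in the paper, via locally defined partial solutions, namely forests in which every vertex of $\mathcal{Y}_v$ has a directed path to exactly one vertex of $(V(\partial(v))\setminus\mathcal{Y}_v)\cup\{r\}$, after which your induction goes through.
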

\fi

\iflong
\subsection{Maximum Stable Roommates with Ties and Incomplete Lists} 
	Our fourth problem originates from the area of computational social choice~\cite{BredereckHKN19}.
	In this problem we are given a set of \emph{agents} $V$, where each agent $v \in V$ has a \emph{preference} $\mathcal{P}_v = (P_v, \preceq_v)$. 
	The agents $P_v \subseteq V \setminus \{v\}$ are called \emph{acceptable (for $v$)} and $\preceq_v$ is a linear order on $P_v$ with ties. Let $u, w\in P_v$. If $u \prec_v w$ then we say that $v$ \emph{strongly} prefers $u$ to $w$;  on the other hand, if $u \prec_v w$ does not hold then we say that $v$ \emph{weakly prefers} $w$ to $u$.

	We represent this problem via the undirected \emph{acceptability graph} $G$, which contains a vertex for each agent in $V$ and an edge between two agents if and only if both appear in the preference lists of the other. 
	
%	We represent this problem via the directed \emph{acceptability graph}. 
%	This means we add a vertex for each agent in $V$. There is an arc $(v,w)$ iff agent $w \in P_v$. 
%	We define the \width\ of this graph as the \width\ of the underlying undirected graph.
	A set $M \subseteq E(G)$ is called a \emph{matching} if no two edges in $M$ share an endpoint. If the edge $\{v,w\}$ is contained in $M$, then we say $v$ is \emph{matched} to $w$ and denote this as $M(v)=w$ and vice versa. In case a vertex $v$ is not incident to any edge in $M$, then $v$ is \emph{unmatched} resp.\ $M(v)=\bot$ (where we assume $\bot$ to be less preferable than all acceptable neighbors of $v$). 
	An edge $\{v,w\} \in E(G)\setminus M$ is \emph{blocking} for $M$ (we also say \ $v,w$ form a \emph{blocking pair}) if $w \prec_v M(v)$ and $v \prec_w M(w)$. A matching is \emph{stable} if it does not admit a blocking pair.
	
	\pbDef{\textsc{Maximum Stable Roommates with Ties and Incomplete Lists (MaxSRTI)}}
	{A set of agents $V$, a preference profile $\mathcal{P} = (\mathcal{P}_v)_{v\in V}$, and an integer $\pi$.}
	{Is there a stable matching of $(V,P)$ of cardinality at least $\pi$?}
	
	%We represent this problem via the undirected \emph{preference graph}, which contains a vertex for each agent in $V$ and an edge between two agents if and only if one appears in the preference list of the other. \todo{R: I think we need a brief description / definition of the problem. Probably also for CSP and the next problem. Might only go in the long version though.}
%	. This means we add a vertex for each agent in $V$. There is an arc $(v,w)$ iff agent $w$ appears in the preference list of $v$. 
%	We consider the underlying undirected graph of the preference 
%	To obtain the \width\ of this directed graph we consider the graph where we omit the arc directions.
	
	\begin{theorem}
		\textsc{MaxSRTI} is fixed-parameter tractable when parameterized by the \width\ of the acceptability graph.
	\end{theorem}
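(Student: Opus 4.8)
The plan is to reuse the leaf-to-root dynamic programming framework developed for the previous problems. First I would apply Theorem~\ref{thm:computeecw} to compute a minimum-width spanning tree $T$ of the acceptability graph $G$, fix a root $r$, and process the boundaries $\partial(v)$ bottom-up; write $k=\widthshort(G)$. The essential observation is the same as before: since $\partial(v)$ separates $\forgottenv_v\setminus V(\partial(v))$ from the rest of the graph, any \emph{blocking pair} $\{x,y\}$ is either contained in $\forgottenG_v$, contained in $G-\forgottenv_v$, or equal to one of the at most $k$ boundary edges in $E(\partial(v))$. Consequently a record for $v$ only needs to summarize, for each boundary vertex $u\in V(\partial(v))\cap \forgottenv_v$, (1) its matching status---whether $u$ is already matched by an edge inside $\forgottenG_v$, is reserved to be matched through one of its incident boundary edges, or is left unmatched---and (2) for each incident boundary edge $e=\{u,u'\}$ a single bit recording whether $u'\prec_u M(u)$, i.e.\ whether $u$ strongly prefers $u'$ to its current partner (with $M(u)=\bot$ when $u$ is unmatched, so that the bit is set for every acceptable $u'$). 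Since $|V(\partial(v))|\le 2k$ and $|E(\partial(v))|\le k$, the number of combinatorial record types is bounded by $2^{\bigoh(k\log k)}$; to each type I would attach the maximum cardinality of a partial stable matching realizing it, keeping only the best value per type.

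For the semantics, a \emph{partial solution} at $v$ is a matching $M$ of $\forgottenG_v$ together with a choice of reserved boundary edges such that no pair of agents both lying in $\forgottenv_v$ forms a blocking pair certifiable from $\forgottenG_v$. The $v$-\emph{projection} of such an $M$ is the record obtained by reading off each boundary vertex's status and its preference bits, and $\Recs(v)$ is \emph{valid} if it contains exactly the projections of partial solutions, each annotated with the correct maximum attainable cardinality. The point of storing the preference bits is that they are precisely what is needed to finalize the stability of a boundary edge once it becomes internal: when an ancestor internalizes $e=\{u,u'\}$, the edge is blocking for the combined matching if and only if the bit stored on $u$'s side and the bit stored on $u'$'s side both assert strong preference, and I would reject any combination that decides $e\notin M$ while $e$ is blocking.

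The computation itself follows the two-phase internal-node procedure used above. At a leaf $v$ I would enumerate the constantly many statuses of $v$ and set the corresponding cardinalities. At an internal node $v$, I would first dispose of every child $u$ whose boundary is the single edge $\{u,v\}$: such a subtree interacts with the rest of the graph only through $v$, so both the best matching size of $\forgottenG_u$ and the effect of the single edge $\{u,v\}$ on the stability at $v$ can be read off $\Recs(u)$ and folded into $v$ directly. As before this leaves at most $2(k-1)$ children, whose records I would combine by brute force: for each tuple of their records together with a locally chosen configuration at $v$, I would merge the reserved boundary edges, internalize every edge both of whose endpoints have now entered the subtree, check it against the stored preference bits for a blocking violation, sum the cardinalities, and store the resulting $v$-projection with its maximized size. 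At the root the boundary is empty, so $G=\forgottenG_r$, and the instance is a \textsc{Yes}-instance if and only if some record of $\Recs(r)$ carries cardinality at least $\pi$.

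The main obstacle I anticipate is the correct and complete handling of stability across the boundary in the presence of ties. Unlike the purely combinatorial constraints in \textsc{EDP} or \textsc{List Coloring}, a blocking pair is a \emph{non-local} comparison between each endpoint's current partner and the candidate partner, and with ties one must carefully separate strong from weak preference so that a boundary edge is declared blocking only when \emph{both} endpoints strongly prefer each other. Encoding this with a single comparison bit per boundary edge, and proving that this bit carries enough information through all merges---so that no blocking pair is ever missed and no stable matching is ever spuriously rejected---is the delicate part of both the construction and its correctness proof. The remaining bookkeeping, namely merging the reserved edges and propagating the maximum attainable cardinality (which is what accounts for the extra polynomial factor in the input size), is routine. \qed
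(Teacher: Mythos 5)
Your proposal is correct and takes essentially the same route as the paper's proof: a leaf-to-root dynamic program over the optimal spanning tree whose records summarize, for each boundary edge, whether it is matched and whether its inner endpoint could still participate in a blocking pair, annotated with the maximum cardinality of a partial stable matching realizing that record, with single-edge-boundary children folded in directly and the remaining at most $2(k-1)$ children combined by brute-force enumeration of their records. Your encoding (per-vertex matching status plus a strict-preference bit per boundary edge) is just a reparametrization of the paper's per-edge three-valued signatures (matched/safe/unsafe), and your blocking test --- reject only when \emph{both} endpoints strictly prefer each other --- matches, indeed states even more precisely, the consistency check the paper performs when merging signatures.
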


\newcommand{\mat}{\texttt{matched}}
\newcommand{\uns}{\texttt{unsafe}}
\newcommand{\saf}{\texttt{safe}}
\newcommand{\forgottenH}{\mathcal{H}}
\newcommand{\sig}{\texttt{sig}}
	\begin{proof}	
We once again start by defining the syntax of the records. For $v\in V(G)$, let a \emph{signature} at $v$ be a mapping $E(\partial(v))\rightarrow \{\mat,\uns,\saf\}$. Clearly, the number of signatures at $v$ is upper-bounded by $3^k$, where $k=\widthshort(G)$.

To make it easier to describe the semantics of the records, let us first define the graph $\forgottenH_v$ as the non-disjoint union of $\forgottenG_v$ and $\partial(v)$; we recall that $\partial(v)$ contains both vertices in $\forgottenG_v$ and vertices adjacent to these, and that $E(\partial(v))$ forms an edge-cut separating $\forgottenG_v$ from the rest of $G$.

%We associate with $\forgottenH_v$ a ``pruned'' preference profile $\mathcal{Q}(v)$ defined as follows:
%\begin{itemize}[topsep=0pt]
%\item for each $w\in \forgottenv_v$, the preferences are the same as in $\mathcal{P}$, i.e., $(P_w, \preceq_w)\in \mathcal{Q}(v)$, and
%\item for each $w\in V(\partial(v))\setminus \forgottenv_v$, its preferences in $\mathcal{Q}(v)$ are that all of its neighbors are acceptable and equal. 
%\end{itemize}
%Intuitively, $\mathcal{Q}(v)$ is a restriction of $\mathcal{P}$ to $\forgottenH_v$ where the vertices on the ``other side'' of the boundary are prevented from forming blocking pairs. In particular, observe that for every stable matching $M$ in $G$, its restriction $M\cap E(\forgottenH_v)$ is a stable matching in $\forgottenH_v$; we call this the $v$-\emph{projection} of $M$. 

We are now ready to define the semantics of the records. 
A matching $M$ in $\forgottenH_v$ is called a \emph{partial solution} if there is no blocking edge for $M$ in $E(\forgottenH_v)$; in other words, we explicitly forbid the edges in the boundary from forming blocking pairs in partial solutions. Each partial solution $M$ \emph{corresponds} to a signature $\sig$ at $v$ defined as follows:
\begin{itemize}
\item for each $e\in M\cap E(\partial(v))$, $\sig(e)=\mat$, 
\item for each $e=ab\in E(\partial(v))\setminus M$ such that $a\in \forgottenv_v$ and there exists $ac\in M$ such that $c \prec_a b$ , $\sig(e)=\saf$, and
\item $\sig(e)=\uns$ otherwise.
\end{itemize}

Intuitively, the signature of $M$ captures the following information about $M$: which edges in the boundary are matched, and for those which are not matched it stores whether they are ``safe'' (meaning that the endpoint in $\forgottenv_v$ will never form a blocking pair with that edge), or  ``unsafe''  (meaning that the endpoint in $\forgottenv_v$ could later form a blocking pair with that edge, depending on the preferences and matching of the endpoint outside of $\forgottenv_v$).

We define $\Rec(v)$ to be a mapping from the set of all signatures at $v$ to $\Nat\cup \{-\infty\}$, where (1) if there is no partial solution corresponding to a signature $\tau$, then $\Rec(v)(\tau)\mapsto -\infty$, and otherwise (2) $\Rec(v)$ maps $\tau$ to the size of the largest partial solution in $\forgottenH_v$ whose signature is $\tau$. To avoid any confusion, we remark that when applying addition to the images of $\Rec(v)$, we let $-\infty + x = -\infty$ for each $x\in \Nat \cup \{-\infty\}$.

If we can compute $\Rec(r)$ for the root $r$ of a spanning tree $T$ witnessing that $\widthshort(G,T)\leq k$, then by definition each partial solution is also a stable matching in the instance. Hence, it suffices to check whether $\Rec(r)(\emptyset)\geq \pi$; if this is the case then we output ``Yes'', and otherwise we can safely output ``No''. At this point, it suffices to compute $\Rec(v)$ for each $v\in V(G)$ in leaf-to-root fashion along $T$. 

\smallskip
	\textbf{If $v$ is a leaf}, we first add the mapping $(E(\partial(v))\mapsto \uns)\mapsto 0$ to $\Rec(v)$, which corresponds to the empty partial solution. Then, for each $vw\in E(\partial(v))$ we construct a signature $\tau_w$ which assigns $vw$ to $\mat$, and for each neighbor $u$ of $v$ other than $w$ either assigns $vu$ to $\saf$ (if $v$ weakly prefers $w$ to $u$) or assigns $vu$ to $\uns$ (if $v$ strongly prefers $u$ to $w$). For each $\tau_w$ constructed in this way, we set $\Rec(v)(\tau_w)=1$.
	
	\smallskip
	\textbf{If $v$ is an internal node}, we begin by branching over all edges incident to $v$, and for each such edge $vw$ we proceed by restricting our attention to all partial solutions which contain $vw$. We also have a separate branch to deal with all partial solutions where $v$ remains unmatched; we will begin by dealing with this (slightly simpler) case. 

	\smallskip	
	\textbf{Subcase: $v$ remains unmatched}. For each child $u$ of $v$ such that $E(\partial(u))=\{(u,v)\}$, we observe that only partial solutions at $u$ with the signature $uv\mapsto \saf$ can be extended to a partial solution at $v$; indeed, $uv\mapsto \mat$ would violate our assumption that $v$ remains unmatched, while $uv\mapsto \uns$ would, by definition, lead to a blocking pair. For brevity, let us set $\texttt{simple-size}$ to be the sum of all $\Rec(u)(uv\mapsto \uns)$ over all vertices $u$ with a single-edge boundary.
	
	As in the previous algorithms, we observe that at this point only at most $2k$ children of $v$ remain to be processed, say $x_0,\dots,x_\ell$. We proceed by simultaneously branching over all of the at most $3^k$ signatures for each of these children, resulting in a total branching factor of $3^{k^2}$; each branch can be represented as a tuple $(\sig_{x_0},\dots,\sig_{x_\ell})$. We now discard all tuples that are not well-formed, where a tuple is well-formed if the following conditions hold: 
	\begin{itemize}[topsep=0pt]
\item it contains no signature that maps an edge incident to $v$ to either $\uns$ or $\mat$ (as before, these edges may only be mapped to $\saf$);
\item for each edge $ab$ such that $a\in \forgottenv_{x_i}$ and $b\in \forgottenv_{x_j}$, $i,j\in [\ell]$, the signatures of $x_i$ and $x_j$ must either (a) both map that edge to $\mat$, or (b) both map that edge to $\saf$, or (c) map that edge to $\saf$ once and $\uns$ once (signatures must be consistent).
\end{itemize}

For all remaining tuples, we set \texttt{branching-size} to $\sum_{i\in [\ell]}\Rec(x_i)(\sig_{x_i})$. We also identify a unique signature $\sig^*$ corresponding to the current branch as follows: each edge in $\partial(v)$ incident to $v$ is mapped to $\uns$, and each edge $e$ in $\partial(v)$ not incident to $v$ must have an endpoint in $\forgottenv_{x_i}$ for some $x_i$ and is mapped to $\sig_{x_i}(e)$. At this point, we update $\Rec(v)(\sig^*)$ as follows: if the value of $\Rec(v)(\sig^*)$ computed so far is greater than $\texttt{simple-size}+\texttt{branching-size}$ then we do nothing, and otherwise we set that value to $\texttt{simple-size}+\texttt{branching-size}$. We now proceed to the next branch, i.e., choice of neighbor of $v$.

	\smallskip
	\textbf{Subcase: $v$ is matched to $w$}. We will in principle follow the same steps as in the previous subcase, but with a few extra complications. Let us begin by distinguishing whether (1) $w$ itself is a child of $v$ such that $E(\partial(u))=\{(u,v)\}$, (2) $w$ is in $\forgottenv_{x_i}$ for some child $x_i$ of $v$ not satisfying this property (including the case where $w=x_i$), or (3) $w\not \in \forgottenv_v$. In the first case, we set the child $w$ aside and initiate $\texttt{simple-size}=\Rec(w)(wv\mapsto \mat)$. In the second case, we will later (in the appropriate branching step) discard all signatures of $x_i$ which do not map $wv$ to $\mat$. In the third case, we will take this into account when constructing $\sig^*$.
	
	Next, for each child $u$ of $v$ such that $E(\partial(u))=\{(u,v)\}$ (other than $w$, in case (1)), we distinguish whether $v$ weakly prefers $w$ to $u$, or not. For each $u$ where this holds, we observe that any partial solution at $u$ that does not use $v$ can be safely extended to a partial solution at $v$---hence, we increase $\texttt{simple-size}$ by $\max(\Rec(u)(vu\mapsto \uns), \Rec(u)(vu\mapsto \saf))$. On the other hand, for each $u$ where $v$ strongly prefers $u$ to $w$ we observe that a partial solution at $u$ can only be extended to one at $v$ if it matches $u$ in a way which prevents the creation of a blocking pair with $v$. Hence, in this case, we increase  $\texttt{simple-size}$ by $\Rec(u)(vu\mapsto \saf)$.
	
	In the second step, we once again proceed by simultaneously branching over all of the at most $3^k$ signatures for the remaining children $x_0,\dots,x_\ell$ of $v$. As before, this results in a total branching factor of $3^{k^2}$, and each branch can be represented as a tuple $(\sig_{x_0},\dots,\sig_{x_\ell})$. We now discard all tuples that aren't well-formed, where a tuple is well-formed if the following conditions hold:
	\begin{itemize}[topsep=0pt]
\item for each edge $ab$ such that $a\in \forgottenv_{x_i}$ and $b\in \forgottenv_{x_j}$, $i,j\in [\ell]$, the signatures of $x_i$ and $x_j$ must either (a) both map that edge to $\mat$, or (b) both map that edge to $\saf$, or (c) map that edge to $\saf$ once and $\uns$ once (signatures must be consistent);
\item in case (2), the edge $vw$ is mapped to $\mat$ in the appropriate signature;
\item the tuple contains no signature that maps any edge incident to $v$ (other than $vw$) to $\mat$;
\item for no edge $vz$ where $z\in \forgottenv_{x_i}$ for some $i\in [\ell]$ such that $v$ strongly prefers $z$ to $w$, the signature of $x_i$ maps $vz$ to $\uns$ (as this would create a blocking pair).
\end{itemize}

For all remaining tuples, we set \texttt{branching-size} to $\sum_{i\in [\ell]}\Rec(x_i)(\sig_{x_i})$ in cases (1) and (2); in case (3), we set it to $\sum_{i\in [\ell]}\Rec(x_i)(\sig_{x_i})+1$. We also identify a unique signature $\sig^*$ corresponding to the current branch as follows: each edge $vc\in E(\partial(v))$ is mapped to $\uns$ if $v$ strongly prefers $c$ to $w$, and $\saf$ otherwise (with the exception of $c=w$ in case (3), where $vw$ must be mapped to $\mat$). Furthermore, each edge $e$ in $\partial(v)$ not incident to $v$ must have an endpoint in $\forgottenv_{x_i}$ for some $x_i$ and is mapped to $\sig_{x_i}(e)$. At this point, we update $\Rec(v)(\sig^*)$ as follows: if the value of $\Rec(v)(\sig^*)$ computed so far is greater than $\texttt{simple-size}+\texttt{branching-size}$ then we do nothing, and otherwise we set that value to $\texttt{simple-size}+\texttt{branching-size}$. We then proceed to the next branch, i.e., choice of neighbor of $v$.

The correctness of the algorithm can be shown by induction; it is not difficult to verify that the computation of the records is correct at the leaves, and for non-leaves one uses the assumption that the records of the children are correct. The crucial point is that every partial solution at a child that corresponds to a certain signature can be extended to a partial solution at the parent if the verified conditions hold, which justifies the correctness of adding up the appropriate values for the children. The running time is upper-bounded by $3^{k^2}\cdot n^2$.
		\qed
	\end{proof}

\subsection{Minimum Changeover Cost Arborescence}
	The final problem we consider can be found in~\cite{GozupekOPSS17}. An \emph{arborescence} is a directed tree with root $r$, which contains a directed path from each vertex to $r$.
	
	Given an arborescence $T$ with root $r$ and an edge $e \in E(T)$ we denote with $\suc(e)$ the edge incident to $e$ on the path from $v$ to the root $r$.
	For an edge $e$ incident to the root we define $\suc(e) = e$. 
	
	A function $\cost: X^2 \rightarrow \Nat$ is called a \emph{changeover cost function} if it satisfies the following:
	\begin{enumerate}
		\item $\cost(x_1,x_2) = \cost(x_2,x_1)$ for each $x_1,x_2 \in X$, and
		\item $\cost(x,x)=0$ for each $x \in X$. 
	\end{enumerate}
	
	The \emph{total changeover costs} of an arborescence $T$ are now defined as
	\begin{align*}
		\sum_{e \in E(T)} \cost(e,\suc(e)).
	\end{align*}
	
	\pbDef{\textsc{Minimum Changeover Cost Arborescence (MinCCA)}}
	{A directed graph $G=(V,E)$, a root $r\in V(G)$, an edge coloring $\col: E(G) \rightarrow X$, and a changeover cost function $\cost: X^2 \rightarrow \Nat$.}
	{What is an arborescence of $G$ minimizing the total changeover costs?}
	
	The \width\ of a directed graph $G$ is the \width\ of $G$ where we omit the arc directions. 
	
	\begin{theorem}
		\textsc{MinCCA} is fixed-parameter tractable when parameterized by the \width\ of the input graph.
	\end{theorem}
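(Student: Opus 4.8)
The plan is to reuse the dynamic-programming framework developed for the previous problems almost verbatim, the only genuinely new ingredient being the bookkeeping for the changeover costs. First I would apply Theorem~\ref{thm:computeecw} to the underlying undirected graph to obtain a spanning tree $T$ with $\widthshort(G,T)=\widthshort(G)=k$, and root $T$ at the prescribed arborescence-root $r$; this guarantees that $r\notin\forgottenv_v$ for every $v\neq r$, so that every vertex of $\forgottenv_v$ must reach $r$ through one of the at most $k$ boundary arcs in $E(\partial(v))$. The observation that makes the objective tractable is that it decomposes per vertex: if we charge the term $\cost(\col(e),\col(\suc(e)))$ of each arc $e$ of an arborescence to the \emph{head} of $e$, then the total cost incurred at a vertex $p$ equals $\sum_{e:\,\mathrm{head}(e)=p}\cost(\col(e),\col(\text{out-arc of }p))$, which depends only on the colors of the arcs incident to $p$. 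Hence this quantity is fully determined as soon as the incident arcs of $p$ are fixed, i.e.\ at the tree node $p$ itself; the only cost that must be deferred to an ancestor is the one charged to the outside endpoint of a boundary arc, and to recover it later it suffices to know \emph{that} the boundary edge is used, since its color is fixed by $G$.

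For the records, I would work with the graph $\forgottenH_v$, the non-disjoint union of $\forgottenG_v$ and $\partial(v)$, and call a subset of its arcs a \emph{partial solution} at $v$ if it can be extended to a global arborescence. Restricted to $\forgottenv_v$ such a solution is an in-forest in which every component has a unique sink, and since $r$ lies outside $\forgottenv_v$ that sink must be the tail of a boundary arc leaving $\forgottenv_v$ --- an \emph{exit}; a boundary arc whose head lies inside is an \emph{entry}, and whether a boundary edge is an exit or an entry is fixed by its orientation in $G$. A \emph{signature} at $v$ then records the set $U\subseteq E(\partial(v))$ of used boundary edges together with a partition of $U$ into blocks, each containing exactly one exit (two boundary edges lie in the same block precisely when their inside endpoints are joined into one tree inside $\forgottenv_v$). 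I would let $\Rec(v)$ map each signature to the minimum total changeover cost charged to vertices of $\forgottenv_v$ over all partial solutions realizing it, and to $+\infty$ if none exists; since $|E(\partial(v))|\le k$, the number of signatures is at most $2^{\bigoh(k\log k)}$.

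The records are computed leaf-to-root. At a leaf I would enumerate the constantly many ways in which the boundary edges at $v$ can be used consistently with $v$ being a single vertex, recording the induced signature and the (trivial) cost. At an internal node $v$, exactly as in the other proofs, a child $u$ attached only by the tree edge $uv$ and carrying no feedback arc across (so $E(\partial(u))=\{uv\}$) is resolved directly: its whole subtree must leave through that single arc --- failing which the instance admits no arborescence at all --- so $\Rec(u)$ of the unique exit signature together with the changeover charged to $v$ is folded into a local value, leaving at most $2(k-1)$ relevant children $u_1,\dots,u_\ell$. I would then branch over which arcs incident to $v$ lie in the arborescence (forcing $v$ to have exactly one out-arc unless $v=r$) and, simultaneously, over one signature per relevant child; for each branch I merge the children's connectivity together with $v$'s chosen arcs by a union--find on the involved boundary vertices, discarding any choice that closes a cycle or that produces a component other than the one reaching $r$ with no exit. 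I add the stored child costs to the changeover charged to $v$, project onto $E(\partial(v))$ to obtain $v$'s signature, and keep the cheapest value per signature. At the root, $\partial(r)=\emptyset$ and $r$ needs no exit, so the sought optimum is simply the value stored for the empty signature, i.e.\ the cheapest spanning in-tree with sink $r$.

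The hard part, and the only place where MinCCA differs substantively from the earlier algorithms, will be the cost accounting: I must ensure that each changeover term is counted exactly once and that root-paths which leave $\forgottenv_v$ and later re-enter it are handled correctly. Re-entry is accommodated automatically, because from the subtree's viewpoint such a path merely contributes one further exit/entry block inside $\forgottenv_v$, while charging every term to the head of its arc guarantees it is paid exactly at the tree node of that head and never twice. Correctness then follows by induction along $T$ in the same manner as before, using that any partial solution at the children conforming to the chosen signatures extends to a partial solution at $v$ precisely when the union--find check succeeds. The running time is dominated at internal nodes by the simultaneous branching over $\ell\le 2(k-1)$ child signatures, which yields $\left(2^{\bigoh(k\log k)}\right)^{\bigoh(k)}=2^{\bigoh(k^2\log k)}$ combinations per node, and hence an overall bound of $2^{\bigoh(k^2\log k)}\cdot n$.
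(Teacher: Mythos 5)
Your algorithm is correct and is, at its core, the paper's algorithm: the same spanning-tree DP with boundary-based records, the same preprocessing of children attached through a single boundary edge, and the same simultaneous branching over signatures of the at most $2(k-1)$ remaining children, giving $2^{\bigoh(k^2\log k)}\cdot n$ overall. The one place where you genuinely depart from the paper is the MinCCA-specific bookkeeping. The paper's records are pairs $(\outgoing,\donate)$, where $\outgoing$ maps each inside boundary vertex to the boundary edge through which its path exits, and $\donate$ stores, for each path passing through the subtree, its entry vertex, its exit edge, \emph{and the color of the first arc inside the subtree}; that color is needed because the paper settles the changeover incurred at an entry vertex only at the parent, as part of a ``connection cost''. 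You instead charge every term $\cost(\col(e),\col(\suc(e)))$ to the head of $e$ and settle it at that vertex's own DP node. Since every cost term that crosses a subtree boundary then involves only colors of boundary arcs---which are fixed by $G$ and visible on both sides of the cut---your records can dispense with colors entirely and keep just the set of used boundary edges together with their connectivity partition (one exit per block). This accounting is sound: costs charged at vertices of $\mathcal{Y}_v$ are determined by the child's partial solution and signature, costs charged at $v$ itself become determined once $v$'s out-arc and used in-arcs are fixed in the branching, and each term is paid exactly once, so your variant is a legitimate (and arguably cleaner) implementation of the same proof. Two details to make explicit in a full write-up: the changeover of a suppressed child's exit arc $(u,v)$ depends on $v$'s out-arc, so it must be added per branch rather than folded into a branch-independent constant; and the number of leaf records is $2^{\bigoh(k)}$ (choice of out-arc plus used in-arcs), not constant. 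Neither affects correctness or the stated running time.
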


	\begin{proof}
		We start by defining the syntax of the records we will use in our dynamic program. For $v\in V(G)$, let a record for a vertex $v$ be a tuple of the form $(\outgoing, \donate)$, where: 
		\begin{itemize}[topsep=0pt]
%			\item $n \in \Nat$ saves the currently accumulated cost,
			\item $\outgoing = \{(v_0, e_0), \ldots, (v_i, e_i)\}$ where for each $j \in [i]$, $v_j \in V(\partial(v))\cap \forgottenv_v$, $e_j \in E(\partial(v))$, and
			
			\item $\donate = \{(v_0,c_0,e_0), \ldots, (v_i,c_i,e_i)\}$ where for each $j \in [i]$, $v_j \in V(\partial(v))\cap \forgottenv_v$, $c_j \in X$, and $e_j \in E(\partial(v))$.
		\end{itemize}
		Moreover, let $f:\Recs(v) \rightarrow \Nat$ be a function. 
		
		Let $\Recs(v)$ be the set of records for $v$.  		
		
%		To define a partial solution we need the following graph $H_v$: \begin{itemize}[topsep=0pt]
%			\item First, we add $\forgottenG_v \cup \partial(v)$ to $H_v$, where $\forgottenG_v\cup \partial(v)$ is the (non-disjoint) union of these two graphs.
%			\item Then, we add edges to $E(H_v)$ s.t.\ $V_{\partial}\setminus\forgottenv_v$ is a clique.
%		\end{itemize}
%		Let a partial solution at $v$ be a spanning tree of $H_v$ ...
%		Obviously, since at the root $r$ we have that $\partial(r)$ is empty, $H_r = G$.
%		Notice that a partial solution at the root is a solution.
		
		To introduce the semantics of the records, we need the following notion: A \emph{partial solution} at $v$ is a forest of $\forgottenG_v \cup \partial(v)$, where for each vertex $u \in \forgottenv_v$ there is a directed path from $u$ to exactly one vertex in $(V(\partial(v)) \setminus \forgottenv_v) \cup \{r\}$. 
		Consider the set $\mathcal{W}$ containing all partial solutions at $v$. The $v$-\emph{projection} of a partial solution $S \in \mathcal{W}$ is a tuple $(\outgoing_S,\donate_S)$ where:
		\begin{itemize}[topsep=0pt]
			%\item $\cost$ is fixed, and can be ignored as we always take cheapest.			\todo{Ch: formulation?}			
			\item $(u,e) \in \outgoing_S$ if and only if there is a $u$-$u'$ path in $S$ with $u' \in V(\partial(v))\setminus \forgottenv_v$ and
			$e$ is the first edge on this path which is contained in $E(\partial(v))$, and
			
			\item $(u_1,c,e) \in \donate_S$ if and only if there exists a path $u_0,u_1,u_2,\ldots, u_{i-1}, u_i$ in~$S$ with $u_0, u_i \in V(\partial(v))\setminus\forgottenv_v$ and $e=(u_{i-1}, u_i)$ and $c = \col(u_1,u_2)$.
		\end{itemize}		
		For a record $R \in \Recs(v)$ the value $f(R)$ denotes the minimum cost of this record, i.e.,
		\begin{align*}
			f(R) = \min_{\substack{S \in \mathcal{W},\\ R= (\outgoing_S,\donate_S)}} \sum_{e \in E(S)} \cost(e,\suc(e)).
		\end{align*}
		
		We say that $\Recs(v)$ is \emph{valid} if it contains all $v$-projections of solutions in $\mathcal{W}$, and in addition, for every record in $\Recs(v)$, there is a partial solution such that its $v$-projection yields this record. 		
		Observe that if $\Recs(r)=\emptyset$, then $(G,r,\col,\cost)$ is a NO-instance, while if $\Recs(r)=\{(\emptyset,\emptyset,\emptyset)\}$, then $R(r)$ is a YES-instance. 
		
		From the syntax and semantics, it follows that $|\Recs(v)|\leq 2^{\bigoh(k \log k)}$ for each $v\in V(G)$.
		
		To complete the proof, it now suffices to dynamically compute a set of valid records in a leaf-to-root fashion along $T$. 
		
		\textbf{If $v$ is a leaf}, we create the following two records for each edge $e\in E(\partial(v))$ outgoing from $v$:
		\begin{itemize}[topsep=0pt]
			\item $\{\{(v,e)\}, \emptyset\}$,
			\item $\{\{(v,e)\}, \{(v,e,\col(e))\}\}$.
		\end{itemize}
		It follows that $f(R)=0$ for each $R \in \Recs(v)$. 
		
		\newcommand{\del}{\textit{del}}
		\textbf{If $v$ is an internal node}, we start with bounding the number of children of $v$ in order to bound the number of records, which need to be computed.
		Let $V_{\del}$ denote the set of children of $v$ which do not increase the \width\ of $v$, i.e., for each $u \in V_{\del}$ it holds $E(\partial(v))=\{(u,v)\}$.
		We define the minimum changeover cost of $V_{\del}$ as $\emph{cost}_{\del} = \sum_{u \in V_{\del}}\min_{R \in \Recs(v)} f(R)$.
		Then, we can delete $T_u$ for each $u \in V_\del$. 
		
		%Let $u$ be a child of $v$ which does not increase the \width\ of $v$, i.e., $E(\partial(v))=\{(u,v)\}$.
		%Note, that the subtrees rooted at a child of $v$ which do not increase the \width\ of $v$ only have a singular outgoing edge. 
		%We define the changeover cost of $\forgottenG_u$ as $\emph{cost}_u = \min_{R \in \Recs(v)} f(R)$, save the edge $(u,v)$ and delete $T_u$.
		
		After this step there are at most $2(k-1)$ children left.		
		Let $u_1, \ldots, u_\ell$ with $\ell \leq 2(k-1)$ denote the remaining children of $v$.
		First, we compute a local set $\overline{\Recs(v)}$, in the same way we would compute the set of records in the leaf case. Note that the number of edges incident to $v$ is bounded by $2k-1$. Hence, $|\overline{\Recs(v)}|\leq 4k-2$.
		Our goal is to compute $\Recs(v)$ using the local set $\overline{\Recs(v)}$ and the partial results $\Recs(u_1), \ldots, \Recs(u_\ell)$. 
		
		In the following we take one record each out of $\overline{\Recs(v)}, \Recs(u_1), \ldots, \Recs(u_\ell)$ and repeat the following process for each combination of records.
		We proceed similarly as in the proof of \EDP\ (Theorem~\ref{thm:EDP}). 
		First, we combine the donated paths by computing a set $\donate'$, which contains the longest paths which can be donated by $T_v$. 
		For this we look at the $\donate$-sets in our records from $\overline{\Recs(v)}, \Recs(u_1), \ldots, \Recs(u_\ell)$. We trace out the longest paths along edges for each vertex $u$ in a tuple $(u,c,e)$ in the $\donate$-sets of these records, which can be done in time $k^{\bigoh(1)}$.
		
		Next, we consider each pair $(u,e) \in \outgoing$ for any of the currently considered records.
		If $e \in E(\partial(v))$, then add $(u,e)$ to $\outgoing'$. In case $e \notin E(\partial(v))$, we use the donated paths in $\donate'$ to connect $e$ to $e' \in E(\partial(v))$, where the sink of $e'$ is in $V(\partial(v))\setminus\forgottenv_v$, and add $(u,e')$ to $\outgoing'$. 
		
		Afterwards, we need to delete all pairs in $\donate'$ with $u \notin V(\partial(v)\cap \forgottenv_v)$ or $e \notin E(\partial(v))$.
		Finally, the tuple $(\outgoing',\donate')$ is inserted as a record in~$\Recs(v)$.		
		Note that all steps are deterministic, as for each vertex there is exactly one outgoing edge. 		
		
		Let $R_1, \ldots, R_\ell$ be the records used to compute $R \in \Recs(v)$. The integer $\emph{cost}_{\emph{conn}}$ denotes the sum of the changeover cost for connecting an outgoing tuple with a longest donate path.
		Now, we can determine the minimum cost of $R \in \Recs(v)$ by computing $f(R)= \min\{f(R),\emph{cost}_{\emph{conn}} + \emph{cost}_{\del} + \sum_{j=1}^{\ell} f(R_j)\}$, where we initiate $f(R)=\infty$.
		
		Since the number of records and the size of each record is bounded by $k$, the running time of this algorithm is $2^{\bigoh(k^2 \log k)}\cdot n$.
		\qed
	\end{proof}
\fi

\section{Conclusion}
The parameter developed in this paper, \width, is aimed at mitigating the algorithmic shortcomings of tree-cut width and filling the role of an ``easy-to-use'' edge-based alternative to treewidth. We show that \width\ essentially has all the desired properties one would wish for as far as algorithmic applications are concerned: it is easy to compute, uses a natural structure as its decomposition, and yields fixed-parameter tractability for all problems that one would hope an edge-based alternative to treewidth could solve.

Last but not least, we note that a preprint exploring a different parameter that is aimed at providing an edge-based alternative to treewidth appeared shortly after the results presented in our paper were obtained~\cite{edgetreewidth}. While it is already clear that the two parameters are not equivalent, it would be interesting to explore the relationship between them in future work.

\bibliographystyle{splncs04}
\bibliography{literature.bib}
\end{document}

\section{Old}

Nevertheless, we believe it will be interesting to have a closer look at the relationship between these two parameters.
%in greater detail.

Let $\widthshort(G)=k$ and $T^*$ be a tree on $V$ witnessing this, i.e., $\widthshort(G, T^*)=k$. Let $(T,\mathcal{X})$ be a nice tree-cut decomposition of $G$ of width at most $k$, and assume w.l.o.g.\ that for each node $t\in T$ it holds that $G_t$ is connected (if this does not hold, one can split $t$ into . Then for any node $t$ of $T$, each thin child $b$ of $t$ will belong to one of the following four types (see also Figure~\ref{fig: tcw_cases}):
%, we will assume that $G[X_b]$ is connected via paths in $Y_b$\todo{R: Is this a valid assumption?}. There are 4 possible cases (see Figure \ref{fig: tcw_cases} for illustration):
\begin{enumerate}
\item $N(Y_b)=\{x\}$ for some $x \in X_t$, $x$ is connected to a unique $x_b$ from $X_b$;
\item $N(Y_b)=\{x\}$ for some $x \in X_t$, $x$ is connected to distinct $x_b^1$ and $x_b^2$ from $X_b$;
\item $N(Y_b)=\{x_1,x_2\}$ for $x_1\ne x_2$, $x_1$ and $x_2$ are connected to the same $x_b\in X_b$;
\item $N(Y_b)=\{x_1,x_2\}$ for $x_1\ne x_2$, $x_1$ and $x_2$ are connected to distinct $x_b^1$ and $x_b^2$ from $X_b$ correspondingly;
\end{enumerate}
\begin{figure}[htb]
\begin{center}
\includegraphics[width=\textwidth]{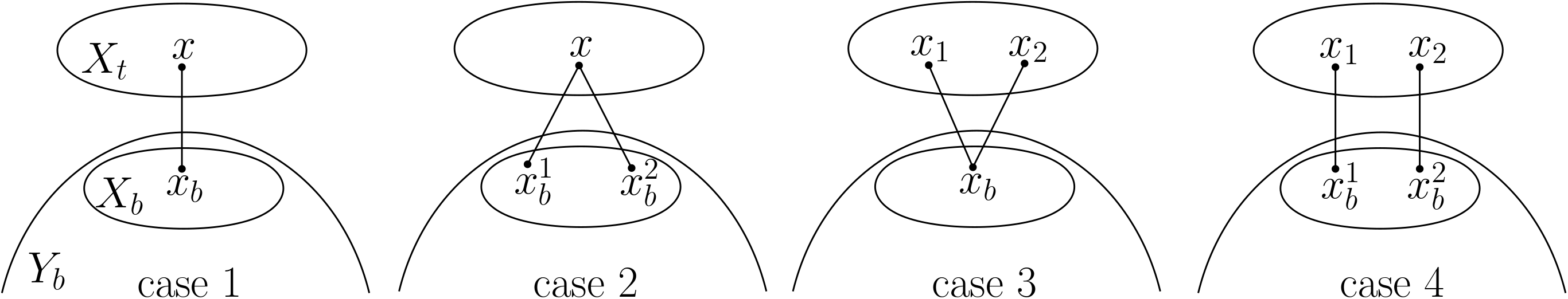}\vspace{-0.6cm}
\end{center}
\caption{Possible configurations of edges between thin child $b$ and its parent $t$}
\label{fig: tcw_cases}
\end{figure}

Interestingly, the number of thin children of every type except the first one is upper-bounded by a quadratic function of $k$. Let us start from the second type. If $x_b^ix$ doesn't belong to $T^*$ for $i=1$ or $i=2$, then $x_b^ix \in E^{T^*,G\cup T^*}_{loc}(x)$. Otherwise, $x_b^1x$ and $x_b^2x$ are connected wia $x$ in $T^*$. Then $T^*[Y_b]$ has precisely 2 connected components. Due to connectivity assumption on $G[Y_b]$, there exists a path $p$ between $x_b^1$ and $x_b^2$ in $Y_b$ containing precisely one edge outside of $T^*$. The edge contributes  to $E^{T^*,G\cup T^*}_{loc}(x)$. So the number of thin nodes of the second type is at most $\sum_{x\in X_t}|E^{T^*,G\cup T^*}_{loc}(x)|\le k|X_t|\le k(k+1)$. 

As $T^*$ is a tree, there can be at most $|X_t|-1\le k-1$ thin children $b$ of the third type such that $x_b$ is connected to two elements of $X_t$ in $T^*$. For the rest of $b$ of the third type, there exists $x\in N(Y_b)$ such that $xx_b$ doesn't belong to $T^*$ and therefore contributes to $E^{T^*,G\cup T^*}_{loc}(x)$. So there can be at most $(k-1)+k(k+1)=k^2+2k-1$ thin children of the third type.

Let $b$ be a thin node of the forth type. If $x^b_1$ and $x^b_2$ are connected wia path in $T^*[Y_b]$, we can apply arguments same as for the third type. Otherwise, the path between $x_1$ and $x_2$ in $T^*$ doesn't contain any vertices from $Y_b$. Then, analogously to the second type, there exists an edge in $G[Y_b]\cup\{x_1 x^b_1, x_2 x^b_2\}$ that belongs to $E^{T^*,G\cup T^*}_{loc}(x_1)$. In total, there can be at most $(k-1)+k(k+1)=k^2+2k-1$ nodes of the forth type.   

To conclude, the number of thin nodes of every type except the first one is bounded by $\bigoh(k^2)$. This motivates us to consider a restricted version of the tree-cut width parameter, where one is allowed to suppress only the vertices of degree at most one in the torso (in contrast to degree at most two in the standard definition). Formally, let $(T,\mathcal{X})$ be some tree-cut decomposition of $G$. Given a connected graph $Q$ and  $X\subseteq V(Q)$, let the {\em 2-center} of $(Q,X)$ be the unique graph obtained from $Q$ by exhaustively deleting vertices in $V(Q) \setminus X$ of degree at most one. For a node $t$ of $T$, we denote by $\bar{H}_t^2$ the 2-center of $(H_t,X_t)$, where $H_t$ is the torso of $(T,\mathcal{X})$ at $t$. Let us denote $|\bar{H}_t^2|$ by $\tor_2(t)$.

\begin{definition}
	The 2-width of a tree-cut decomposition $(T,\mathcal{X})$ of $G$ is $\max_{t\in V(T)}\{ \adh(t),$ $\tor_2(t) \}$. The tree-cut 2-width of $G$, or $\tcw_2(G)$ in short, is the minimum 2-width of $(T,\mathcal{X})$ over all tree-cut decompositions $(T,\mathcal{X})$ of $G$.
\end{definition}

\newpage
***********

Our first and primary characterization of edge-cut width relies on the notion of feedback edge sets. Intuitively, consider a maximal spanning forest $T$ of a graph $G$, and recall that the \emph{feedback edge number}~\cite{BentertHHKN20,GanianO21,GolovachKKL22} is the size of the feedback edge set, i.e. $|E(G)-E(T)|$. We will define our parameter as the maximum number of edges from the feedback edge set that form cycles containing some particular vertex $v\in V(G)$.

Formally, for a graph $G$ and a maximal spanning forest $T$ of $G$, let the \emph{local feedback edge set} at $v\in V$ be 

%\ifshort

$E_{\loc}^{G,T}(v)=\{uw\in E(G)\setminus E(T)~|~\text{ the unique path between }u\text{ and }w\text{ in }T\text{ contains }v\}.$
\todo{R: discuss: $E$ or $E(G)$?}
The \emph{\width} of $(G,T)$ (denoted $\widthshort(G,T)$) is then equal to $\max_{v\in V} |E_{\loc}^{G,T}(v)|$, and the \emph{\width} of $G$ is simply the smallest \width\, among all possible maximal spanning forests of $G$, i.e., $\widthshort(G)=\min_{T\text{ is a spanning tree of }G} \widthshort(G,T)$.

This provides a straightforward and algorithmically useful definition of \width\ where the decomposition used is simply a spanning forest (and, in most cases of interest, a spanning tree). 

In their introductory work on local feedback edge numbers, Ganian and Korchemna already showed \todo{R: We either add the full proof here, or we create an arxiv version---maybe that's better.} that $\tcw(G)\le \widthshort(G)$~\cite[Proposition 1]{GanianKorchemna21}. As for the converse, the problem studied in that paper---\textsc{Bayesian Network Structure Learning} was shown to be fixed-parameter tractable w.r.t.\ $\widthshort$ but $\W{1}$-hard w.r.t.\ $\tcw$, which under standard complexity-theoretic assmptions implies that $\widthshort$ can not be upper-bounded by any function of $\tcw$. 

While this suffices to obtain a comparison of tree-cut width to \width\ in terms of asymptotic equivalence, we will show that tree-cut decompositions have a fundamental connection to \width. \todo{R: discuss this...} To illustrate this, assume that $\widthshort(G)=k$ and $T^*$ is a tree on $V$ witnessing this, i.e., $\widthshort(G, T^*)=k$. Let $(T,\mathcal{X})$ be a nice tree-cut decomposition of $G$ of width at most $k$. Consider any node $t$ of $T$ and $b\in B_t$, we will assume that $G[X_b]$ is connected via paths in $Y_b$. There are 4 possible cases (see Figure \ref{fig: tcw_cases} for illustration):
\begin{enumerate}
\item $N(Y_b)=\{x\}$ for some $x \in X_t$, $x$ is connected to a unique $x_b$ from $X_b$;
\item $N(Y_b)=\{x\}$ for some $x \in X_t$, $x$ is connected to distinct $x_b^1$ and $x_b^2$ from $X_b$;
\item $N(Y_b)=\{x_1,x_2\}$ for $x_1\ne x_2$, $x_1$ and $x_2$ are connected to the same $x_b\in X_b$;
\item $N(Y_b)=\{x_1,x_2\}$ for $x_1\ne x_2$, $x_1$ and $x_2$ are connected to distinct $x_b^1$ and $x_b^2$ from $X_b$ correspondingly;
\end{enumerate}
\begin{figure}[htb]
\begin{center}
\includegraphics[width=\textwidth]{figures/tcw_cases.png}\vspace{-0.6cm}
\end{center}
\caption{Possible configurations of edges between thin child $b$ and its parent $t$}
\label{fig: tcw_cases}
\end{figure}
We will show that the number of thin children of every type except the first one is bounded by at most quadratic function of $k$. Let us start from the second type. If $x_b^ix$ doesn't belong to $T^*$ for $i=1$ or $i=2$, then $x_b^ix \in E^{G,T^*}_{loc}(x)$. Otherwise, $x_b^1x$ and $x_b^2x$ are connected wia $x$ in $T^*$. Then $T^*[Y_b]$ has precisely 2 connected components. Due to connectivity assumption on $G[Y_b]$, there exists a path $p$ between $x_b^1$ and $x_b^2$ in $Y_b$ containing precisely one edge outside of $T^*$. The edge contributes  to $E^{G,T^*}_{loc}(x)$. So the number of thin nodes of the second type is at most $\sum_{x\in X_t}|E^{G,T^*}_{loc}(x)|\le k|X_t|\le k(k+1)$. 

As $T^*$ is a tree, there can be at most $|X_t|-1\le k-1$ thin children $b$ of the third type such that $x_b$ is connected to two elements of $X_t$ in $T^*$. For the rest of $b$ of the third type, there exists $x\in N(Y_b)$ such that $xx_b$ doesn't belong to $T^*$ and therefore contributes to $E^{G,T^*}_{loc}(x)$. So there can be at most $(k-1)+k(k+1)=k^2+2k-1$ thin children of the third type.

Let $b$ be a thin node of the forth type. If $x^b_1$ and $x^b_2$ are connected wia path in $T^*[Y_b]$, we can apply arguments same as for the third type. Otherwise, the path between $x_1$ and $x_2$ in $T^*$ doesn't contain any vertices from $Y_b$. Then, analogously to the second type, there exists an edge in $G[Y_b]\cup\{x_1 x^b_1, x_2 x^b_2\}$ that belongs to $E^{G,T^*}_{loc}(x_1)$. In total, there can be at most $(k-1)+k(k+1)=k^2+2k-1$ nodes of the forth type.   

To conclude, the number of thin nodes of every type except the first one is bounded by $\bigoh(k^2)$. This motivates us to consider a restricted version of the tree-cut width parameter, where one is allowed to suppress only the vertices of degree at most one in the torso (in contrast to degree at most two in the standard definition). Formally, let $(T,\mathcal{X})$ be some tree-cut decomposition of $G$. Given a connected graph $Q$ and  $X\subseteq V(Q)$, let the {\em 2-center} of $(Q,X)$ be the unique graph obtained from $Q$ by exhaustively deleting vertices in $V(Q) \setminus X$ of degree at most one. For a node $t$ of $T$, we denote by $\bar{H}_t^2$ the 2-center of $(H_t,X_t)$, where $H_t$ is the torso of $(T,\mathcal{X})$ at $t$. Let us denote $|\bar{H}_t^2|$ by $\tor_2(t)$.

\begin{definition}
	The 2-width of a tree-cut decomposition $(T,\mathcal{X})$ of $G$ is $\max_{t\in V(T)}\{ \adh(t),$ $\tor_2(t) \}$. The tree-cut 2-width of $G$, or $\tcw_2(G)$ in short, is the minimum 2-width of $(T,\mathcal{X})$ over all tree-cut decompositions $(T,\mathcal{X})$ of $G$.
\end{definition}
Our last observation implies that graphs with bounded edge-cut width have also bounded tree-cut 2-width. In fact, we have obtained even stronger result:
\begin{proposition}
\label{prop: eq_tree-cutwidth1}In any nice tree-cut decomposition $(T,\mathcal{X})$ of $G$ of width at most $k=\widthshort(G)$, every node has at most $\bigoh(k^2)$ children with adhesion greater than one. In particular, $\tcw_2(G) \le \bigoh(\widthshort(G)^2)$.
\end{proposition}
%However, in some settings it might be desirable to ``decouple'' the decomposition used from the graph, in the sense of allowing any decomposition over the vertex set (analogously to, e.g., treewidth and tree-cut width). To this end, we provide an alternative of \width\ by allowing $T$ to be any tree over $V(G)$. 
%  Indeed, if $T_0$ is the spanning tree of $G$ such that $\widthshort(G,T_0)=\widthshort(G)$, then  $\extwidthshort(G) \le \widthshort(G\cup T_0 ,T_0)=\widthshort(G,T_0)=\widthshort(G)$. \todo{R: This sentence is not obvious for disconnected graphs, or is it? Shall we add it as an observation with a small proof?}

However, the converse statement is not true. As an examlple, we provide the family of graphs of constant tree-cut 2-width but arbitrarily large edge-cut width.
\begin{proposition}
\label{prop: neq_extwidth}
 For every $m\in \Nat$, there exists a graph $G_m$ such that $tcw_2(G_m)\le 5$ and $\widthshort(G_m)\ge m$.
\end{proposition}

Observe that the difference in definitions of $\tcw(G)$ and $\tcw_2(G)$ is whether we dissolve the vertices of defreee less then $3$ or less then $2$ of the torso in each node.  To empasize the distinction, we will refer to tree-cut width as to \emph{tree-cut 3-width} ($\tcw_3$) of a graph. For completeness, it would be reasonable to ask what happens if we dissolve only the vertices of defreee less then $1$, i.e., delete isolated vertices from the torso.
% \todo{R: Vika has a look at this, probably defines only 1,2,3 tcw... it would be good to mention that 1-tcw is exactly degtw}

Naturally extending the notions of $2$- and $3$-center for a connected graph $Q$ and  $X\subseteq V(Q)$, we define the {\em 1-center} of $(Q,X)$ to be the graph obtained from $Q$ by deleting isolated vertices in $V(Q) \setminus X$. For a node $t$ of $T$, we denote by $\bar{H}_t^1$ the 1-center of $(H_t,X_t)$, where $H_t$ is the torso of $(T,\mathcal{X})$ at $t$. Let us denote $|\bar{H}_t^1|$ by $\tor_1(t)$.

\begin{definition}
The $1$-width of a tree-cut decomposition $(T,\mathcal{X})$ of $G$ is $\max_{t\in V(T)}\{ \adh(t),$ $\tor_1(t) \}$. The tree-cut $1$-width of $G$, or $\tcw_1(G)$ in short, is the minimum $1$-width of $(T,\mathcal{X})$ over all tree-cut decompositions $(T,\mathcal{X})$ of $G$.
\end{definition}
Wollan at \cite{Wollan15} has shown that bounded treewidth and maximal degree imply bounded tree-cut width of a graph:
\begin{proposition}
Let $G$ be a graph with maximal degree $d$ and
treewidth $w$. Then there exists a tree-cut decomposition of adhesion at most
$(2w+2)d$ such that every torso has at most $(d+1)(w+1)$ vertices. 
\end{proposition}
In particular, as  $\tor_1(t)\le|H_t|\le (d+1)(w+1) \le (2w+2)d$ for every node $t$ of $T$, we have $\tcw_1(G)\le (2w + 2)d$. In the following proposition we show that the converse is true as well: bounded $\tcw_1$ implies bounded treewidth and maximal degree of a graph. As a result, $\tcw_1$ is asymptotically equivalent to $\degtw$. Figure \ref{fig: hierarchy} summarizes the results of this section.

\begin{figure}[htb]
 \begin{minipage}[c]{0.4\textwidth}
\includegraphics[width=\textwidth]{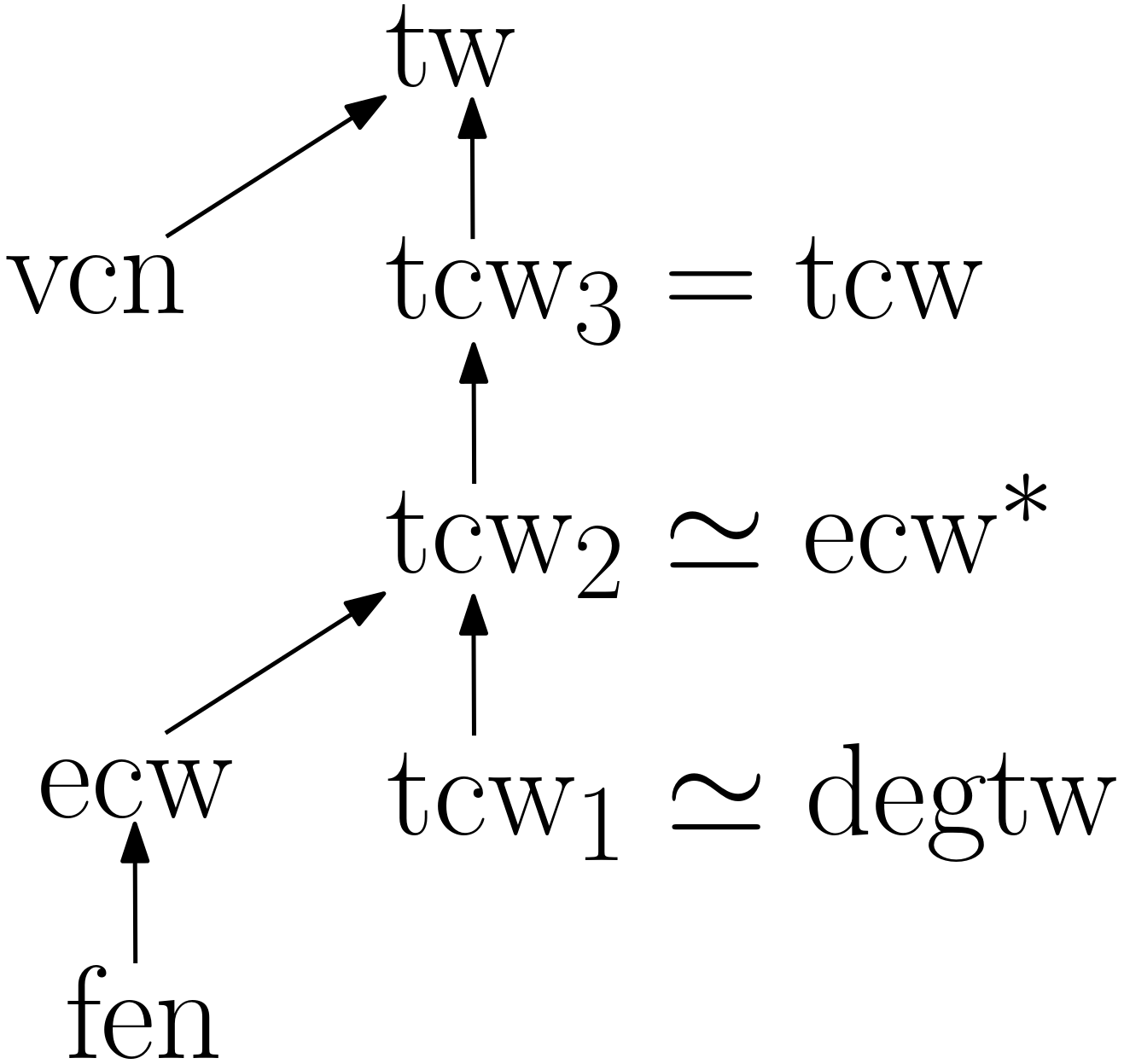}\vspace{-0.6cm}
  \end{minipage}\hfill
 \begin{minipage}[c]{0.5\textwidth}
\caption{Position of newly introduced parameters in the general hierarchy. The arrow from $p$ to $q$ represents the fact that $q$ is bounded by some function of $p$. Here $\vcn$, $\fen$ and $\tw$ denote the vertex cover number, feedback edge number and treewidth respectively.}
  \end{minipage}
\label{fig: hierarchy}
\end{figure}

\begin{proposition}
Let $G$ be a graph with $\tcw_1(G) = k$. Then every vertex of $G$ has degree of at most $k^2$ and $\tw(G)\le 2k^2-k$. 
\end{proposition}
\begin{proof}
Let $(T,\mathcal{X})$ be a tree-cut decompositions  of $G$ where both adhesion and torso size in each node is at most $k$. Fix a vertex $v\in V(G)$, let $t$ be a bag containing $v$. Pick any neighbour $u$ of $v$ in $G$ such that $u\notin \mathcal{X}_t$. If $u\in \mathcal{X}_{t'}^{\downarrow}$ for some child $t'$ of $t$, then $uv$ contributes to $\adh(t')$. Otherwise $uv$ contributes to $\adh(t)$. Therefore the number of neighbours of $v$ outside of $\mathcal{X}(t)$ is upper-bounded by $\adh(t)+\sum_{t':t' \text{ is a child of } t} \adh(t')\le k+(\tor(t)-2)k=(k-1)k$. The bound also holds if $t=r$: $\adh(r)+\sum_{t':t' \text{ is a child of } r} \adh(t')\le 0+(\tor(r)-1)k=(k-1)k$. Together with at most $k$ neighbours inside of $\mathcal{X}(t)$, it results in $\deg_G(v)\le k^2$. 
\\\\
To see that treewidth is bounded as well, consider the tree decomposition $(T,\mathcal{X'})$ of $G$ where $\mathcal{X}'(t)$ is obtained from $\mathcal{X}(t)$ by adding every $x,y\in V(G)$ such that $xy\in E(G)$ and $t$ lies on path between the nodes containing $x$ and $y$ in $T$. In particular, every edge $xy\in E(G)$ appears in some bag and the bags contaning any fixed vertex form a subtree of $T$. Let us pick arbitrary node $t$ and estimate the number of edges $xy\in E(G)$ such that $t$ lies on path between the nodes containing $x$ and $y$ in $T$. Observe that every such pair contributes either to $\adh(t)$ or to $\adh(t')$ for some child $t'$ of $t$, so their number is upper-bounded by $k(k-1)$. Therefore $|\chi'(t)|\le |\chi(t)|+2k(k-1) \le 2k^2-k+1$, yielding a desired bound on $\tw(G)$.
\qed \end{proof}

Formally, we define 
%we consider, called  the \emph{\extwidth}, differs from the first one by providing more freedom for the choice of the forest $T$. Namely, it allows $T$ to be not only a spanning forest of the graph, but any tree on its vertex set:
 $\extwidthshort(G)=\min_{T\text{ is a tree on }V(G)} \widthshort(G\cup T ,T)$. \todo{R: is it inteded that the ``fake edges'' contribute to the width?}
 We can immediately observe that this parameter is upper-bounded by the \width.
 
 \begin{observation}
 \label{obs:extnoext}
 $\extwidthshort(G)\le \widthshort(G)$.
 \end{observation}
 \begin{proof}
 Consider a maximal spanning forest $T_0$ of $G$ such that $\widthshort(G)=\widthshort(G,T_0)$ where $G$ has connected components $C_1,\dots,C_i$.  Let $T$ be a spanning tree obtained from $T_0$ by adding an arbitrary edge between $C_1$ and each other component in $C_2,\dots,C_i$. Then $\extwidthshort(G,T)= \widthshort(G,T_0)=\widthshort(G)$.
 \end{proof}
 
\begin{proposition}
\label{prop: eq_tree-cutwidth2}
For every graph $G=(V,E)$, $\extwidthshort(G)\le \bigoh(\tcw_2(G)^2)$.
\end{proposition}
\begin{proof}
It is sufficient to prove the statement for connected graphs. Let $(T,\mathcal{X})$ be a nice tree-cut decomposition of $G$ of 2-width $k=\tcw_2(G)$. We consruct the tree $T^*$ on $V$ as follows. For every node $t\in T$, choose some spanning tree $T^*(t)$ on $X_t$. For every $t\in T$ other then the root, let $t'$ be the parent of $t$ in $T$. If $adh(t)=1$, we choose $e_t$ to be the unique edge between $Y_t$ and $X_{t'}$. Otherwise we define $e_t$ as arbitrary edge between $X_t$ and $X_{t'}$. $T^*$ is then defined as $\cup_{t\in V(T)}T^*(t)\bigcup \cup_{t\in V(T)\setminus r} \{e_t\}$. We will show that $\widthshort(G\cup T^*, T^*) \le \bigoh(k^2)$. 

For this, fix any node $t$ of $T$ and $x\in X_t$ and denote $E_{loc}(x)=E_{loc}^{T^*, G\cup T^*}(x)$. By construction, $T^*$ contains only one edge between $Y_t$ and rest of  $T^*$, so every edge of $ E_{loc}(x)$ has at least one endpoint in $Y_t$. Number of edges in $E_{loc}(x)$ with both endpoints in $X_t$ is at most $|X_t|^2\le(k+1)^2$. Every edge with one endpoint in $X_t$ and another outside of $Y_t$ contributes to $\adh(t)$, so their number is bounded by $k$.

Finally, if $e=yz \in E_{loc}(x)$ contains an endpoint $y$ in $Y_t \setminus X_t$, then $y\in Y_t'$ for some child $t'$ of $t$. Due to connectivity of $T^*[Y_{t'}]$, $z$ doesn't belong to $Y_{t'}$. We will show that $\adh(t')\ge 2$. Assume, to the contrary, that $\adh(t')= 1$, then $zy$ is the unique edge of $G$ with precisely one endpoint in $Y_{t'}$ and therefore it is an edge of $T^*$, which is impossible since $yz \in E_{loc}(x)$. Hence, $\adh(t')\ge 2$.
Recall that $t$ has at most $k$ children $t'$ with $\adh(t')\ge 2$. For every fixed $t'$, $E_{loc}(x)$ contains at most $\adh(t')\le k$ many edges with precisely one endpoint in $Y_{t'}$. Totally, at most $k^2$ edges in $E_{loc}(x)$ have an endpoint in $Y_{t}\setminus X_t$, so $|E_{loc}(x)|\le (k+1)^2+k+k^2 =\bigoh(k^2).$
\qed \end{proof}

Propositions~\ref{prop: eq_extcutwidth},~\ref{prop: eq_tree-cutwidth1} and~\ref{prop: eq_tree-cutwidth2} together show that all three parameters---$\extwidthshort$, $\widthshort$, $\tcw_2$---are asymptotically equivalent and hence can be viewed as alternative characterizations of edge-cut width.

%Since we will not need the notion of \emph{treewidth}~\cite{RobertsonS03b} for any other result presented in the paper, we refer to the standard textbooks~\cite{DowneyFellows13,CyganFKLMPPS15} for its definition. 

%\section{Properties of Edge-Cut Width}

%\todo[inline]{R: How about adding a lemma showing that the param is not closed under minors, but is closed under immersions?}

%\subsection{Related Parameters}
%\subsection{A Characterization using Tree-cut Decompositions}